\title{A Domain-Specific Language for Incremental and Modular 
       Design of Large-Scale Verifiably-Safe Flow Networks (Preliminary Report)
       }
\author{Azer Bestavros
\institute{Boston University}
\email{best@bu.edu}
\and
Assaf Kfoury
\institute{Boston University}
\email{kfoury@bu.edu}
 }
\newcommand{\Hide}[1]{}
\newtheorem{theorem}{Theorem}
\newtheorem{lemma}[theorem]{Lemma}
\newtheorem{corollary}[theorem]{Corollary}
\newtheorem{proposition}[theorem]{Proposition}
\newtheorem{property}[theorem]{Property}
\newtheorem{termxxx}[theorem]{Terminology}
\newtheorem{notxxx}[theorem]{Notation}
\newtheorem{convenxxx}[theorem]{Convention}
\newtheorem{exaxxx}[theorem]{Example}
\newenvironment{example}{\begin{exaxxx}\rm}{\hfill\QED\end{exaxxx}}
\newtheorem{exexxx}[theorem]{Exercise}
\newtheorem{remxxx}[theorem]{Remark}
\newenvironment{remark}{\begin{remxxx}\rm}{\hfill\QED\end{remxxx}}
\newtheorem{openxxx}[theorem]{Open Problem}
\newtheorem{conjxxx}[theorem]{Conjecture}
\newtheorem{defxxx}[theorem]{Definition}
\newenvironment{definition}[1]{\begin{defxxx}[\emph{#1}]\rm}{\hfill\QED\end{defxxx}}
\newenvironment{sketch}
{\smallskip\noindent\ignorespaces\textit{Proof Sketch.}}
{\hfill\QED\medskip}
\newcommand{\Angles}[1]{\langle #1 \rangle}
\newcommand{\Set}[1]{\{ #1 \}}
\newcommand{\SET}[1]{\bigl\{ #1 \bigr\}}
\newcommand{\Judgement}[3]{#1\,\vdash\, #2 : #3}
\newcommand{\inn}[1]{\textbf{in}(#1)} % {\mathsf{in}}
\newcommand{\inter}[1]{\textbf{\#}(#1)} % {\mathsf{in}}
\newcommand{\out}[1]{\textbf{out}(#1)} % {\mathsf{out}}
\newcommand{\Env}{\Gamma} % {\gamma}
\newcommand{\aaa}{\textbf{A}} % \newcommand{\aaa}{\mathbf{A}}
\newcommand{\bbb}{\textbf{B}} % \newcommand{\bbb}{\mathbf{B}} 
\newcommand{\nn}{\textbf{N}} % \newcommand{\nn}{\mathbf{N}}
\newcommand{\A}{{\cal A}}
\newcommand{\B}{{\cal B}}
\newcommand{\C}{{\cal C}}
\newcommand{\F}{{\cal F}}
\newcommand{\M}{{\cal M}}
\newcommand{\N}{{\cal N}}
\newcommand{\PP}{{\cal P}}
\newcommand{\rename}[2]{{^{{#2}}}\!{#1}} % {\pmb{^{#2}}{#1}}
\newcommand{\Conn}[3]{\textbf{\textsf{conn}}(#2,#3,#1)}
\newcommand{\ConnN}[2]{#1\,\text{\small $\oplus$}\,#2}
\newcommand{\ConnNN}[2]{\bigoplus\bigl(#1,#2\bigr)}
\newcommand{\ConnP}[2]{#1\;\bfmath{\|}\;#2}
\newcommand{\bigPlus}{{\;\rule[2.5pt]{8pt}{1pt}\;}{\kern-10pt}{\;\rule[-1pt]{1pt}{8pt}\ \;}}  %{\bm{+{\kern-16pt}+}}
\newcommand{\Merge}[3]{\textbf{\textsf{Merge}}\bigl(#1,#2,#3\bigr)}
\newcommand{\Fork}[3]{\textbf{\textsf{Fork}}\bigl(#1,#2,#3\bigr)}
\newcommand{\Loop}[2]{\textbf{\textsf{bind}\,}(#2,#1)}
\newcommand{\Let}[3]%
    {\textbf{\textsf{let}}\ {#1}\,{#2}\ \textbf{\textsf{in}}\;{#3}\,}
\newcommand{\Choose}[3]%
    {\textbf{\textsf{choose}}\ {#1} {#2}\ \textbf{\textsf{in}}\;{#3}\;}
\newcommand{\Try}[3]%
    {\textbf{\textsf{try}}\ {#1} {#2}\ \textbf{\textsf{in}}\;{#3}\;}
\newcommand{\Letrec}[3]%
    {\textbf{\textsf{letrec}}\ {#1} {#2}\ \textbf{\textsf{in}}\;{#3}\;}
\newcommand{\Mix}[3]%
    {\textbf{\textsf{mix}}\ {#1} {#2}\ \textbf{\textsf{in}}\;{#3}\;}
\newcommand{\LET}[3]%
    {\textbf{\textsf{let}}^*\ {#1} {#2}\ \textbf{\textsf{in}}\;{#3}\;}
\newcommand{\Partial}[2]{\,{\subseteq}_{\text{1-1}}\; #1\, {\times}\, #2}
\newcommand{\ie}{\textit{i.e.}}
\newcommand{\eg}{\textit{e.g.}}
\newcommand\QED{$\square$} 
\newcommand{\proj}[3]{\textsf{Proj}_{#1}^{#2}\bigl(#3\bigr)}  
\newcommand{\nreals}{\mathbb{R}^+}
\newcommand{\reals}{\mathbb{R}}
\newcommand{\bfmath}[1]{{\text{\large $\boldsymbol{#1}$}}}  
\newcommand{\tight}[1]{\textsf{Tight}(#1)} 
\newcommand{\poly}[1]{\textsf{Poly}(#1)} 
\newcommand{\Poly}[1]{\textsf{Poly}^*(#1)} 
\newcommand{\size}[1]{|#1|}
\newcommand{\power}[1]{\mathscr{P}(#1)}
\newcommand{\inT}[1]{\bfmath{[}#1\bfmath{]}_{\text{in}}}
\newcommand{\outT}[1]{\bfmath{[}#1\bfmath{]}_{\text{out}}}
\newcommand{\rest}[2]{\bfmath{[}#1\bfmath{]}_{#2}}
\newcommand{\head}[1]{\textit{head}(#1)} % {\mathsf{head}(#1)}
\newcommand{\tail}[1]{\textit{tail}(#1)} % {\mathsf{tail}(#1)}
\newcommand{\loopT}[2]{\textsf{bind}(#1,#2)}% {\bfmath{[}#1,#2\bfmath{]}}
\newcommand{\ConnPT}[2]{#1\;{\|}\;#2}
\newcommand{\andT}[2]{#1\bfmath{\land}#2}
\newcommand{\ioSem}[1]{\bm{\llangle} #1 \bm{\rrangle}} 
\newcommand{\fullSem}[1]{\bm{\llbracket} #1 \bm{\rrbracket}} % 
\newcommand{\dimI}[1]{\textsf{dim}_{\text{in}}(#1)} % 
\newcommand{\dimO}[1]{\textsf{dim}_{\text{out}}(#1)} % 
\newcommand{\dimIO}[1]{\textsf{dim}(#1)} % 
\newcommand{\half}[1]{\textsf{Half}(#1)}
\newcommand{\symhr}{\textsc{hr}}
\newcommand{\LE}[3]{#2 {\,<}^{#1}\; #3}
\newcommand{\LEQ}[3]{#2 {\,\leqslant}^{#1}\; #3}
\newcommand{\hr}[1]{\symhr(#1)}
\newcommand{\symau}{\textsc{au}}
\newcommand{\au}[1]{\symau(#1)}
\newcommand{\symmd}{\textsc{md}}
\newcommand{\md}[1]{\symmd(#1)}
\newcommand{\objA}[2]{{\Phi}_{#1}^{#2}} % {{\varphi}_{#1}^{#2}}
\newcommand{\objB}[2]{{\varphi}_{#1}^{#2}}
\newcommand{\ooo}{\alpha} % {\chi} % {\textsc{x}}
\newcommand{\FullSem}[1]{\fullSem{\!\!\fullSem{\,#1\,}\!\!}} 
\newcommand{\CC}{\mathscr{C}}
\newcommand{\spacing}[2]{
  \renewcommand{\baselinestretch}{#2}
  \small\normalsize #1
  \setlength{\parskip}{0.2\baselineskip}
  \settowidth{\parindent}{xxxx}
  \setlength{\parindent}{#2\parindent}
  \setlength{\leftmargini}{\parindent}
  \setlength{\leftmarginii}{\parindent}
  \setlength{\leftmarginiii}{\parindent}
  \setlength{\footnotesep}{#2\footnotesep}
}
\begin{document}

\spacing{\normalsize}{0.95}

\maketitle
% 
% \tableofcontents
% \clearpage
\begin{abstract}
%% abstract.tex

\noindent
We define a \emph{domain-specific language} (DSL) to inductively assemble
\emph{flow networks} from \emph{small networks} or 
\emph{modules} to produce arbitrarily large ones,
with interchangeable functionally-equivalent parts.  Our small
networks or modules are ``small'' only as the building blocks in this
inductive definition (there is no limit on their size).
Associated with our DSL is a \emph{type theory}, a system of formal
annotations to express desirable properties of flow networks together
with rules that enforce them as \emph{invariants} across their
interfaces, \ie, the rules guarantee the properties are preserved as
we build larger networks from smaller ones.
A prerequisite for a type theory is a \emph{formal semantics}, \ie, a
rigorous definition of the entities that qualify as feasible flows
through the networks, possibly restricted to satisfy additional
efficiency or safety requirements. This can be carried out in one of
two ways, as a \emph{denotational} semantics or as an
\emph{operational} (or \emph{reduction}) semantics; we choose
the first in preference to the second, partly to avoid exponential-growth
rewriting in the operational approach.
We set up a typing system and prove its soundness for our DSL.

\end{abstract}

\section{Introduction and Motivation}
\label{sect:motivation}
\noindent{\bf Flow Networks.} 
  Most large-scale systems can be viewed as assemblies of subsystems,
  or gadgets, each of which produces, consumes, or regulates a flow of
  some sort. In a computer network, a discrete flow of messages
  (packets) is produced by servers ({\em e.g.}, streaming sources),
  regulated by network devices ({\em e.g.}, routers and shapers), and
  consumed by clients ({\em e.g.}, stream players). In a road network,
  the flow constitutes vehicles which enter and exit at edge
  exchanges, and which are regulated by speed limits on road segments,
  and by traffic lights at forks and intersections. In electric grids,
  a continuous flow of energy (electric current flow) is produced by
  power sources, regulated by transformers, transported by
  transmission lines, and consumed by power sinks. In a sensor
  network, a flow of measurements is produced by sensors, regulated by
  filters and repeaters, and consumed by sinks and aggregators. In a
  computing grid or cloud, a flow of resources ({\em e.g.}, CPU
  cycles) is produced by physical clusters of hosts, regulated by
  schedulers, resource managers, and hypervisors, and consumed by
  applications.

In each of the above systems, a ``network'' is assembled from smaller
  building blocks, which themselves could be smaller, inductively
  assembled networks or alternately, they could be individual {\em
  modules}. Thus, what we call {\em flow networks} are inductively
  defined as assemblies of {\em small networks} or {\em modules}. The
  operation of a flow network is characterized by a set of variables
  and a set of constraints thereof, reflecting {\em basic}, {\em
  assumed}, or {\em inferred} properties or rules governing how the
  network operates, and what constitutes safe operation. Basic rules
  (variables and constraints) are inherently defined, and are
  typically specified by a domain expert for individual
  modules. Assumed rules are speculatively specified for outsourced or
  yet-to-be fleshed out networks, which constitute {\em holes} in a
  larger network. Holes in a network specification allow the design or
  analysis of a system to proceed based only on promised functionality
  of missing modules or networks to be plugged in later.  Inferred
  rules are those that could be derived through repeated composition
  and analysis of networks. Such derivations may be exact, or may
  underscore conservative approximations ({\em e.g.}, upper or lower
  bounds on variables or expressions).

Basic or inferred rules -- underscoring constraints on the operation
  of a flow network -- could be the result of analysis using any one
  of a set of diverse theories or calculi. For instance, in a
  streaming network application, the size of a maximum burst of
  packets produced by a server over a window of time may be bounded
  using analysis that relies on real-time scheduling theory, whereas
  the maximum burst of packets emitted by a sequence of networking
  elements ({\em e.g.}, multicast routers and shapers) over a
  (possibly different) window of time may be bounded using analysis
  that relies on network calculus \cite{networkcalculus}. Clearly,
  when a larger flow network consisting of streaming servers as well
  as network elements -- not to mention holes -- is assembled, neither
  of these underlying calculi on its own could be used to perform the
  requisite network-wide analysis to derive the rules at the
  boundaries of the larger flow network. Rather, the properties at the
  boundaries of the constituent (smaller) networks of servers and
  networking elements constitute a domain-specific language (of
  maximum burst size over time, in this case), the semantics of which
  can be used to derive the rules at the boundaries of the larger flow
  network.

Several approaches to system design, modeling and analysis have been
  proposed in recent years, overlapping with our notion of flow
  networks. Apart from the differences in the technical details -- at
  the level of formalisms and mathematics that are brought to bear --
  our approach distinguishes itself from the others by incorporating
  from its inception three inter-related features/goals: (a) the
  ability to pursue system design and analysis without having to wait
  for missing (or broken) components/modules to be inserted (or
  replaced), (b) the ability to abstract away details through the
  retention of only the salient variables and constraints at network
  interfaces as we transition from smaller to larger networks, and (c)
  the ability to leverage diverse, unrelated theories to derive
  properties of modules and small networks, as long as such networks
  share a common formal language at their interfaces -- a formal
  Domain-Specific Language (DSL) that enables assembly and analysis
  that is agnostic to the underlying theory used to derive such
  properties.

\noindent{\bf Examples of DSL Use Cases.} 
  Before delving into the precise definitions and formal arguments of
  our DSL, we provide brief descriptions of how flow networks could be
  leveraged for two application domains -- namely resource allocation
  and arbitration subject to Service Level Agreements (SLAs) for
  \emph{video streaming} in a \emph{cloud computing} setting, and
  emerging safety-critical CPS and \emph{smart grid} applications.

The generality of our DSL is such that it can be applied to problems
  in settings that are not immediately apparent as flow network
  settings.  For example, consider a single, physical or virtual host
  (processor). One may view such a host $i$ as the source of a {\em
  supply flow} of compute cycles, offered in constant increments $c_i$
  every period $t_i$.  Similarily, a process or application $j$
  executing on such a host can be viewed as a {\em demand flow} of
  compute cycles, requested periodically with some characteristics --
  {\em e.g.}, subject to a maximum consumption of $w_j$ cycles per
  period $t_j$. In this setting, multiple supply flows ({\em e.g.} a
  set of processors in a multicore/cluster setting), each represented
  by an individual supply $(c_i,t_i)$ flow, can be regulated/managed
  using hypervisor system software to yield a flow network that
  exhibits a more elaborate pattern of compute cycles. For instance,
  the resulting flow may be specified as a single $(c_m,t_m)$ flow,
  where $c_m$ cycles are supplied over the Least Common Multiple (LCM)
  period $t_m$, or it may be sepcified as a set of $(c_k,t_k)$ flows,
  each of which operating at some discrete period $t_k$ drawn from the
  lattice of LCM periods defined by the individual $t_i$
  periods. Similarily, multiple demand flows ({\em e.g.} a set of
  services offered within a single virtual machine), each represented
  by an individual demand $(w_j,t_j)$ flow, can be multiplexed to
  yield more elaborate consumption patterns of the resulting
  workload. Finally, a supply flow may be matched up to a set of
  demand flows through the use of a scheduler.  Clearly, for a flow
  network of compute cycle producers, consumers, and schedulers to
  operate safely, specific constraints (rules) must be satisfied. For
  instance, matching up supply and demand flows adhere to a ``supply
  meets demand'' condition, or to some other SLA, such as ``periods of
  overload cannot exceed 100 msecs'' or ``no more than 5 missed
  periodic allocations in any 1-minute window of time''.

Not only is our DSL useful in modeling the supply of, demand for, and
  consumption (through a scheduler) of compute cycles, but also in a
  very similar manner they can be used readily to model the supply of,
  demand for, and consumption (through resource management protocols)
  of other computing resources such as network bandwidth, storage
  capacities, {\em etc.}

In the above setting, the flow networks describing the supply,
  demand, or scheduling of computing and networking resources can be
  made as small as desired to render their whole-system analysis
  tractable, or as large as desired to produce more precise
  system-wide typings. For instance, readers familiar with the vast
  literature on real-time scheduling ({\em e.g.}, 
  \cite{Liu1973Scheduling,Regehr2001HLS,Shin2003Periodic})
  will immediately recognize that most of the results in that
  literature can be viewed as deriving fairly tight bounds on specific
  processor schedulers such as EDF,
  RMS, Pinwheel, among others schedulers.  Similarily, readers
  familiar with QoS provisioning using network calculus, traffic
  envelopes, fluid network models will recognize that most of the
  results obtained through these models are applicable for specific
  protocols such as AIMD, weighted-fair queuing, among other
  schedulers ({\em e.g.}, 
  \cite{networkcalculus,255724,Thiran2001}).

Modeling and analysis of the supply of (and demand for) computing and
  networking resources is particularly valuable in the context of
  cloud and grid resource management ({\em e.g.},
  \cite{auyoung04resource,Buyya2000,Gomoluch2004,IshakianSwehaLondonoBestavros:nca10,662332}). In
  such a setting, a cloud operator may use a DSL to specify the
  topological configuration of computing and networking resources, the
  layer of system software used to virtualize these resources, as well
  as a particular mapping of client workloads to virtualized
  resources. Compiling such a DSL-sepecification is akin to verifying
  the safety of the system. Moreover, making changes to these DSL
  specifications enables the operator (or a mechanized agent thereof)
  to explore whether an alternative arrangement of resources or an
  alternative mapping of client workloads is more efficient
  \cite{IshakianBestavrosKfoury:rtsca10}.

As another example of the broad applicablity of our DSL, consider yet
  another application domain -- that of smart electric grids. In this
  domain, a module would be a grid ``cell'', such as a power plant, a
  residential or commercial building, a power transmission line, a
  transformer, or a power storage facility (batteries), {\em etc.}
  Each cell has a capacity to produce and consume power over time
  (energy flow). For example, a house with solar panels may be
  contributing a positive flow to the grid or a negative flow
  depending on the balance between solar panel supply and house
  demand. Operational or safety constraints on cells and
  interconnections of cells define relationships that may be the
  subject of exact whole-system analysis on the small scale, or
  approximate compositional analysis on the large scale. The simplest
  of cells is perhaps a transmission line, which may be modeled by
  input and output voltages $v_{in}$ and $v_{out}$, a maximum
  allowable drop in voltage $\delta_v$, a resistance $R$ which is a
  function of the medium and transmission distance, a current rating
  $I$, and a power rating $P$. Ignoring delays, one can describe such
  a cell by a set of constraints: {\em e.g.}, $v_{out}=v_{in} - R*I$
  (the voltage at the output is the difference between the input
  voltage and the voltage drop due to resistance), $v_{out}*I \leq P$
  (the power drain cannot exceed a maximum rated wattage), and $R*I
  \leq \delta_v$ (the drop in voltage must be less than what is
  allowed). Similarly, modules for other types of cells may be
  specified (or left unspecified as holes) and arrangements of such
  modules may be used to model large-scale smart grids, allowing
  designers to explore ``what if'' scenarios, {\em e.g.}, under what
  conditions would a hole in the grid cause a safety violation? or
  what are the most efficient settings ({\em e.g.}, power generation
  and routing decisions) in terms of power loss due to inefficient
  transmission? The introduction of ``smart'' computational processes
  in the grid ({\em e.g.}, feedback-based power management) and the
  expected diversity of technologies to be plugged into the grid make
  the consideration of such questions quite critical.

\noindent{\bf A Type Theory and Formal Semantics of Flow Networks.}
  Associated with our DSL is a {\em type theory}, a system of formal
  annotations to express desirable properties of flow networks
  together with rules that enforce them as {\em invariants} across
  their interfaces, {\em i.e.}, the rules guarantee the properties are
  preserved as we build larger networks from smaller ones.

A prerequisite for a type theory is a {\em formal semantics} -- a
  rigorous definition of the entities that qualify as feasible flows
  through the networks, possibly restricted to satisfy additional
  efficiency or safety requirements. This can be carried out in one of
  two ways, as a {\em denotational} semantics or as an {\em
  operational} (or {\em reduction}) semantics. In the first approach,
  a feasible flow through the network is denoted by a function, and
  the semantics of the network is the set of all such functions. In
  the second approach, the network is uniquely rewritten to another
  network in {\em normal form} (appropriately defined), and the
  semantics of the network is its normal form or directly extracted
  from it. Though the two can be shown to be equivalent (in a sense
  that can be made precise), whenever we need to invoke a network's
  semantics, we rely on the denotational definition in order to avoid
  complexity issues related to the operational definition. Some of
  these complexity issues are already evident from the form of network
  specifications we can write in our DSL.

As we alluded before, a distinctive feature of our DSL is the presence 
of \emph{holes} in network specifications, together with constructs of the form:
\(
  \Let{X}{= \M}{\N}
\),
  which informally says ``network $\M$ may be safely placed in the
  occurrences of hole $X$ in network $\N$''. What ``safely'' means
  will later depend on the invariant properties that typings are
  formulated to enforce. There are other useful \emph{hole-binders} 
  besides \textbf{\textsf{let-in}}, which we denote \textbf{\textsf{try-in}}, 
  \textbf{\textsf{mix-in}}, and \textbf{\textsf{letrec-in}}. An informal
  explanation of what these hole-binders mean is in 
  Remark~\ref{rem:other-derived-constructors} and
  Example~\ref{ex:illustrate-inductive-def}.

  Rewriting a specification in order to eliminate all occurrences of holes and
  hole-binders is a costly process, generally resulting in an exponential
  growth in the size of the expression denoting the specification,
  which poses particular challenges in the definition of an operational
  semantics.  We set up a typing system and prove its soundness for our DSL 
  without having to explicitly carry out such exponential-growth rewriting.

Our DSL provides two other primitive constructs, one of the form
  $\bigl(\ConnP{\M_1}{\M_2}\bigr)$ and another of the form
  $\Loop{\Angles{a,b}}{\N}$. The former juxtaposes two networks $\M_1$
  and $\M_2$ in parallel, and the latter binds the output arc $a$ of a
  network $\N$ to its input arc $b$. With these primitive or core
  constructors, we can define many others as \emph{derived}
  constructors and according to need.

\noindent{\bf Paper Overview and Context.}
  The remainder of this paper is organized as follows.
  Section~\ref{sect:basics} is devoted to preliminary definitions.
  Section~\ref{sect:inductive} introduces the syntax of our DSL and
  lays out several conditions for the well-formedness of network
  specifications written in it. We only include the
  \textbf{\textsf{let-in}} constructor, delaying the full treatment of
  \textbf{\textsf{try-in}}, \textbf{\textsf{mix-in}},
  \textbf{\textsf{letrec-in}}, to subsequent reports.

The formal semantics of flow networks are introduced in
  Section~\ref{sect:flows} and a corresponding type theory is
  presented in Section~\ref{sect:typings-are-polytopes}.  The type
  theory is syntax-directed, and therefore \emph{modular}, as it
  infers or assigns typings to objects in a stepwise inside-out
  manner. If the order in which typings are inferred for the
  constituent parts does not matter, we additionally say that the
  theory is \emph{fully compositional}. We add the qualifier ``fully''
  to distinguish our notion of compositionality from similar, but
  different, notions in other areas of computer
  science.\footnote{Adding to the imprecision of the word,
  ``compositional'' in the literature is sometimes used in the more
  restrictive sense of ``modular'' in our sense.}  We only include an
  examination of modular typing inference in this paper, leaving its
  (more elaborate) fully-compositional version to a follow-up report.

The balance of this paper expands on the fundamentals laid out in the
  first four sections: Sections~\ref{sect:inference}
  to~\ref{sect:relativized-typing-system} mostly deal with issues of
  typing inference, whether for the basic semantics of flow networks
  (introduced in Section~\ref{sect:flows}) or their relativized
  semantics, whereby flows are feasible if they additionally satisfy
  appropriately defined objective functions (introduced in
  Section~\ref{sect:relativized-semantics}).

\noindent{\bf Acknowledgment.}
  The work reported in this paper is a small fraction of a collective
  effort involving several people, under the umbrella of the
  \textbf{iBench Initiative} at Boston University. The reader is
  invited to visit the website \texttt{https://sites.google.com/site/ibenchbu/}
  for a list of
  participants, former participants, and other research activities.
The DSL presented in this paper, with its formal semantics and type
  system, is in fact a specialized and simpler version of a DSL we
  introduced earlier in our work for NetSketch, an integrated
  environment for the modeling, design and analysis of large-scale
  safety-critical systems with interchangeable parts
  \cite{BestavrosKfouryLapetsOcean:crts09,BestavrosKfouryLapetsOcean:hscc10,%
  SouleBestKfouryLapets:eoolt11}.
  In addition to its DSL, NetSketch has two other components currently
  under development: an automated verifier (AV), and a user interface
  (UI) that combines the DSL and the AV and adds appropriate tools for
  convenient interactive operation.

% \vspace*{-.13in}
\section{Preliminary Definitions}
\label{sect:basics}
\label{sect:preliminary}
%% introduction.tex

A \emph{small network} $\A$ is of the form $\A =
(\nn,\aaa)$ where $\nn$ is a set of nodes and $\aaa$ a set of directed
arcs. Capacities on arcs are determined by a lower-bound $L :
\aaa\to\nreals$ and an upper-bound $U : \aaa\to\nreals$ satisfying the
conditions $L(a) \leqslant U(a)$ for every $a\in\aaa$.  We write
$\reals$ and $\nreals$ for the sets of all reals and all non-negative
reals, respectively.
We identify the two ends of an arc $a\in\aaa$ by writing $\head{a}$
and $\tail{a}$, with the understanding that flow moves from $\tail{a}$
to $\head{a}$.  The set $\aaa$ of arcs is the disjoint union 
(denoted ``$\uplus$'') of three
sets: the set $\aaa_\text{\#}$ of internal arcs, the set
$\aaa_\text{in}$ of input arcs, and the set $\aaa_\text{out}$ of
output arcs:
\begin{alignat*}{3}
   &\aaa &&=\ \ 
   &&\aaa_\text{\#} \uplus \aaa_\text{in} \uplus \aaa_\text{out}\quad\text{where}
\\
   &\aaa_\text{\#} &&= 
   &&\Set{\,a\in\aaa\;|\;\head{a}\in\nn\text{ and }\tail{a}\in\nn\,}
\\
   &\aaa_\text{in} &&= 
   &&\Set{\,a\in\aaa\;|\;\head{a}\in\nn\text{ and }\tail{a}\not\in\nn\,}  
\\
   &\aaa_\text{out} &&= 
   &&\Set{\,a\in\aaa\;|\;\head{a}\not\in\nn\text{ and }\tail{a}\in\nn\,}
\end{alignat*}
The tail of an input arc, and the head of an 
output arc, are not attached to any node.
We do not assume $\A$ is connected as a directed graph -- a sensible
assumption in studies of network flows, whenever
there is only one input arc (or ``source node'') and one
output arc (or ``sink node'').  We assume 
$\nn\neq\varnothing$, \ie, there is at least one node in $\nn$,
without which there would be no input and no output arc, and nothing to
say.

A \emph{flow} $f$ in $\A$ is a function that assigns a non-negative
real to every $a\in\aaa$. Formally, a flow is a function $f :
\aaa\to\nreals$ which, if \emph{feasible}, satisfies ``flow
conservation'' and ``capacity constraints'' (below).

We call a bounded interval $[r,r']$ of reals, possibly
negative, a \emph{type}, and we call a \emph{typing} a function $T$
that assigns a type to every subset of input and output arcs.
Formally, $T$ is of the following form:%
    \footnote{Our notion of a ``typing'' as an assignment of
    types to the members of a powerset is different from
    a similarly-named notion in the study of type systems 
    for programming languages.
    In the latter, a typing refers to a derivable
    ``typing judgment'' consisting of  
    a program expression $M$, a type assigned to
    $M$, \emph{and} a type environment with a type
    for every free variable in $M$.} 
\[
    T\;:\ \power{\aaa_{\text{in}}\cup\aaa_{\text{out}}}\ \to\ \reals\times\reals
\]
where $\power{\ }$ is the power-set operator, \ie,
$\power{\aaa_{\text{in}}\cup\aaa_{\text{out}}} = \Set{A\,|\,A\subseteq
\aaa_{\text{in}}\cup\aaa_{\text{out}}}$.  As a function, $T$ is not
totally arbitrary and satisfies certain conditions, discussed in
Section~\ref{sect:notational}, which qualify it as a \emph{network
  typing}. Instead of writing $T(A) = \Angles{r,r'}$, where $A\subseteq
\aaa_{\text{in}}\cup\aaa_{\text{out}}$, we write $T(A) = [r,r']$.  We
do not disallow the possibility that $r > r'$ which will be an
empty type satisfied by no flow.

Informally, a typing $T$ imposes restrictions on a flow $f$ relative to
every $A\subseteq\aaa_{\text{in}}\cup\aaa_{\text{out}}$ which, if satisfied,
will guarantee that $f$ is feasible. Specifically, if $T(A) = [r,r']$, then
$T$ requires that the part of $f$ entering through the arcs in 
$A\cap\aaa_{\text{in}}$ minus the part of $f$ exiting through the arcs in 
$A\cap\aaa_{\text{out}}$ must be within the interval $[r,r']$.

\begin{remark}
Let $\A = (\nn,\aaa)$ be a small network.
We may want to identify some nodes as
\emph{producers} and some others as \emph{consumers}. In 
the presence of lower-bound and upper-bound functions $L$ and $U$, 
we do not need to do this explicitly. For example, if
$n$ is a node that produces an amount $r\in\nreals$, we  
introduce instead a new input arc $a$ entering $n$ with $L(a) = U(a) =
r$. Similarly, if $n'$ is a node that consumes an amount
$r'\in\nreals$, we introduce a new output arc $a'$ exiting
$n'$ with $L(a') = U(a') = r'$.  
The resulting network $\A'$ is equivalent to
$\A$, in that any feasible flow in $\A'$
induces a feasible flow in $\A$, and vice-versa.
\end{remark}

\Hide
{
\begin{remark}
\label{rem:commodities}
The analysis to follow is restricted to a single commodity. It is
straightforward to generalize it to several commodities from a finite
set $K$ of commodities. This requires a separate definition of a flow
function $f_{\kappa}:\aaa\to\nreals$ for each commodity $\kappa\in
K$. An equation for flow conservation (expressed by~(\ref{one}) below)
has to be set up for each commodity separately, but the inequality
constraints (expressed by~(\ref{two}) below) must be satisfied by the
sum of all the commodity flows together.

If we want some nodes as \emph{producers} or \emph{consumers} of a
particular commodity $\kappa\in K$, we need to introduce lower-bound
and upper-bound functions $L_{\kappa}$ and $U_{\kappa}$ such that
$L_{\kappa}(a) = L(a)$ and $U_{\kappa}(a) = U(a)$ on every arc $a$
that is not a special arc for commodity $\kappa$, and $L(a) = U(a) =
L_{\kappa'}(a) = U_{\kappa'}(a) = 0$ for every arc $a$ that is special
for commodity $\kappa$, where $\kappa'\neq\kappa$.  This will force
all arcs that are special for $\kappa$ to carry flow of commodity
$\kappa$ only.

In the presence of several commodities, our framework is general
enough for an examination of what is called the \emph{demand matrix}
in traffic engineering. The demand matrix $D$ for a network with a set
$\nn$ of nodes is a square matrix of size $\size{\nn}\times\size{\nn}$
where the entry $D(n,n') = r\in\nreals$ denotes the amount to be sent
from node $n$ to node $n'$. If $r\neq 0$, we identify a commodity (or
a kind) called ${\kappa}_{(n,n')}$ with the pair $(n,n')$ and make node $n$ a
producer of an amount $r$ of kind ${\kappa}_{(n,n')}$ and node $n'$ a
consumer of the same amount $r$ of kind ${\kappa}_{(n,n')}$.
\end{remark}
}

\Hide
{
\begin{remark}
We do not disallow the special cases when $\aaa_{\text{in}} =
\varnothing$, or $\aaa_{\text{out}} = \varnothing$, or both, because
they may result from some of our later constructions. However, by
themselves, these cases are quickly analyzed.

Thus, if $\aaa_{\text{in}} = \varnothing$ and $\aaa_{\text{out}} \neq \varnothing$, 
a feasible flow $f$ in $\A$ must assign $0$ to every output arc --
unless there is an output arc $a$ with non-zero lower bound $L(a)\neq 0$, in
which case there is no feasible flow in $\A$.

Similarly, if $\aaa_{\text{in}} \neq \varnothing$ and
$\aaa_{\text{out}} = \varnothing$, then a feasible flow $f$ in $\A$
must assign $0$ to every input arc -- unless there is an input arc $a$
with non-zero lower bound $L(a)\neq 0$, in which case there is no
feasible flow in $\A$.

And if both $\aaa_{\text{in}} = \aaa_{\text{out}} = \varnothing$, the
network $\A$ is ``totally closed'' (in the terminology of
Section~\ref{sect:inductive}) and cannot interact with any other
network. If $\A$ is a small module, which cannot thus be hooked to any
other, our typing theory has nothing to say about $\A$.  If $\A$ is
the result of an inductive construction -- and is ``totally closed''
because all output arcs have been looped back to enter input arcs --
then our typing theory guarantees that $\A$'s internal working
respects the invariant properties that the types were formulated to
express.
\end{remark}
}

% \subsection{Flow Conservation, Capacity Constraints, Type Satisfaction}
\paragraph{Flow Conservation, Capacity Constraints, Type Satisfaction.}
\label{sect:flow-conservation}

Though obvious, we precisely state fundamental concepts
underlying our entire examination and introduce some of our
notational conventions, in Definitions~\ref{def:flow-conservation},
\ref{def:capacity-constraints}, \ref{def:feasible-flows},
and~\ref{def:type-satisfaction}. 

\begin{definition}{Flow Conservation}
\label{def:flow-conservation}
If $A$ is a subset of arcs in $\A$ and $f$ a flow in $\A$, we write
$\sum f(A)$ to denote the sum of the flows assigned to all the arcs
in $A$:
\(    \sum f(A) = \sum \Set{f(a)\,|\,a\in A} 
\). By convention, $\sum\varnothing = 0$.
If $A= \Set{a_1,\ldots,a_p}$ is the set of all arcs entering
node $n$, and $B =\Set{b_1,\ldots,b_q}$ is the set of all arcs exiting
node $n$, then conservation of flow at $n$ is expressed by the 
linear equation:
\begin{equation}
\label{one}
   \sum\, f(A)\ = \ \sum\, f(B)  
\end{equation}
There is one such equation for every node {$n\in\nn$}.
\end{definition}

\begin{definition}{Capacity Constraints}
\label{def:capacity-constraints}
A flow $f$ satisfies the capacity constraints at arc $a\in \aaa$ if:
\begin{align}
\label{two}
      & L(a)\ \ \leqslant \ \ f(a) \ \leqslant \ \ U(a)
\end{align}
There are two such inequalities for every arc $a\in\aaa$.
\end{definition}

\begin{definition}{Feasible Flows}
\label{def:feasible-flows}
A flow $f$ is \emph{feasible} iff two conditions:
\begin{itemize} % \begin{enumerate}
\item for every node $n\in\nn$, the equation in (\ref{one}) is satisfied,
\item for every arc $a\in\aaa$, the two inequalities in (\ref{two}) are satisfied, 
\end{itemize} % \end{enumerate}
following standard definitions of network flows.
\end{definition}

\begin{definition}{Type Satisfaction}
\label{def:type-satisfaction}
Let $T:\power{\aaa_{\text{in}}\cup\aaa_{\text{out}}}\to\reals\times\reals$ 
be a typing for the small network $\A$.
We say the flow $f$ \emph{satisfies} $T$ if, for every
$A\in\power{\aaa_{\text{in}}\cup\aaa_{\text{out}}}$ with $T(A) = [r,r']$,
it is the case:
\begin{align}
\label{three}
      &r\ \leqslant\quad
        \sum\, f(A\cap\aaa_{\text{in}})\ -\ \sum\, f(A\cap\aaa_{\text{out}})
        \quad \leqslant \ r'
\end{align}
We often denote a typing $T$ for $\A$ by simply writing $\A:T$.
\end{definition}

% \vspace*{-.13in}
\section{DSL for Incremental and Modular Design of Flow Networks (Untyped)}
\label{sect:inductive}
%% untyped-dsl.tex

The definition of small networks in Section~\ref{sect:basics}
was less general than our full definition of networks,
but it had the advantage of being more directly comparable
with standard graph-theoretic definitions. 
Our networks in general involve what we call ``holes''.  A \emph{hole}
$X$ is a pair $({\aaa}_{\text{in}},{\aaa}_{\text{out}})$ where
${\aaa}_{\text{in}}$ and ${\aaa}_{\text{out}}$ are disjoint finite sets of
input and output arcs. A hole $X$ is a place holder where networks can
be inserted, provided the \emph{matching-dimensions}
condition (in Section~\ref{sect:well-formedness}) is satisfied.

We use a BNF definition to generate formal expressions, each being
a formal description of a network. Such a formal expression
may involve subexpressions of the form:
\(
  \Let{X}{= \M}{\N}
\),
which informally says ``$\M$ may be safely placed in the occurrences of
hole $X$ in $\N$''. What ``safely'' means depends
on the invariant properties that typings are formulated to enforce.
In such an expression, we call the $X$ to the
left of ``$=$'' a \emph{binding} occurrence, and we call
all the $X$'s in $\N$ \emph{bound} occurrences.

If $\A = (\nn,\aaa)$ is a small network
where $\aaa = \aaa_{\text{\#}}\uplus\aaa_\text{in}\uplus\aaa_\text{out}$,
let $\inn{\A} = \aaa_{\text{in}}$, $\out{\A} = \aaa_{\text{out}}$,
and $\inter{\A} = \aaa_{\text{\#}}$.
Similarly, if $X = (\aaa_\text{in},\aaa_\text{out})$ is a hole,
let $\inn{X} = \aaa_{\text{in}}$, $\out{X} = \aaa_{\text{out}}$,
and $\inter{X} = \varnothing$.
We assume the arc names of small networks and holes
are all pairwise disjoint, \ie, every small network and every hole
has its own private set of arc names.

The formal expressions generated by our BNF
are built up from: the set of names for small
networks and the set of names for holes, using the constructors
$\ConnP{}{}$, $\textbf{\textsf{let-in}}$, and 
$\textbf{\textsf{bind}}$:
\begin{alignat*}{5}
&\A,\B,\C &&\in\textsf{\sc SmallNetworks} && && &&
\\
\nonumber
&X,Y,Z &&\in\textsf{\sc HoleNames} && && &&
\\
\nonumber
  &\M,\N,\PP &&\in\textsf{\sc Networks}\ &&::=
      \ &&\A   &&\text{small network name}
  \\
  &   &&  &&\ | &&X &&\text{hole name}
  \\
  &   &&  &&\ | &&\ConnP{\M}{\N} &&
         \text{parallel connection} 
  \\
  &   &&  &&\ | &&\Let{X}{=\M}{\N}
         \quad\ &&\text{let-binding of hole $X$}  
  \\
  &   &&  &&\ | &&\Loop{\Angles{a,b}}{\N} && 
         \text{bind $\head{a}$ to $\tail{b}$, where} 
  \\
  &  &&  && && &&\text{$\Angles{a,b}\in {\out{\N}}\times{\inn{\N}}$} 
  \end{alignat*}
where $\inn{\N}$ and $\out{\N}$ are the input and output arcs of
$\N$. In the full report~\cite{kfouryDSL:2011}, we formally define $\inn{\N}$ 
and $\out{\N}$, as well
as the set $\inter{\N}$ of internal arcs of $\N$, by structural induction.

\Hide
{
\begin{itemize}
\item If $\N$ is the name of small network $\A$,
      then \\ $\inn{\N} = \inn{\A}$,\ \ $\out{\N} = \out{\A}$,
      \ \ and\ \ $\inter{\N} = \inter{\A}$.
\item If $\N$ is the name of hole $X$,
      then \\ $\inn{\N} = \inn{X}$,\ \ $\out{\N} = \out{X}$,
      \ \ and\ \ $\inter{\N} = \varnothing$.
\item If $\N = (\ConnP{\M}{\M'})$,
      then \\ $\inn{\N} = \inn{\M}\cup\inn{\M'}$,
      \ \ $\out{\N} = \out{\M}\cup\out{\M'}$,
      \ \ and \ \ $\inter{\N} = \inter{\M}\cup\inter{\M'}$.
\item If $\N = (\Let{X}{=\M}{\N'})$,
      then \\ $\inn{\N} = \inn{\N'}$,\ \ $\out{\N} = \out{\N'}$,
      \ \ and\ \ $\inter{\N} = 
      \inter{\N'}\cup\inter{\M}$.
\item If $\N = \Loop{\Angles{a,b}}{\N'}$,
      then \\ $\inn{\N} = \inn{\N'}-\Set{b}$,
      \ \ $\out{\N} = \out{\N'}-\Set{a}$,\ \ and
      \ \ $\inter{\N} = \inter{\N'} \cup \Set{a}$ with $\head{a} := \tail{b}$.
\end{itemize}
}
We say a flow network $\N$ is \emph{closed} if every
hole $X$ in $\N$ is bound. We say $\N$ is \emph{totally closed} if
it is closed and $\inn{\N}=\out{\N}=\varnothing$, 
\ie, $\N$ has no input arcs and no output arcs.

\Hide{
\begin{remark}
A network specification $\N$, as defined by the BNF above, does not
introduce lower-bound and upper-bound capacities on arcs.  $\N$ only
defines a topology of a large network, starting from a collection of
small networks. From the capacities assigned to the small networks'
arcs, our typing theory will attempt to infer typings for all the
well-formed subparts (or subexpressions) of $\N$ and for $\N$
itself. If it succeeds to do this inference, the typings will certify
that the construction of every larger part from smaller parts respects
the invariant properties we wish to impart to all of $\N$.

Among invariant properties, we will want, at a minimum, that if there
are feasible flows in the smaller parts, then there are feasible flows
in the larger parts. There is one construction which already
illustrates this idea. In a subexpression of the form
$\Loop{\Angles{a,b}}{\N}$, if the lower-bound capacity of arc $a$ (or
$b$, resp.)  is strictly larger than the upper-bound capacity of arc
$b$ (or $a$, resp.), then there is no feasible flow in
$\Loop{\Angles{a,b}}{\N}$ and our typing theory will not allow this
``unsafe'' binding, \ie, connecting the head of $a$ to the
tail of $b$.
\end{remark}
}

\subsection{Derived Constructors}
\label{sect:derived-constructors}

From the three primitive constructors introduced above: $\ConnP{}{}$,
$\textbf{\textsf{let-in}}$, and % $\textbf{\textsf{loop}}$,
$\textbf{\textsf{bind}}$, we can define several other
constructors. Below, we present four of these derived constructors precisely, and
mention several others in Remark~\ref{rem:other-derived-constructors}.
Our four derived constructors are used as in the following
expressions, where $\N$, $\N_i$, and $\M_j$, are network
specifications and $\theta$ is set of arc pairs:
\[ \Loop{\theta}{\N}
    \qquad \Conn{\theta}{\N_1}{\N_2}
    \qquad \ConnN{\N_1}{\N_2} 
    \qquad \Let{X}{\in\Set{{\M}_1,\ldots,{\M}_n}}{\N}
\]
The second above depends on the first, the third on the second, and the fourth
is independent of the three preceding it.
Let $\N$ be a network specification.  We write
$\theta\;\Partial{\out{\N}}{\inn{\N}}$ to denote a partial one-one map
from $\out{\N}$ to $\inn{\N}$. We may write the entries in $\theta$
explicitly, as in:
\[
    \theta\ =\ \Set{\Angles{a_1,b_1},\ldots,\Angles{a_k,b_k}}
\] 
where $a_1,\ldots,a_k\in \out{\N}$ and $b_1,\ldots,b_k\in \inn{\N}$.

Our first derived constructor is a generalization of
$\textbf{\textsf{bind}}$ % $\textbf{\textsf{loop}}$ 
and uses the same name. In this generalization of 
$\textbf{\textsf{bind}}$ % $\textbf{\textsf{loop}}$ 
the second argument is now $\theta$ as above rather than a single pair
$\Angles{a,b}\in {\out{\N}}\times{\inn{\N}}$. The expression
$\Loop{\theta}{\N}$ can be expanded as follows:
\[
      \Loop{\theta}{\N}\ \bm{\Longrightarrow}
        \ \Loop{\Angles{a_1,b_1}}
         {\Loop{\Angles{a_2,b_2}}
            {\ \cdots\ \Loop{\Angles{a_k,b_k}}{\N}\ \cdots\ }}
\]
where we first connect the head of $a_k$ to the tail of $b_k$ and
lastly connect the head of $a_1$ to the tail of $b_1$. A little
proof shows that the order in which we connect arc heads
to arc tails does not matter as far as our formal semantics and 
typing theory is concerned.
% \ie, it infers the same typing (if it exists) for the whole construction
% regardless of what that order is. Remark~\ref{rem:order-of-construction}
% elaborates this last point.
       
Our second derived constructor, called $\textbf{\textsf{conn}}$
(for ``connect''), uses the preceding generalization of 
$\textbf{\textsf{bind}}$ together with the constructor
$\ConnP{}{}$. Let $\N_1$ and $\N_2$ be network specifications,
and $\theta\;\Partial{\out{\N_1}}{\inn{\N_2}}$. We expand
the expression $\Conn{\theta}{\N_1}{\N_2}$ as follows:
\[
   \Conn{\theta}{\N_1}{\N_2}\ \ \bm{\Longrightarrow}
   \ \ \Loop{\theta}{(\ConnP{\N_1}{\N_2})}
\]
In words, $\textbf{\textsf{conn}}$ connects some of the output arcs
in $\N_1$ with as many input arcs in $\N_2$.

Our third derived constructor is a special case of the preceding
$\textbf{\textsf{conn}}$. Unless otherwise stated, we will assume
there is a fixed ordering of the input arcs and another fixed ordering
of the output arcs of a network. Let $\N_1$ be a network specification
where the number $m\geqslant 1$ of output arcs is exactly the number
of input arcs in another network specification $\N_2$, say:
\[
    \out{\N_1} = \Set{a_1,\ldots,a_m}
    \quad\text{and}\quad
    \inn{\N_2} = \Set{b_1,\ldots,b_m}
\]
where the entries in $\out{\N_1}$ and in $\inn{\N_2}$
are listed, from left to right, in their assumed ordering. Let
\[
   \theta = \ \Set{\Angles{a_1,b_1},\ldots,\Angles{a_m,b_m}}
          \ =\ \out{\N_1}\times\inn{\N_2}
\]
\ie, the first output arc $a_1$ of $\N_1$ is connected to the first
input arc $b_1$ of $\N_2$, the second output arc $a_2$ of $\N_1$ to
the second input arc $b_2$ of $\N_2$, etc. Our derived constructor
$(\ConnN{\N_1}{\N_2})$ can be expanded as follows:
\[
   (\ConnN{\N_1}{\N_2})\ \bm{\Longrightarrow}\ \Conn{\theta}{\N_1}{\N_2}
\]
which implies that $\inn{\ConnN{\N_1}{\N_2}} = \inn{\N_1}$ and 
$\out{\ConnN{\N_1}{\N_2}} = \out{\N_2}$. As expected, 
$\ConnN{}{}$ is associative as far as our formal semantics
and typing theory are concerned, 
\ie, the semantics and typings for $\ConnN{\N_1}{(\ConnN{\N_2}{\N_3})}$ 
and $\ConnN{(\ConnN{\N_1}{\N_2})}{\N_3}$ are the same.
 
A fourth derived constructor generalizes
$\textbf{\textsf{let-in}}$ and is expanded into several nested
$\textbf{\textsf{let}}$-bindings:
\[
   \bigl(\Let{X}{\in\Set{{\M}_1,\ldots,{\M}_n}}{\N}\bigr)\ \bm{\Longrightarrow} 
   \ \Bigl(\Let{X_1}{={\M}_1}{\bigl(\cdots\ \bigl(\Let{X_n}{={\M}_n}
    {({\N}_1\,\|\;\cdots\;\|\,{\N}_n)}\bigr)\ \cdots\bigr)}\Bigr)
\]
where $X_1,\ldots,X_n$ are fresh hole names and ${\N}_i$ is ${\N}$
with $X_i$ substituted for $X$, for every $1\leqslant i\leqslant n$.
Informally, this constructor says that \emph{every one} of the
networks $\Set{{\M}_1,\ldots,{\M}_n}$ can be ``safely''
placed in the occurrences of $X$ in ${\N}$.

\Hide{We will use these 4 derived constructors in
examples, but when setting up the typing rules for networks in
general, we will revert back to the primitive constructors.}

\Hide{\begin{remark}
\label{rem:order-of-construction}
If we consider graphical representations of constructions such as
$\Loop{\theta}{\N}$ and $\ConnN{\N_1}{\ConnN{\N_2}{\N_3}}$, then the
order in which we connect the arcs in the graphs does not matter,
obviously.  But we will invoke graphical representations only
informally.  To formally translate our network specifications into
graphical representations in some unique normal form -- which requires
not only expanding all derived constructors but also, more
challengingly, introducing formal rules to reduce all let-bindings --
is the basis of an \emph{operational} (or \emph{reduction}) approach
to the semantics of network specifications. However, this is something
we purposely avoid in this report, for reasons we futher elaborate in
Remark~\ref{rem:normal-form} below.

In preference to an operational approach, we choose a
\emph{denotational} approach to define the semantics of network
specifications, in Section~\ref{sect:flows}.
\end{remark}
}

\begin{remark}
\label{rem:other-derived-constructors}
Other derived constructors can be defined according to need in
applications. We sketch a few. An obvious generalization
of $\ConnN{}{}$ cascades the same network $\N$ some $n\geqslant 1$
times, for which we write $\ConnNN{\N}{n}$. A condition for
well-formedness is that $\N$'s input and output
dimensions must be equal.

Another derived constructor is $\Merge{\N_1}{\N_2}{\N_3}$ which
connects all the output arcs of $\N_1$ and $\N_2$ to all the input
arcs of $\N_3$. For well-formedness, this requires the output
dimensions of $\N_1$ and $\N_2$ to add up to the input dimension of
$\N_3$. And similarly for a derived constructor of the form
$\Fork{\N_1}{\N_2}{\N_3}$ which connects all the output arcs of $\N_1$
to all the input arcs of $\N_2$ and $\N_3$.

While all of the preceding derived constructors can be expanded using our
primitive constructors, not every constructor we may devise can be so
expanded. For example, a constructor of the form
\[
   \Try{X}{\in\Set{{\M}_1,\ldots,{\M}_n}}{\N\!}
\]
which we can take to mean that \emph{at least one} ${\M}_i$ can be
``safely'' placed in all the occurrences of $X$ in $\N$, cannot be
expanded using our primitives and the way we define their semantics in
Section~\ref{sect:flows}. Another constructor also requiring 
a more developed examination is of the form
\[
   \Mix{X}{\in\Set{{\M}_1,\ldots,{\M}_n}}{\N\!}
\]
which we can take to mean that every combination (or
mixture) of \emph{one or more} ${\M}_i$ can
be selected at the same time and ``safely'' placed
in the occurrences of $X$ in $\N$, generally placing different 
${\M}_i$ in different occurrences.
The constructors 
$\textbf{\textsf{try-in}}$ and $\textbf{\textsf{mix-in}}$ are examined
in a follow-up report. An informal understanding of how they
differ from the constructor $\textbf{\textsf{let-in}}$
can be gleaned from Example~\ref{ex:illustrate-inductive-def}.

Another useful constructor introduces recursively defined components
with (unbounded) repeated patterns. In its simplest form, it can
be written as:
\[
  \Letrec{X}{=\M[X]}{\N[X]}
\]
where we write $\M[X]$ to indicate that $X$ occurs free in $\M$, and similarly
in $\N$. Informally, this construction corresponds to placing an open-ended 
network of the form $\M[\M[\M[\cdots]]]$ in the occurrences of $X$ in $\N$. 
A well-formedness condition here is that the input and output 
dimensions of $\M$ must match those of $X$ .
We leave for future examination the semantics and typing of 
\textbf{\textsf{letrec-in}}, which are still more involved than those of
\textbf{\textsf{try-in}} and \textbf{\textsf{mix-in}}. 
\end{remark}

\subsection{Well-Formed Network Specifications}
\label{sect:well-formedness}

In the full report~\cite{kfouryDSL:2011}, we % We
spell out 3 conditions, not enforced by the BNF definition at the
beginning of Section~\ref{sect:inductive}, which guarantee what we
call the \emph{well-formedness} of network specifications.  We call them:
\begin{itemize}
\item the \emph{matching-dimensions} condition, 
\item the \emph{unique arc-naming} condition, 
\item the \emph{one binding-occurrence} condition. 
\end{itemize}
These three conditions are automatically satisfied by small
networks. Although they could be easily incorporated into our
inductive definition, more than BNF style, they would 
obscure the relatively simple structure of our network
specifications. 

We only briefly explain what the second condition specifies: To avoid
ambiguities in the formal semantics of Section~\ref{sect:semantics}, 
we need to enforce in the specification of
a network $\N$ that no arc name refers to two different arcs. This in turn
requires that we distinguish the arcs of the different copies of the same hole
$X$. Thus, if we use $k\geqslant 2$ copies of $X$, we rename their arcs so that
each copy has its own set of arcs. We write $\rename{X}{1},\ldots,\rename{X}{k}$ 
to refer to these $k$ copies of $X$.
For further details on the \emph{unique arc-naming} condition, and full
explanation of the two other conditions, the reader is referred 
to~\cite{kfouryDSL:2011}.

%%%
\Hide
{
\subsubsection*{Matching dimensions of input/output arcs}

Let $\M$ be a network specification. We assume there is a fixed 
ordering of the entries in $\inn{{\M}}$ and $\out{{\M}}$. 
If we need to refer to both together, we agree that the arcs in 
$\inn{{\M}}$ are listed before those in $\out{{\M}}$:
\begin{itemize}
  \item[] $\dimI{\M}$ is $\inn{{\M}}$ as an ordered set --
          \emph{input dimension} of $\M$. 
  \item[] $\dimO{\M}$ is $\out{{\M}}$ as an ordered set --
          \emph{output dimension} of $\M$.
  \item[] $\dimIO{\M} = \dimI{\M}\cdot\dimO{\M}$ is 
          $\inn{{\M}}\cup\out{{\M}}$ as an ordered set -- 
          \emph{I/O dimension} of $\M$.
\end{itemize}
In the let-binding of a hole $X$ we have to make sure that the
network considered for insertion in $X$ has the same number of input
arcs, the same number of output arcs, and both are ordered in the same
way. More precisely, an expression of the form:
\[
  \Let{X}{=\M}{\N}
\] 
is \emph{well-formed} provided:
\[ \dimI{X}\ \approx\ \dimI{\M}\quad\text{and}\quad
   \dimO{X}\ \approx\ \dimO{\M} 
\]
where ``$\approx$'' indicates that the first arc, second arc, etc., in
$X$ correspond to the first arc, second arc, etc., in ${\M}$.  Keep in
mind that arcs are named differently in $X$ and in $\M$, which is why
we write ``$\approx$'' instead of ``$=$''.  If the preceding condition
is satisfied, we will say that $X$ and ${\M}$ have \emph{similar}
input and output dimensions.  Thus, when we place ${\M}$ in hole $X$,
we connect the designated first arc, second arc, etc., in $X$ to the
designated first arc, second arc, etc., in ${\M}$, respectively.

Moreover, if there are several, say $k\geqslant 2$,
occurrences of $X$ in $\N$, we want each of the $k$ copies
of $X$ to have its distinct set of input arcs and distinct
set of output arcs, as we discuss next. 

\subsubsection*{Unique arc naming}
We need to guarantee that, in the specification
of a network $\N$, no arc name refers to two different arcs.
This is needed in order to avoid some ambiguities later.
This condition is not enforced by the BNF definition, but we can 
enforce it by appropriate ``isomorphic renaming'', \ie, by
renaming arc names in order to avoid a single name for several
arcs without changing the topology of the network, as we explain
next.

We first define the \emph{outer scope} and \emph{inner scope} of
a let-binding for a hole $X$ in a network specification $\N$:
the inner scope is the part of $\N$ where all
the bound occurrences of $X$ are mentioned, here indicated by an underbrace:
\[
    \N\ = \underbrace{\ \cdots\ \cdots\ }_{\text{outer scope}}   
   \bigl(\textbf{\textsf{let}}{\ X}
       {= \underbrace{\ \cdots \ \cdots\ }_{\text{outer scope}} } 
       \ \textbf{\textsf{in}}
       {\ \underbrace{\ \cdots\ X\ \cdots\ X\ \cdots\ }_{\text{inner scope of $X$}}}
   \bigr) \underbrace{\ \cdots \ \cdots\ }_{\text{outer scope}}  
\]
Inner scopes may be disjoint, as in:
\[
    \N\ =\ \cdots\ 
   \bigl(\Let{\ X}{=\cdots\ \ }
        {\ \underbrace{\ \cdots\ X\ \cdots\ X\ \cdots\ }_{\text{inner scope of $X$}}}
   \bigr)\ \cdots
   \ \bigl(\Let{\ Y}{=\cdots\ \ }
        {\ \underbrace{\ \cdots\ Y\ \cdots\ Y\ \cdots\ }_{\text{inner scope of $Y$}}}
   \bigr)\ \cdots 
\]
and they may be nested, as in:
\[
    \N\ =\ \cdots
        \ \Bigl(\Let{X}{=\cdots\ \ }{\underbrace{
          \ \cdots\ X\ \cdots\ \bigl(\Let{Y}{=\cdots\ \ }
      {\ \underbrace{\ \cdots Y\cdots X\cdots Y\cdots}_{\text{inner scope of $Y$}}}
      \bigr)\ \cdots\ X\ \cdots}_{\text{inner scope of $X$}}}\Bigr)\ \cdots
\]
We need to distinguish the arcs of the different copies of
the same hole $X$ \emph{within the inner scope of $X$}. Thus, if we use
$k\geqslant 2$ copies of $X$ within the same scope, we rename their arcs so that
each copy has its own set of arcs. We write
$\rename{X}{1},\ldots,\rename{X}{k}$ to refer to these $k$ copies of
$X$. However, we do \emph{not} rename the corresponding
binding occurrence of $X$. Thus, the two last of the three schematic
representations above should be written as:
\begin{alignat*}{3}
    &\N\ &&=\ &&\cdots\ 
   \bigl(\Let{\ X}{=\cdots\ \ }
        {\ \underbrace{\ \cdots\ \rename{X}{1}
          \ \cdots\ \rename{X}{2}\ \cdots\ }}
   \bigr)\ \cdots
   \ \bigl(\Let{\ Y}{=\cdots\ \ }
        {\ \underbrace{\ \cdots\ \rename{Y}{1}
          \ \cdots\ \rename{Y}{2}\ \cdots\ }}
   \bigr)\ \cdots 
\\
   &\N\ &&=\ &&\cdots
        \ \Bigl(\Let{X}{=\cdots\ \ }{\underbrace{
          \ \cdots\ \rename{X}{1}\ \cdots\ \bigl(\Let{Y}
          {=\cdots\ \ }{\underbrace{\cdots \rename{Y}{1}
               \cdots \rename{X}{2}\cdots \rename{Y}{2}\cdots}}\bigr)
          \ \cdots\ \rename{X}{3}\ \cdots}}\Bigr)\ \cdots
\end{alignat*}
As we also keep track of the fact that
$\rename{X}{1},\ldots,\rename{X}{k}$ are all copies of $X$, there will
be no ambiguity about which holes in $\N$ this binding occurrence of
$X$ refers to.

In addition to the preceding, the \emph{unique arc-naming} condition
requires that, if a network specification $\N$ mentions
$k\geqslant 2$ copies of the same small network $\A$, then
each copy has its own separate set of arc names. Put differently, $\N$
mentions a small network $\A$ at most once, though it may mention
several other small networks that are all isomorphic to $\A$.

\subsubsection*{One binding-occurrence for every hole $X$}
For well-formedness we also require that, for every hole $X$, there is
at most one let-binding for $X$, \ie, there is at most one binding
occurrence of $X$. This condition disallows specifications $\N$ that
are of the form:
\[
    \N\ =\ \cdots\ 
   \bigl(\Let{\ X}{=\cdots\ \ }
        {\ \underbrace{\ \cdots\ X\ \cdots\ X\ \cdots\ }}
   \bigr)\ \cdots
   \ \bigl(\Let{\ X}{=\cdots\ \ }
        {\ \underbrace{\ \cdots\ X\ \cdots\ X\ \cdots\ }}
   \bigr)\ \cdots 
\]
where there are two let-bindings of $X$ for two disjoint
scopes. And it disallows specifications $\N$ of the form:
\[
    \N\ =\ \cdots
        \ \Bigl(\Let{X}{=\cdots\ \ }{\underbrace{
          \ \cdots\ X\ \cdots\ \bigl(\Let{X}{=\cdots\ \ }
          {\ \underbrace{\ \cdots X\cdots X\cdots}}\bigr)
          \ \cdots\ X\ \cdots}}\Bigr)\ \cdots
\]
where there are two let-bindings of $X$ for two nested
scopes.

We are mostly interested in analyzing \emph{closed} network
specifications and determining their safety properties. Observe that,
for a closed network specification $\N$, the \emph{one
binding-occurrence} condition disallows the presence of
subexpressions in $\N$ of the form:
\[
    \cdots\ \bigl(\Let{\ X}{=\cdots\ \mathop{X}_{\bm\uparrow}^{} 
        \ \cdots\ \ }{\ \underbrace{\ \cdots\ X\ \cdots\ X\ \cdots\ }}
   \bigr)\ \cdots 
\]
where the $X$ indicated by the upward arrow is outside the inner scope of 
the binding occurrence of $X$.

%%%
\Hide{
\begin{lemma}
\label{lem:at-least-one-small-network}
Let $\N$ be a well-formed network specification. If $\N$ is closed,
then $\N$ mentions at least one small network.
\end{lemma}
\begin{proof}
This is straightforward by induction on the nesting depth of
let-bindings.  If there are no let-bindings in $\N$, then $\N$
mentions no holes, if $\N$ is closed, and must therefore 
mention at least one small network. Proceeding inductively, suppose
the lemma is true for nesting depth $k\geqslant 0$. Consider some $\N$
whose nesting depth is $(k+1)$. This means $\N$ contains a
subexpression of the form:
\[
  \Bigl(\cdots
  \ \bigl(\Let{X}{\in\Set{{\M}_1,\ldots,{\M}_n}}{\N'}\bigr) % {{\M}_{n+1}}\bigr)
  \ \cdots\Bigr)
\] 
whose nesting depth is $(k+1)$, where the surrounding context (denoted
by the ellipses ``$\cdots$'') contains no let-binding.  All the
subexpressions in $\Set{{\M}_1,\ldots,{\M}_n, \N'}$ must have nesting
depth $\leqslant k$. By well-formedness, $X$ is not mentioned by
any of $\Set{{\M}_1,\ldots,{\M}_n}$, implying they are all closed and,
by the induction hypothesis, they all mention at least one small 
network. 
\end{proof}
}
%%%
\Hide{
\begin{remark}
Of the three conditions for well-formedness, only the
\emph{matching-dimensions} is essential for setting up the topology
correctly of large networks from their smaller parts.  

The other two conditions, the \emph{unique arc-naming} and the
\emph{one binding-occurrence}, are introduced for the purposes of the
typing theory later; and of these two, the \emph{one
  binding-occurrence} can be omitted, but at the cost of unduly
complicating things.
\end{remark}}

\Hide{\begin{remark}
\label{rem:normal-form}
Formal expressions specifying networks are not meant to be
``executed'', in the sense that we do not try to ``reduce'' every
subexpression of the form
``$\Let{X}{=\M}{\N}$\!'' -- according to
some rewriting rules for example -- which will substitute $\M$
for every occurrence of $X$ in $\N$. Our formal language is for
designing and modeling large networks, with interchangeable parts,
according to some desirable invariant properties (formalized by the
typings).

It is of course possible to define the notion of a network
specification in \emph{normal form}, one in which every let-binding
has been reduced. But this raises several technical complications.  
For example, because let-bindings can be nested to any depth, the 
reduction of let-bindings leads to an exponential explosion in the 
size of normal forms.

Moreover, reduction of a let-binding
``$\Let{X}{=\M}{\N}$\!'' violates the
\emph{unique arc-naming} condition, whenever $X$ occurs more than once
in $\N$. It also violates the \emph{one
  binding-occurrence} condition, whenever the $\M$ to be substituted for
$X$ contains itself a let-binding and $X$ occurs more than once in 
$\N$. Reduction rules can be modified so
these two conditions are not violated, but again at the price of
unnecessary complications.
\end{remark}}
}

\begin{example}
\label{ex:illustrate-inductive-def}
We illustrate several of the notions introduced so far. We use one
hole $X$, and 4 small networks: \textbf{\textsf{F}} (``fork''),
\textbf{\textsf{M}} (``merge''), $\A$, and $\B$. These will be used
again in later examples.  We do not assign lower-bound and upper-bound
capacities to the arcs of \textbf{\textsf{F}}, \textbf{\textsf{M}},
$\A$, and $\B$ -- the arcs of holes are never assigned capacities --
because they play no role before our typing theory is introduced.
Graphic representations of \textbf{\textsf{F}}, \textbf{\textsf{M}}, 
and $X$ are shown in Figure~\ref{fig:fork-merge-hole}, and of $\A$ and $\B$ in
Figure~\ref{fig:six-and-eight-naked}. A possible network specification
$\N$ with two bound occurrences of $X$ may read as follows:
\[
   \N\ =\ \ \Let{X}{\ \in\Set{\A,\B}\ }{
   \ \ \Conn{\theta_1}{\ \textbf{\textsf{F}}}
               {\ \Conn{\theta_2}{\ \rename{X}{1}}
                 {\ \Conn{\theta_3}{\ \rename{X}{2}}{\ \textbf{\textsf{M}}}}}}
\]
where $\theta_1 = \Set{\Angles{c_2,\rename{e_1}{1}},\Angles{c_3,\rename{e_2}{1}}}$,
$\theta_2 = \Set{\Angles{\rename{e_3}{1},\rename{e_1}{2}},
 \Angles{\rename{e_4}{1},\rename{e_2}{2}}}$, and 
$\theta_3 = \Set{\Angles{\rename{e_3}{2},d_1},
                       \Angles{\rename{e_4}{2},d_2}}$.
We wrote $\N$ above using some of the derived constructors
introduced in Section~\ref{sect:derived-constructors}. Note that:
% We can write $\N$ even more succintly by noting that:
\begin{itemize}
\item all the output arcs $\Set{c_2,c_3}$ of \textbf{\textsf{F}}
      are connected to all the input arcs 
      $\Set{\rename{e_1}{1},\rename{e_2}{1}}$ of $\rename{X}{1}$,
\item all the output arcs $\Set{\rename{e_3}{1},\rename{e_4}{1}}$ of 
      $\rename{X}{1}$ are connected to all the input arcs 
      $\Set{\rename{e_1}{2},\rename{e_2}{2}}$ of $\rename{X}{2}$,
\item all the output arcs $\Set{\rename{e_3}{2},\rename{e_4}{2}}$ of 
      $\rename{X}{2}$ are connected to all the input arcs 
      $\Set{d_1,d_2}$ of \textbf{\textsf{M}},
\end{itemize}
Hence, according to Section~\ref{sect:derived-constructors}, we can write
more simply:
\[
  \N\ =\ \ \Let{X}{\ \in\Set{\A,\B}\ }
    {\ \Bigl(   
       \ConnN{\textbf{\textsf{F}}}
         {\ConnN{\rename{X}{1}}
            {\ConnN{\rename{X}{2}}{\textbf{\textsf{M}}}}}   \Bigr)\ }
\]
with now $\inn{\N} = \Set{c_1}$ and $\out{\N} = \Set{d_3}$.
The specification $\N$ says that $\A$ or 
$\B$ can be selected for insertion
wherever hole $X$ occurs. 
Though we do not define the reduction of $\textbf{\textsf{let-in}}$-bindings 
formally, $\N$ can be viewed as representing two different
network configurations:
\[
   \N_1\ =\ \ \ConnN{\textbf{\textsf{F}}}
                     {\ConnN{\rename{\A}{1}}
                         {\ConnN{\rename{\A}{2}}{\textbf{\textsf{M}}}}} 
   \quad\text{and}\quad
   \N_2\ =\ \ \ConnN{\textbf{\textsf{F}}}
                     {\ConnN{\rename{\B}{1}}
                         {\ConnN{\rename{\B}{2}}{\textbf{\textsf{M}}}}} 
\]
We can say nothing here
about properties, such as safety, being satisfied or violated 
by these two configurations. The semantics of our
$\textbf{\textsf{let-in}}$ constructor later will be equivalent to
requiring that both configurations be ``safe'' to use.
By contrast, the  constructor $\textbf{\textsf{try-in}}$
mentioned in Remark~\ref{rem:other-derived-constructors}
requires only ${\N}_1$ or ${\N}_2$, but not necessarily both,
to be safe, and the constructor $\textbf{\textsf{mix-in}}$
additionally requires:
\[
   \N_3\ =\ \ \ConnN{\textbf{\textsf{F}}}
                     {\ConnN{\rename{\A}{1}}
                         {\ConnN{\rename{\B}{2}}{\textbf{\textsf{M}}}}} 
   \quad\text{and}\quad
   \N_4\ =\ \ \ConnN{\textbf{\textsf{F}}}
                     {\ConnN{\rename{\B}{1}}
                         {\ConnN{\rename{\A}{2}}{\textbf{\textsf{M}}}}} 
\]
to be safe. Safe substitution into holes according to 
$\textbf{\textsf{mix-in}}$ implies safe substitution according 
to $\textbf{\textsf{let-in}}$, which in turn implies safe substitution
according to $\textbf{\textsf{try-in}}$.
\end{example}

% \vspace*{-.18in}

\begin{figure}[!ht] % [ht]  
        % h  Place the float here, i.e., 
        %    at the same point it occurs in the source text.
        % t  Position at the top of the page.
        % b  Position at the bottom of the page.
        % p  Put on a special page for floats only.
        % !  Override internal parameters Latex uses 
        %    for determining `good' float positions.
\begin{center}
\hspace*{.2in}
\begin{minipage}[b]{0.3\linewidth}
\includegraphics[scale=.22,trim=0cm 15.90cm 6.0cm 0cm,clip]{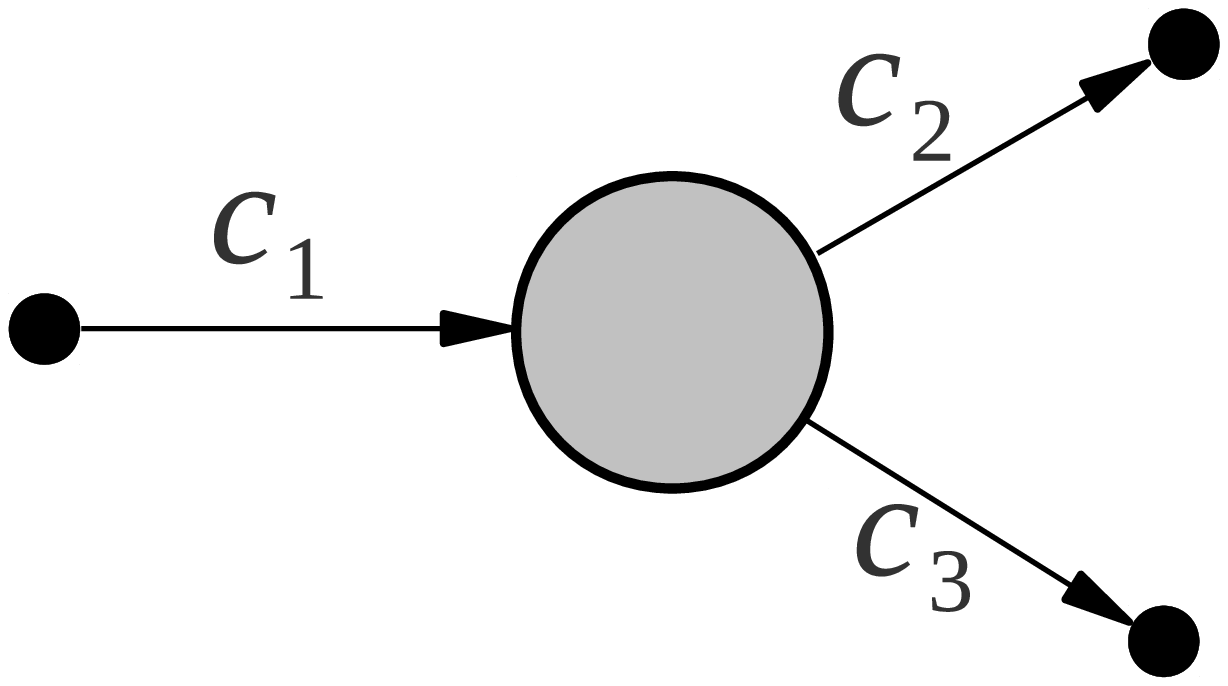}
\end{minipage}
\begin{minipage}[b]{0.3\linewidth}
\includegraphics[scale=.22,trim=0cm 15.90cm 2cm 0cm,clip]{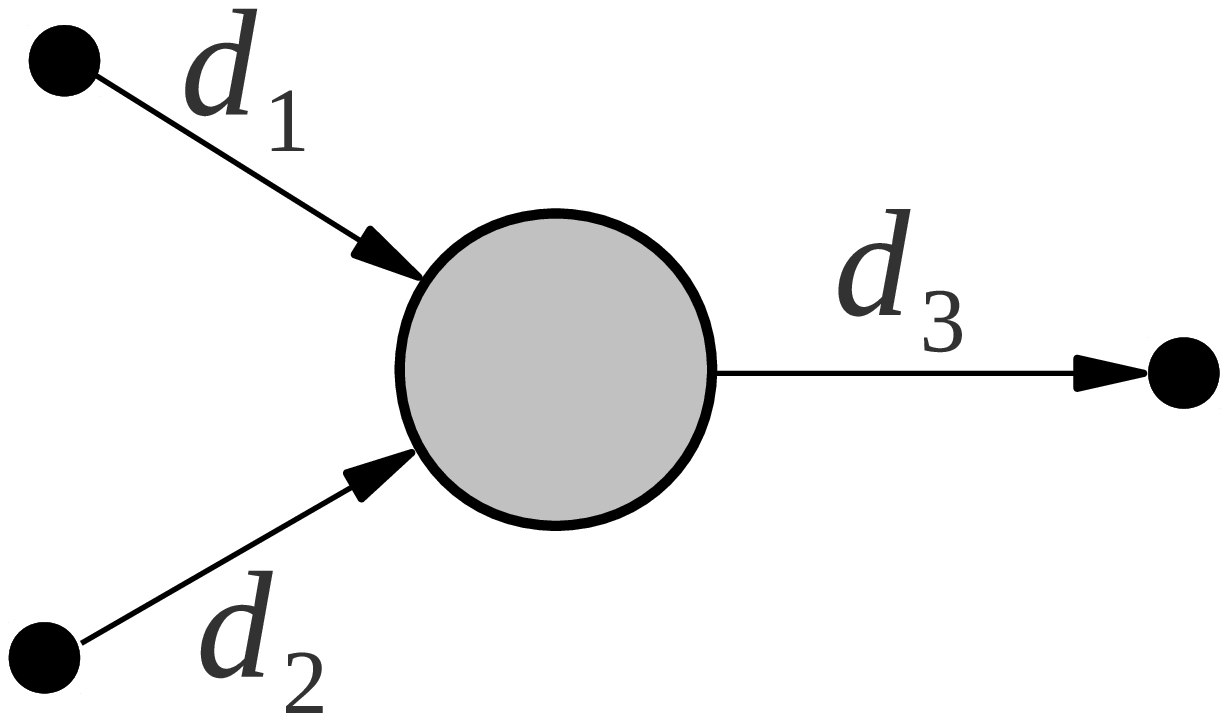}
\end{minipage}
\begin{minipage}[b]{0.3\linewidth}
\includegraphics[scale=.22,trim=0.7cm 13.50cm 0cm 3.5cm,clip]{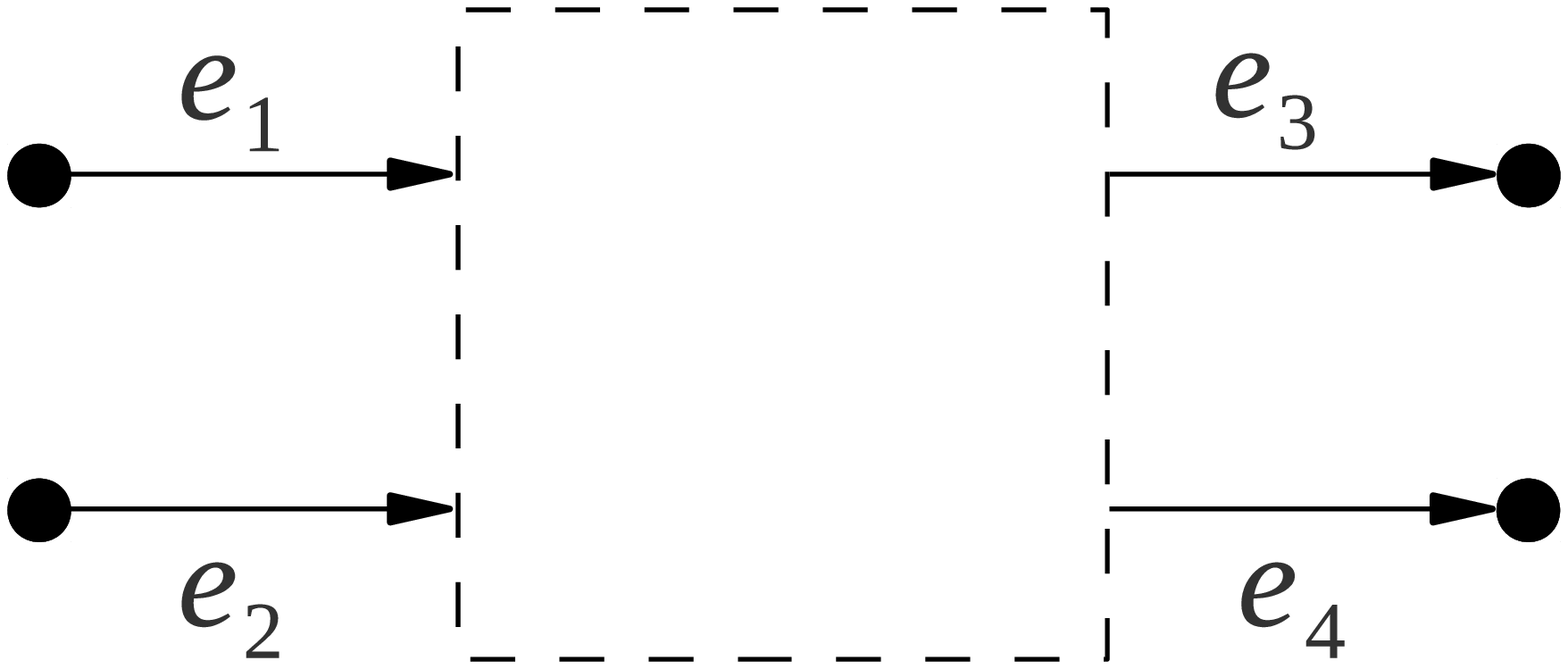}
\end{minipage}
\caption{Small network \textbf{\textsf{F}} (on the left), 
         small network \textbf{\textsf{M}} (in the middle), 
         and hole $X$ (on the right),
         in Example~\ref{ex:illustrate-inductive-def}.} %
\label{fig:fork-merge-hole}
\end{center}
\end{figure}

\vspace*{-.25in}
\begin{figure}[!ht] % [ht]  
        % h  Place the float here, i.e., 
        %    at the same point it occurs in the source text.
        % t  Position at the top of the page.
        % b  Position at the bottom of the page.
        % p  Put on a special page for floats only.
        % !  Override internal parameters Latex uses 
        %    for determining `good' float positions.
\begin{center}
\hspace*{.2in}
\begin{minipage}[b]{0.45\linewidth}
\includegraphics[scale=.3,trim=0cm 13.50cm 0cm 0cm,clip]{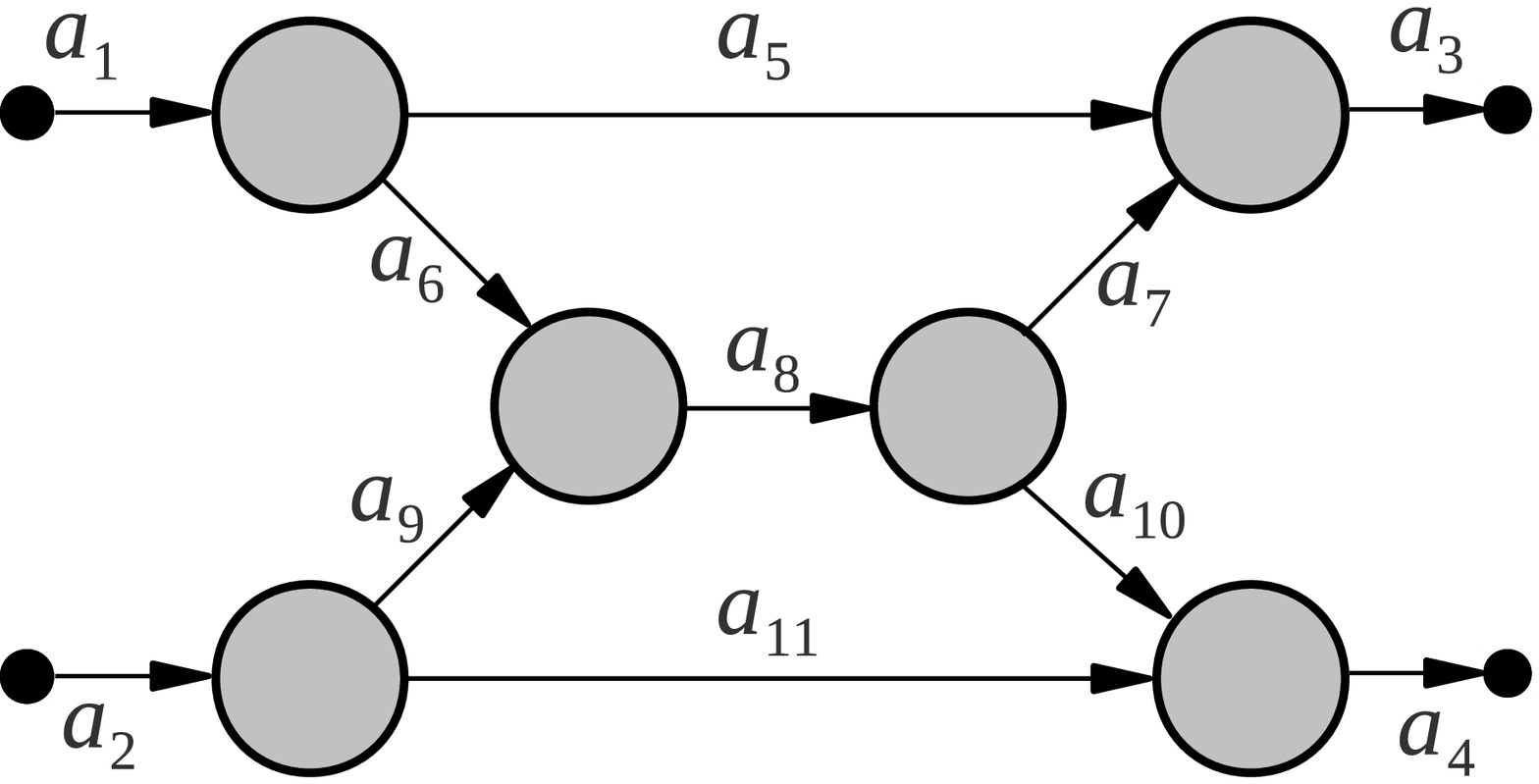}
\end{minipage}
\begin{minipage}[b]{0.45\linewidth}
\includegraphics[scale=.3,trim=0cm 11.0cm 0cm 0cm,clip]{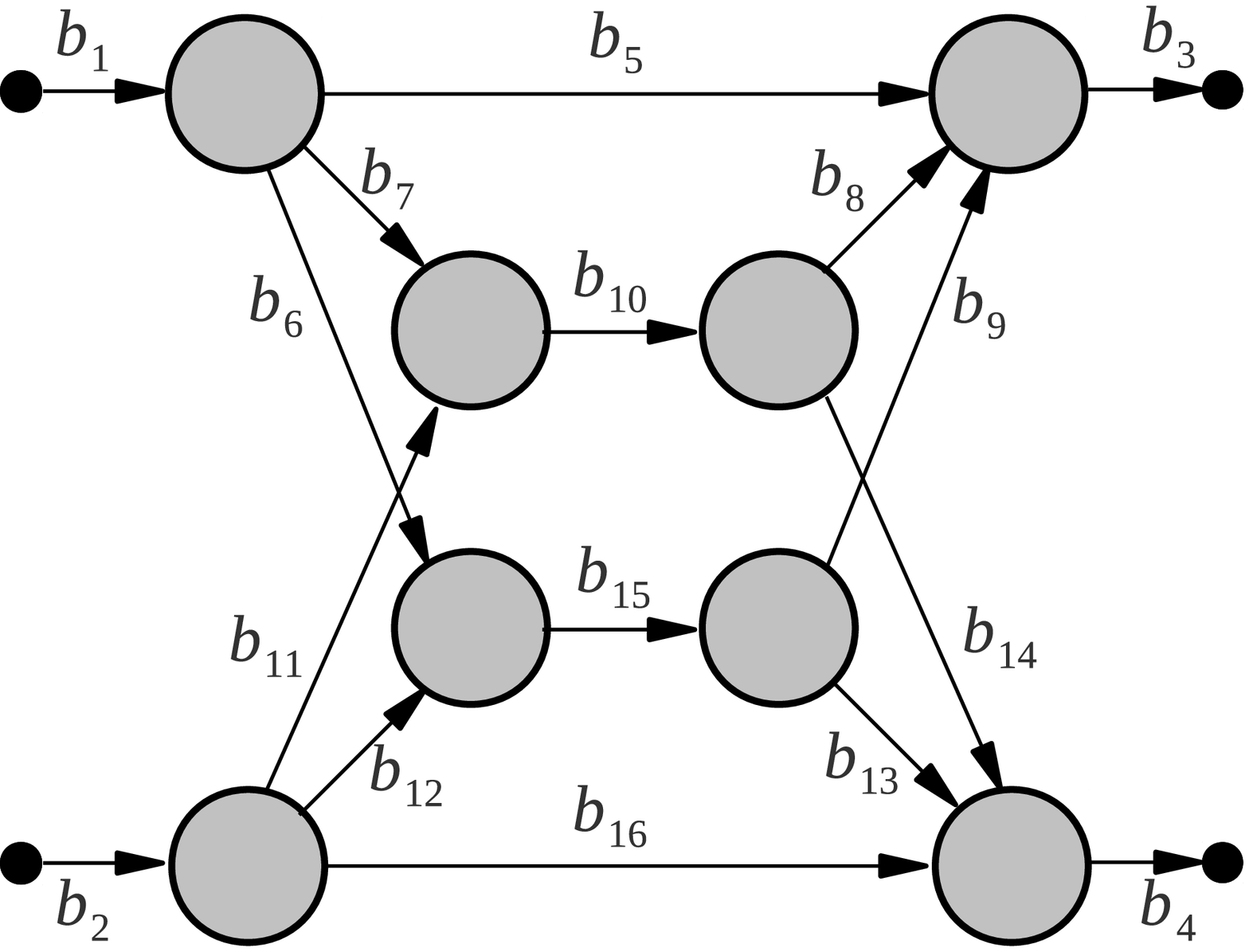}
\end{minipage}
\caption{Small networks $\A$ (on the left) and $\B$ (on the right)
         in Example~\ref{ex:illustrate-inductive-def}.} %
\label{fig:six-and-eight-naked}
\end{center}
\end{figure}

\vspace*{-.12in}

% \vspace*{-.05in}

\section{Formal Semantics of Flow Networks}
\label{sect:flows}
\label{sect:semantics}
%% semantics.tex

The preceding section explained what we need to write to specify a
network formally. Let $\N$ be such a network specification. By
well-formedness, every small network $\A$ appearing in $\N$ has its
own separate set of arc names, and every bound occurence
$\rename{X}{i}$ of a hole $X$ also has its own separate set of arc
names, where $i\geqslant 1$ is a renaming index. (Renaming indeces 
are defined in Section~\ref{sect:well-formedness}.)
With every small network $\A$, we associate two sets of functions,
its \emph{full semantics} $\fullSem{\A}$ and its \emph{IO-semantics} 
$\ioSem{\A}$. Let $\aaa_{\text{in}}=\inn{\A}$, $\aaa_{\text{out}}=\out{\A}$,
and $\aaa_{\text{\#}}=\inter{\A}$. 
The sets $\fullSem{\A}$ and $\ioSem{\A}$ are defined thus:
\begin{alignat*}{3}
  &\fullSem{\A}\ &&= 
   \ &&\Set{\,f:\aaa_{\text{in}}\uplus\aaa_{\text{out}}\uplus\aaa_{\text{\#}}
        \to\nreals\;|\;\text{$f$ is a feasible flow in $\A$}\,}
\\
  &\ioSem{\A}\ &&=
   \ &&\Set{\,f:\aaa_{\text{in}}\uplus\aaa_{\text{out}}\to\nreals\;|
   \;\text{$f$ can be extended to a feasible flow $f'$ in $\A$}\,}
\end{alignat*}
% Both $\fullSem{\A}$ and $\ioSem{\A}$ are uniquely defined.  
%
Let $X$ be a hole, with $\inn{X} = \aaa_{\text{in}}$ and
$\out{X} = \aaa_{\text{out}}$. The \emph{full semantics} $\fullSem{X}$ and 
the \emph{IO-semantics} $\ioSem{X}$ are the same set of functions:
\[  
   \fullSem{X}\ =\ \ioSem{X}\ \subseteq
   \ \Set{ f:\aaa_{\text{in}}\uplus\aaa_{\text{out}}\to\nreals\;|
   \; \text{$f$ is a bounded function}\,}
\]
This definition of $\fullSem{X} = \ioSem{X}$ is ambiguous: In contrast 
to the uniquely defined full semantics and IO-semantics of a small network $\A$, 
there are infinitely many $\fullSem{X} = \ioSem{X}$ for the same $X$,
but exactly one (possibly $\fullSem{X} = \ioSem{X} = \varnothing$)
will satisfy the requirement in clause 4 below. 
 
Starting from the full semantics of small networks and holes, we
define by induction the full semantics $\fullSem{\N}$ of a network
specification $\N$ in general. In a similar way, we
can define the IO-semantics $\ioSem{\N}$ of $\N$ by induction,
starting from the IO-semantics of small networks and holes.  For
conciseness, we define $\fullSem{\N}$ separately first, and
then define $\ioSem{\N}$ from $\fullSem{\N}$. We need a few
preliminary notions.
Let $\M$ be a network specification. By
our convention of listing all input arcs first, all output arcs second,
and all internal arcs third, let: 
\[
   \inn{\M} = \Set{a_1,\ldots,a_k},
   \quad
   \out{\M} = \Set{a_{k+1},\ldots,a_{k+\ell}}, 
   \quad\text{and}
   \ \ \inter{\M} = \Set{a_{k+\ell+1},\ldots,a_{k+\ell+m}}.
\]
If $f\in\fullSem{\M}$ with $f(a_1) = r_1,\ldots,f(a_{k+\ell+m}) = r_{k+\ell+m}$,  
we may represent $f$ by the sequence
$\Angles{r_1,\ldots,r_{k+\ell+m}}$. We may therefore represent:
\begin{itemize}
\item $\rest{f}{\inn{\M}}$ by the sequence $\Angles{r_1,\ldots,r_k}$,  
\item $\rest{f}{\out{\M}}$ by the sequence 
      $\Angles{r_{k+1},\ldots,r_{k+\ell}}$, and
\item $\rest{f}{\inter{\M}}$ by the sequence 
      $\Angles{r_{k+\ell+1},\ldots,r_{k+\ell+m}}$,
\end{itemize}
where $\rest{f}{\inn{\M}}$, $\rest{f}{\out{\M}}$, and
$\rest{f}{\inter{\M}}$, are the restrictions of $f$
to the subsets $\inn{\M}$, $\out{\M}$, and $\inter{\M}$, of its
domain. Let $\N$ be another network specification and
$g\in\fullSem{\N}$. We define $\ConnPT{f}{g}$ as follows:
\[
   (\ConnPT{f}{g}) = 
   \rest{f}{\inn{\M}}\cdot\rest{g}{\inn{\N}}\cdot
   \rest{f}{\out{\M}}\cdot\rest{g}{\out{\N}}\cdot
   \rest{f}{\inter{\M}}\cdot\rest{g}{\inter{\N}} 
\]
where ``$\cdot$'' is sequence concatenation. 
The operation ``$\ConnPT{\!}{\!}$'' on flows is associative, but not
commutative, just as the related constructor ``$\ConnP{\!}{\!}$'' on
network specifications.
%
% \medskip
We define the full semantics $\fullSem{\M}$ for every subexpression
$\M$ of $\N$, by induction on the structure of the specification $\N$:
\begin{enumerate}
\item If $\M = \A$, then $\fullSem{\M} = \fullSem{\A}$.
\item If $\M = \rename{X}{i}$, then 
      $\fullSem{\M} = \rename{\fullSem{X}}{i\,}$.
\item If $\M = \bigl(\ConnP{\PP_1}{\PP_2}\bigr)$,
      then $\fullSem{\M}\ = 
      \ \SET{\,(\ConnPT{f_1}{f_2})\;\bigl|\;f_1\in\fullSem{\PP_1}\text{ and }
      f_2\in\fullSem{\PP_2}\,}$.
\item If $\M = \bigl(\Let{X}{=\PP}{{\PP}'}\bigr)$,
      then $\fullSem{\M} = \fullSem{{\PP}'}$, 
      provided two conditions:%
        \footnote{\label{crucial-foot}``$\dimIO{X} \approx \dimIO{\PP}$''
          means the number of input arcs and their ordering (or input dimension)
          \emph{and} the number of output arcs and their ordering (or output dimension)
          of $X$ match those of $\PP$, up to arc renaming (or dimension renaming).
          Similarly, ``$\fullSem{X} \approx \Set{\rest{g}{A}|g\in\fullSem{\PP}}$''
          means for every $f:\inn{X}\uplus\out{X}\to\nreals$, it holds that
          $f\in\fullSem{X}$ iff there is $g\in\fullSem{\PP}$
          such that $f \approx \rest{g}{A}$, where $\rest{g}{A}$ is the
          restriction of $g$ to the subset $A$ of its domain.}
      \begin{enumerate}
      \item
      $\dimIO{X}\ \approx\ \dimIO{\PP}$, 
      \item 
      $\fullSem{X}\ \approx\ \Set{\,\rest{g}{A}\;|\;g\in\fullSem{\PP}\,}$
       where $A = \inn{\PP}\cup\out{\PP}$.
      \end{enumerate}
\item If $\M = \Loop{\Angles{a,b}}{\PP}$, then 
      $\fullSem{\M}\ =\ \SET{\,f\;\bigl|\;f\in\fullSem{\PP}
                        \text{ and }f(a) = f(b)\,}$.
\end{enumerate}
All of $\N$ is a special case of a subexpression of $\N$, so that a
the semantics of $\N$ is simply $\fullSem{\N}$.  Note, in clause 2, that
all bound occurrences $\rename{X}{i}$ of the same hole $X$ are
assigned the same semantics $\fullSem{X}$, up to renaming of arc
names. We can now define the IO-semantics of $\N$ as follows:
\[
   \ioSem{\N}\ =\ \SET{\,\rest{f}{A}\;\bigl|\; f\in \fullSem{\N}\,}
\]
where $A = \inn{\N}\cup\out{\N}$ and $\rest{f}{A}$ is the 
restriction of $f$ to $A$. 

% \Hide
{
\begin{remark}
\label{rem:capacities}
For every small network $\A$ appearing in a network specification
$\N$, the lower-bound and upper-bound functions, $L_{\A}$ and
$U_{\A}$, are already defined. The lower-bound and upper-bound for all
of $\N$, denoted $L_{\N}$ and $U_{\N}$, are then assembled from those
for all the small networks. However, we do not need to explicitly
define $L_{\N}$ and $U_{\N}$ at every step of the inductive definition
of $\N$.

In clause 4, the lower-bound and upper-bound capacities on
an input/output arc $a$ of the hole $X$ are determined by those on
the corresponding arc, say $a'$, in ${\PP}$. Specifically,
$L_{X}(a)=L_{\PP}(a')$ and $U_{X}(a)=U_{\PP}(a')$. 
In clause 5, the lower-bound and upper-bound are implicitly set. 
Specifically, consider output arc $a$ and input arc $b$ in $\PP$,
with $L_{\PP}$ and $U_{\PP}$ already defined on $a$ and $b$. If
$\M = \Loop{\Angles{a,b}}{\PP}$, then:
\begin{alignat*}{3}
   &L_{\M}(a)\ &&=\ &&\max\ \Set{L_{\PP}(a),L_{\PP}(b)} 
\\
   &U_{\M}(a)\ &&=\ &&\min\ \Set{U_{\PP}(a),U_{\PP}(b)}  
\end{alignat*}
which are implied by the requirement that $f(a) = f(b)$. In $\M$, arc
$a$ is now internal and arc $b$ is altogether omitted.  On all the arcs
other than $a$, $L_{\M}$ and $U_{\M}$ are identical to $L_{\PP}$ and
$U_{\PP}$, respectively.  
\end{remark}
}

\Hide
{
\begin{remark}
\label{rem:special-cases}
We do not disallow the possibility that $\fullSem{\N} = \varnothing$,
which happens when there are no feasible flows in $\N$. For example,
if $\N$ mentions only one small network $\A$ and there are no 
feasible flows in $\A$, then it must be that
$\fullSem{\A} = \fullSem{\N} = \varnothing$, which also implies
$\ioSem{\A} = \varnothing$.

Another possibility is $\ioSem{\N} = \Set{\varnothing}$, different
from the preceding, the result of inductively defining a
\emph{totally closed} $\N$, \ie, $\N$ has no input arcs and no output
arcs. In such a case, there are \emph{internal} feasible flows only.

There are other special cases, when $\inn{\N} = \varnothing$ or
$\out{\N} = \varnothing$, but not both. For example, if $\inn{\N} =
\Set{a_1,\ldots,a_k}$ for some
$k\geqslant 1$ and $\out{\N} = \varnothing$, 
then either $\ioSem{\N} = \varnothing$ (there are no
feasible flows in $\N$) or $\ioSem{\N} = \Set{ f }$ with 
$f(a_1) =\cdots= f(a_k) = 0$.
\end{remark}
}

\begin{remark}
\label{rem:invariance-of-semantics}
We can define rewrite rules on network
specifications in order to reduce each into an equivalent finite set
of network specifications in \emph{normal form}, a normal form
being free of \textbf{\textsf{try-in}} bindings. We can do this so that the
formal semantics of network specifications are an \emph{invariant} of
this rewriting. This establishes the \emph{soundness} of
the \emph{operational semantics} (represented by
the rewrite rules) of our DSL relative to the formal semantics defined
above. We avoid formulating and presenting such rewriting rules in
this report, for reasons alluded to in the Introduction and again
in the last section.
\end{remark}

\Hide
{
\begin{remark}
\label{rem:canonical-form}
Intermediate forms, between network specifications in general
and their \emph{normal forms}, are what we may call
\emph{canonical forms}. We define the latter in a more structured
way so that they will always appear as:
\begin{alignat*}{1}
  &\Let{X_1}{\in\Set{{\M}_{1,1},\ldots,{\M}_{1,k_1}}\ \ }{}
\\
  &\Let{X_2}{\in\Set{{\M}_{2,1},\ldots,{\M}_{2,k_2}}\ \ }{}
\\
  &\qquad \vdots
\\
  &\Let{X_{\ell}}{\in\Set{{\M}_{\ell,1},\ldots,{\M}_{\ell,k_{\ell}}}\ \ }
  {\ \ \Loop{\ \theta}{\ConnP{\PP_1}{\ConnP{\PP_2}{\cdots\ConnP{}{\PP_m}}}}}
\end{alignat*}
where ${\M}_{i,j}$ is in canonical form, for every $1\leqslant
i\leqslant\ell$ and $1\leqslant j\leqslant k_i$, and every member of
$\Set{\PP_1,\PP_2,\ldots,\PP_m}$ is a small network or a hole. As used
here $\textbf{\textsf{let}}$ and $\textbf{\textsf{bind}}$ are the
derived constructors defined in
Section~\ref{sect:derived-constructors}. We can show that every
network specification $\N$ can be uniquely transformed, via rewriting
rules omitted here, into such a canonical form ${\N}'$ without incurring an
exponential explosion in size, as will generally happen when
transforming into normal form. ${\N}'$ is essentially obtained from
${\N}$ by re-arranging the order in which constructors are used.
Moreover, we can do the transformation so that $\fullSem{\N} =
\fullSem{{\N}'}$, thus proving its soundness.

One benefit of canonical forms is to facilitate some of the 
proofs by structural inductions on $\N$.
% , such as that of Proposition~\ref{prop:crucial-link}. 
If $\N$ is in canonical
form and $\N$ is closed, then the induction can be directed so
that the full semantics $\fullSem{X}$ of a hole $X$ does not
need to be ``guessed'' when we define
$\fullSem{\Let{X}{=\PP}{{\PP}'}}$ from $\fullSem{\PP}$
and $\fullSem{{\PP}'}$: We first determine
$\fullSem{\PP}$, then directly define $\fullSem{X}$
from $\fullSem{\PP}$ by restricting members of the
latter to $\inn{\PP}\cup\out{\PP}$, and then substitute
renamed copies of $\fullSem{X}$ in the occurrences of $X$
in ${\PP}'$, before proceeding to determine $\fullSem{{\PP}'}$
without having to consider $\fullSem{X}$ as a base case
in the induction.
\end{remark}
}

\Hide
{
The following proposition establishes a crucial link between the
IO-semantics of a network specification $\N$ and its possible
typings, as introduced in later sections.

\begin{proposition}[IO-Semantics and Types]
\label{prop:crucial-link}
Let $\N$ be a closed network specification, with
$\aaa_{\text{\em in}} = \inn{\N}$ and $\aaa_{\text{\em out}} = \out{\N}$.
For every $\varnothing\neq A\subseteq \aaa_{\text{in}}\cup\aaa_{\text{out}}$,
define the two quantities:
\begin{alignat*}{3}
 &s_{\text{\em min}}(A)\ &&=\ &&\min\;
    \SET{\,\sum f(A\cap\aaa_{\text{\em in}}) - \sum f(A\cap\aaa_{\text{\em out}})
    \;\bigl|\; f\in\ioSem{\N}\,}
\\
 &s_{\text{\em max}}(A)\ &&=\ &&\max\;
    \SET{\,\sum f(A\cap\aaa_{\text{\em in}}) - \sum f(A\cap\aaa_{\text{\em out}})
    \;\bigl|\; f\in\ioSem{\N}\,}
\end{alignat*}
For every $t\in\Set{s_{\text{\em min}}(A),\ldots,s_{\text{\em max}}(A)}$, 
there is $f\in\ioSem{\N}$ such that 
$t = \sum f(A\cap\aaa_{\text{\em in}}) - \sum f(A\cap\aaa_{\text{\em out}})$. 
\end{proposition}

Later, $[s_{\text{min}}(A),s_{\text{max}}(A)]$ will be
the type/interval assigned by a principal typing of $\N$ to a set
$A$ of input and output arcs. In words, the proposition asserts
that every $t$ in the interval assigned to $A$ is assumed by
some feasible flow in $\N$, and that no $t$ outside this interval
is assumed by any feasible flow. Thus, the interval assigned to $A$
exactly includes all the values witnessed by feasible flows in $\N$ 
and no other values.

\smallskip
}
\Hide
{
\begin{sketch}
By the definition of $\ioSem{\N}$ from $\fullSem{\N}$, 
we can replace ``$\ioSem{\N}$'' by ``$\fullSem{\N}$''
throughout the statement of the proposition. The proof is by induction
on the definition $\fullSem{\N}$. To facilitate the induction,
we put $\N$ in canonical form, according to Remark~\ref{rem:canonical-form},
so that the base case in the induction is limited to $\fullSem{\A}$
for all the small networks $\A$ in $\N$, with no need to consider
$\fullSem{X}$ for the holes $X$ occurring in $\N$.
\end{sketch}
}

% \subsection{Flow Conservation, Capacity Constraints, 
% Type Satisfaction (Continued)}
\paragraph{Flow Conservation, Capacity Constraints, 
Type Satisfaction (Continued).}
\label{sect:flow-conservation-continued}
The fundamental concepts stated in relation to small networks
$\A$ in Definitions~\ref{def:flow-conservation},
\ref{def:capacity-constraints}, and~\ref{def:feasible-flows},
are extended to arbitrary network specifications $\N$. These are
stated as ``properties'' (not ``definitions'') because they apply
to $\fullSem{\N}$ (not to $\N$), and $\fullSem{\N}$ is built up 
inductively from $\Set{\fullSem{\A}\,|\,\text{$\A$ occurs in $\N$}}$.

\begin{property}[Flow Conservation -- Continued]
\label{def:flow-conservation-bis}
\label{prop:flow-conservation-bis}
The nodes of $\N$ are all the nodes in
the small networks occurring in $\N$, because our DSL  
in Section~\ref{sect:inductive} does not introduce new nodes
beyond those in the small networks. Hence, $\fullSem{\N}$ satisfies 
\emph{flow conservation} because, for every small network $\A$ in 
$\N$, every $f\in\fullSem{\A}$ satisfies flow conservation at 
every node, \ie, the equation in~(\ref{one}) in
Definition~\ref{def:flow-conservation}.
\end{property}

\begin{property}[Capacity Constraints -- Continued] 
\label{def:capacity-constraints-bis}
\label{prop:capacity-constraints-bis}
The arcs introduced by our DSL,
beyond the arcs in the small networks,
are the input/output arcs of the holes. Lower-bound and
upper-bound capacities on the latter arcs are set in order
not to conflict with those already defined on the input/output
arcs of small networks. Hence, $\fullSem{\N}$ satisfies
\emph{the capacity constraints} because, for every small network $\A$
in $\N$, every $f\in\fullSem{\A}$ satisfies the capacity
constraints on every arc, \ie, the inequalities in~(\ref{two})
in Definition~\ref{def:capacity-constraints}.
\end{property}

However, stressing the obvious, even if $\fullSem{\A}\neq\varnothing$
for every small network $\A$ in $\N$, it may still be that $\N$ is unsafe
to use, \ie, it may still be that there is no feasible flow in $\N$ because 
$\fullSem{\N} = \varnothing$. We use the type system 
(Section~\ref{sect:typing-rules}) to reject unsafe network 
specifications $\N$. 

\begin{definition}{Type Satisfaction -- Continued}
\label{def:type-satisfaction-bis}
Let $\N$ be a network, with
$\aaa_{\text{in}} = \inn{\N}$, $\aaa_{\text{out}} = \out{\N}$,
and $\aaa_{\text{\#}} = \inter{\N}$.
A typing $T$ for $\N$, also denoted $(\N:T)$, is a function
\[
   T:\power{\aaa_{\text{in}}\cup\aaa_{\text{out}}}\to\reals\times\reals
\]
which may, or may not, be satisfied by $f\in\ioSem{\N}$
or by $f\in\fullSem{\N}$. We say
$f\in\ioSem{\N}$ or $f\in\fullSem{\N}$
\emph{satisfies} $T$ iff, for every
$A\subseteq {\aaa_{\text{in}}\cup\aaa_{\text{out}}}$ with 
$T(A) = [r,r']$, it is the case that:
\begin{equation}
\label{three-bis}
        r\ \leqslant\quad
        \sum\, f(A\cap\aaa_{\text{in}})\ -\ \sum\, f(A\cap\aaa_{\text{out}})
        \quad \leqslant \ r'
\end{equation}
The inequalities in~(\ref{three-bis}) extend those 
in~(\ref{three}) in Definition~\ref{def:type-satisfaction}
to network specifications in general. 
\Hide{
One special case not covered by~(\ref{three-bis}) is when
$\fullSem{\N} =\varnothing$, \ie, there are no feasible flows in $N$,
in which case also $\ioSem{\N}=\varnothing$.  An appropriate typing
$T$ for $\N$ in this case assigns the empty interval to every
$A\subseteq {\aaa_{\text{in}}\cup\aaa_{\text{out}}}$.
}
\Hide
{
We may restrict attention to the input arcs or to the output arcs, 
in which case we say $f\in\ioSem{\N}$ or $f\in\fullSem{\N}$ \emph{satisfies}
the typing $T$ \emph{at the input} or \emph{at the output}, respectively.
If the restriction is to the input, then for  
every $A\subseteq {\aaa_{\text{in}}}$ with $\inT{T}(A) = [r,r']$, we have:
\begin{align}
\label{four}
      & r \ \leqslant\quad
        \sum\, f(A)\quad 
        \leqslant \ r'
\end{align}
or if it is to the output, then for every $B\subseteq {\aaa_{\text{out}}}$ 
with $\outT{T}(B) = [s,s']$, we have:
\begin{align}
\label{five}
      & s \ \leqslant\quad
        -\sum\, f(B)\quad 
        \leqslant \ s'
\end{align}
If $f\in\ioSem{\N}$ or  $f\in\fullSem{\N}$
satisfies $T$ at the input, we may say $f$
\emph{satisfies} $\inT{T}$, and if at the output, we say $f$
\emph{satisfies} $\outT{T}$.
}
\end{definition}

\Hide
{
It is worth stressing that, while satisfaction of~(\ref{three-bis})
implies satisfaction of both~(\ref{four}) and~(\ref{five}), the
converse is not necessarily true: It may happen that $f$
satisfies~(\ref{four}) and~(\ref{five}) but not~(\ref{three-bis}). 
Example~\ref{ex:six-and-eight-node-networks} below illustrates this
point.
}

\Hide
{
\begin{remark}
\label{rem:commodities-bis}
If there is a set $K$ of several commodities, then~(\ref{three-bis}) 
must involve summations over $K$, as follows:
\begin{equation*}
% \label{three-bis}
        r\ \leqslant\quad
        \sum_{\kappa\in K}\sum\, f_{\kappa}(A\cap\aaa_{\text{in}})\ -
        \ \sum_{\kappa\in K}\sum\, f_{\kappa}(A\cap\aaa_{\text{out}})
        \quad \leqslant \ r'
\end{equation*}
where $f_{\kappa}$ is the flow of commodity $\kappa$. As mentioned already
(Remark~\ref{rem:commodities}), nothing essential is lost by
restricting attention to one commodity.
\end{remark}
}

% \vspace*{-.13in}
\section{Typings Are Polytopes}
\label{sect:notational}
\label{sect:typings-are-polytopes}
%% typings.tex

Let $\N$ be a network specification, 
and let $\aaa_{\text{in}} = \inn{\N}$ and $\aaa_{\text{out}} = \out{\N}$.
Let $T$ be a typing for $\N$ that assigns an interval $[r,r']$ to 
$A\subseteq\aaa_{\text{in}}\cup\aaa_{\text{out}}$.  Let
$\size{\aaa_{\text{in}}}+\size{\aaa_{\text{out}}} = m$, for some
$m\geqslant 0$.  As usual, there is a fixed ordering on the 
arcs in $\aaa_{\text{in}}$ and again on the arcs in $\aaa_{\text{out}}$. 
With no loss of generality, suppose:
\[
   A_1 = A\cap\aaa_{\text{in}} = \Set{a_1,\ldots,a_k}
   \quad\text{and}\quad
   A_2 = A\cap\aaa_{\text{out}} = \Set{a_{k+1},\ldots,a_\ell},
\]
where $\ell \leqslant m$.  
Instead of writing $T(A) = [r,r']$, we may write:
\[
     T(A):\quad a_1+\cdots+a_k-a_{k+1}-\cdots -a_\ell\ :\ [r,r']
\]
where the inserted polarities, $+$ or $-$, indicate whether the
arcs are input or output, respectively. 
A flow through the arcs
$\Set{a_1,\ldots,a_k}$ contributes a \emph{positive} quantity, and
through the arcs $\Set{a_{k+1},\ldots,a_\ell}$ a \emph{negative}
quantity, and these two quantities together should add up to a
value within the interval $[r,r']$.

A typing $T$ for
$\aaa_{\text{in}}\cup\aaa_{\text{out}}$ induces a \emph{polytope} (or
\emph{bounded polyhedron}), which we call $\poly{T}$, in the Euclidean
hyperspace ${\reals}^m$.
We think of the $m$ arcs in $\aaa_{\text{in}}\cup\aaa_{\text{out}}$ as
the $m$ dimensions of the space ${\reals}^m$.  
$\poly{T}$ is the non-empty intersection of at most $2\cdot(2^m -1)$ halfspaces,
because there are $(2^m -1)$ non-empty subsets in
$\power{\aaa_{\text{in}}\cup\aaa_{\text{out}}}$. The 
interval $[r,r']$, which $T$ assigns to such a subset
$A = \Set{a_1,\ldots,a_{\ell}}$ as above, induces two linear
inequalities in the variables $\Set{a_1,\ldots,a_\ell}$, denoted
$T_{\geqslant}(A)$ and $T_{\leqslant}(A)$:
\begin{equation}
\label{sixA}
\text{$T_{\geqslant}(A)$:}\quad 
   a_1+\cdots+a_k-a_{k+1}-\cdots -a_\ell \,\geqslant\,r
\qquad\text{and}\qquad
\text{$T_{\leqslant}(A)$:}\quad 
   a_1+\cdots+a_k-a_{k+1}-\cdots -a_\ell\,\leqslant\,r'
\end{equation}
and, therefore, two halfspaces $\half{T_{\geqslant}(A)}$ and 
$\half{T_{\leqslant}(A)}$:
\begin{equation}
\label{sixB}
  \half{T_{\geqslant}(A)}\ =\ \Set{\,\bm{r}\in {\reals}^m\;|
  \; \bm{r} \text{ satisfies $T_{\geqslant}(A)$}\,} % \proj{A}{}{\bfmath{r}}
\qquad\text{and}\qquad
  \half{T_{\leqslant}(A)}\ =\ \Set{\,\bm{r}\in {\reals}^m\;|
  \; \bm{r} \text{ satisfies $T_{\leqslant}(A)$}\,} % \proj{A}{}{\bfmath{r}}
\end{equation}
We can therefore define $\poly{T}$ formally as follows:

\bigskip
\fbox{
\hspace*{-.3in}
\begin{minipage}{.72\textwidth}
\vspace*{-.2in}
\[
   \poly{T}\ =\  \bigcap\,\SET{\,\half{T_{\geqslant}(A)}\;\cap
     \;\half{T_{\leqslant}(A)}\;\bigl|
     \;\varnothing\neq A\subseteq \aaa_{\text{in}}\cup\aaa_{\text{out}}\,}
\]
\end{minipage}
}

\bigskip
\noindent
Generally, many of the inequalities induced by the typing $T$ will
be redundant, and the induced $\poly{T}$ will be defined by 
far fewer than $2\cdot (2^m -1)$ halfspaces. 

\subsection{Uniqueness and Redundancy in Typings}
\label{sect:uniqueness-and-redundancy}

We can view a network typing $T$ as a syntactic expression, with its
semantics $\poly{T}$ being a polytope in Euclidean hyperspace. As in
other situations connecting syntax and semantics, there are generally
distinct typings $T$ and $T'$ such that $\poly{T} = \poly{T'}$. This
is an obvious consequence of the fact that the same polytope can be
defined by many different equivalent sets of linear inequalities,
which is the source of some complications when we combine two typings 
to produce a new one.

To achieve uniqueness of typings, as well as some efficiency of
manipulating them, we may try an approach that eliminates redundant
inequalities in the collection:
\begin{equation}
\label{sixC}
   \Set{\,T_{\geqslant}(A)\;|
   \;\varnothing\neq A\in\power{\aaa_{\text{in}}\cup\aaa_{\text{out}}}\,}
   \ \cup
   \ \Set{\,T_{\leqslant}(A)\;|
   \;\varnothing\neq A\in\power{\aaa_{\text{in}}\cup\aaa_{\text{out}}}\,}
\end{equation}
where $T_{\geqslant}(A)$ and $T_{\leqslant}(A)$ are as in~(\ref{sixA})
above.  There are standard procedures which determine whether a finite
set of inequalities are linearly independent and, if they are not,
select an equivalent subset of linearly independent inequalities.
\Hide
{
However, even if we agree on a canonical order in
which to carry out the elimination of redundant inequalities, the same
inequality retained at the end can be written in different forms, \eg,
$2a_1 - (1/2) a_2 \leqslant 4$ is equivalent to $a_1 - (1/4)a_2 \leqslant 2$
and $4a_1 - a_2 \leqslant 8$ and many others. We therefore need
to apply some care when using such elimination procedures, if we 
want to uniquely produce non-redundant typings. 
}
Some of these issues are 
taken up in the full report~\cite{kfouryDSL:2011}.

\Hide
{
\begin{example}
\label{ex:merge-gadget}
Consider the small network \textbf{\textsf{M}} from 
Example~\ref{ex:illustrate-inductive-def}, where we now assign
capacities to the arcs, as shown in Figure~\ref{fig:merge-network}.
The number in rectangular boxes are upper-bounds 
capacities; all other capacity bounds, not appearing in the figure,
are trivial, \ie, the lower bound for all arcs in $\Set{d_1,d_2,d_3}$
is $0$.

For this example, $\aaa_{\text{in}} = \Set{d_1,d_2}$ and 
$\aaa_{\text{out}} = \Set{d_3}$.  
A typing $T$ assigns an interval to each of the 7 non-empty subsets in
$\power{\aaa_{\text{in}}\cup\aaa_{\text{out}}}$, which we choose here as:
\begin{alignat*}{6}
&(i)\quad && d_1\ :\ [0,10]\qquad &&(ii)\quad && d_2\ :\ [0,20]\qquad 
    &&(iii)\quad && -d_3\ :\ [-20,0]
\\
&(iv)\quad && d_1+d_2\ :\ [0,20]\qquad \qquad
&&(v)\quad && d_1-d_3\ :\ [-20,0]\qquad \qquad
&&(vi)\quad && d_2-d_3\ :\ [-10,0]
\\
&(vii)\quad && d_1+d_2-d_3\ :\ [0,0]
\end{alignat*}
which in turn correspond to 14 inequalities. Here, 4 of the 7 interval
assignments are redundant and can be eliminated, but the elimination is
not unique. The simplest perhaps is to eliminate 
$\Set{(ii),(iv),(v),(vi)}$, and to
keep $\Set{(i),(iii),(vii)}$ which are:
\[
    (i)\quad d_1\ :\ [0,10] \qquad
    (iii)\quad -d_3\ :\ [-20,0] \qquad
    (vii)\quad d_1+d_2-d_3\ :\ [0,0]
\]
Call $T_1$ the resulting \emph{partial} interval-assignment
(formally introduced in Definition~\ref{def:partial-typings}).
An alternative is to keep only
$\Set{(i),(iv),(vii)}$ and eliminate the other intervals,
resulting in another partial typing $T_2$. Both $T_1$ and $T_2$
are equivalent to $T$, because 
$\poly{T} = \poly{T_1} = \poly{T_2}$ in ${\reals}^3$.
There are still other partial typings equivalent to $T$.

In the terminology of Section~\ref{sect:valid-vs-principal} below, the
particular typing $T$ here is \emph{valid} but not \emph{principal}.
That $T$ is not principal is not a consequence of the redundant
inequalities it induces, but of the intervals it assigns being narrower
than necessary: We obtain a principal typing $\widetilde{T}$ by
widening some of the intervals $T$ assigns, specifically, by
changing ``10'' to ``15'' and ``20'' to ``35'' throughout. If we 
make the same changes in $T_1$ and $T_2$, we obtain partial
typings $\widetilde{T}_1$ and $\widetilde{T}_2$ equivalent to
$\widetilde{T}$.
\end{example}

\begin{figure}[!ht] %[ht]  
        % h  Place the float here, i.e., 
        %    at the same point it occurs in the source text.
        % t  Position at the top of the page.
        % b  Position at the bottom of the page.
        % p  Put on a special page for floats only.
        % !  Override internal parameters Latex uses 
        %    for determining `good' float positions.
\centering
% \fbox{
% \begin{minipage}[b]{0.9\linewidth}
% \begin{center}
\includegraphics[scale=.25,trim=1.2cm 18.00cm 1.20cm 1.0cm,clip]{Figures/merge-network}
% \end{center}
% }
\caption{An assignment of capacities to the arcs of small network 
         \textbf{\textsf{M}} in Examples~\ref{ex:illustrate-inductive-def} 
         and~\ref{ex:merge-gadget}.
         } %
\label{fig:merge-network}
% \end{minipage}
\end{figure}
}

\Hide
{
\begin{definition}{Projections and Restrictions}
\label{def:projections+restrictions}
Let $A\subseteq\aaa_{\text{in}}\cup\aaa_{\text{out}}$ as in the 
opening paragraph of Section~\ref{sect:notational}. If
$\bfmath{r} = \Angles{r_1,\ldots,r_m}$ is an arbitrary point 
in ${\reals}^m$, then the \emph{projection} of $\bfmath{r}$ on the 
$\ell$-dimensional subspace defined by $A$, written 
$\proj{A}{}{\bfmath{r}}$, is obtained by omitting all the entries 
in $\bfmath{r}$ corresponding to the coordinates \emph{not} in $A$, 
\ie, the coordinates in 
$(\aaa_{\text{in}}\cup\aaa_{\text{out}})- \Set{a_1,\ldots,a_{\ell}}$, so that:
\[
   \proj{A}{}{\bfmath{r}} = \Angles{r_1,\ldots,r_\ell}
\]
Consider a typing 
$T:\power{\aaa_{\text{in}}\cup\aaa_{\text{out}}}\to\reals\times\reals$.
% and let $A\in\power{\aaa_{\text{in}}\cup\aaa_{\text{out}}}$.  
The \emph{restriction of $T$ to $A$} is defined by:
\begin{equation}
\label{sixE}
     \rest{T}{A}(B) = \begin{cases}
                  T(B)    &\text{if $B\subseteq A$},
                  \\
                  \text{undefined} \qquad &\text{otherwise}. 
                  \end{cases}
\end{equation}
Our earlier notations $\inT{T}$ and $\outT{T}$ denote the functions 
$\rest{T}{\aaa_{\text{in}}}$ and $\rest{T}{\aaa_{\text{out}}}$ here.
$\rest{T}{\aaa_{\text{in}}\cup\aaa_{\text{out}}}$ is exactly $T$.
We write $\rest{T}{a}$ instead of $\rest{T}{\Set{a}}$ for a single
arc $a\in\aaa_{\text{in}}\cup\aaa_{\text{out}}$. 
\end{definition}
}

\Hide
{
\begin{definition}{Tight Typings}
\label{def:tight-typings}
Typing $T$ is \emph{tight for} 
$A\in\power{\aaa_{\text{in}}\cup\aaa_{\text{out}}}$, if it is the case that
for every $B\subseteq A$:
\begin{equation}
\label{sixD}
   \proj{B}{}{\poly{T}}\ =\ \poly{\rest{T}{B}}
\end{equation}
Informally, if we view the restriction $\rest{T}{A}$ as a projection on $A$,
then the preceding equality says ``the projection of the polytope is
equal to the polytope of the projection''.

We say that $T$ is \emph{uniformly tight} if it is tight for every
$A\in\power{\aaa_{\text{in}}\cup\aaa_{\text{out}}}$.  A uniformly
tight typing discards redundancies different from those considered in
Example~\ref{ex:merge-gadget}, where we eliminated linearly dependent
inequalities by standard procedures of linear algebra. 
Example~\ref{ex:merge-gadget-again} illustrates the kind of redundancies 
excluded by uniformly tight typings.
\end{definition}
}

\Hide
{
\begin{proposition}[First Orthant Contains $\poly{T}$]
\label{prop:typing-in-first-orthant}
Let $\aaa_{\text{in}}= \inn{\N}$ and $\aaa_{\text{out}} = \out{\N}$ as in the 
opening paragraph of Section~\ref{sect:notational}.  Let 
$T:\power{\aaa_{\text{in}}\cup\aaa_{\text{out}}}\to {\reals}\times {\reals}$ 
be a typing for $\N$. If
$\bfmath{r} = \Angles{r_1,\ldots,r_m}\in\poly{T}$, then all the coordinates
$r_1,\ldots,r_m$ are non-negative, \ie,  $\bfmath{r}$ % $\Angles{r_1,\ldots,r_m}$
is entirely located in the first orthant of the $m$-dimensional 
hyperspace ${\reals}^m$. 
(Informally, this makes sense, because flow on every input/output
arc must be a non-negative value.)
\end{proposition}
}

%%%
\Hide
{
\begin{proof}
Suppose $T$ is tight for all the singleton subsets of 
$\aaa_{\text{in}}\cup\aaa_{\text{out}}$. Consider the intervals assigned by 
$T$ to all these singleton subsets. There are $m$ such intervals
$[r_1,r_1']$, $[r_2,r_2']$, $\ldots$, $[r_m,r_m']$.
If $a_i\in\aaa_{\text{in}}$, then its type $[r_i,r_i']$ is such that
$0\leqslant r_i\leqslant r_i'$, so that the induced inequalities
$T_{\geqslant}(a_i)$ and $T_{\leqslant}(a_i)$ are:
\[  0 \leqslant r_i \leqslant a_i \leqslant r_i'
\]
which force all values assigned to input arc $a_i$ to be non-negative.
If $a_j\in\aaa_{\text{out}}$, then its type $[r_j,r_j']$ is such that
$r_j\leqslant r_j'\leqslant 0$, so that the induced inequalities
$T_{\geqslant}(a_j)$ and $T_{\leqslant}(a_j)$ are:
\[  \leqslant r_i \leqslant -a_j \leqslant r_i' \leqslant 0
\]
which force all values assigned to output arc $a_j$ to be non-negative.
Hence, these $m$ intervals define an
axis-aligned hyperrectangle enclosing $\poly{T}$ entirely within
the first orthant of the hyperspace $\reals^m$. 
\end{proof}
}
%%%

If $\N_1 : T_1$ and $\N_2 : T_2$ are typings for networks $\N_1$ and
$\N_2$ with matching input and output dimensions, we write $T_1\equiv
T_2$ whenever $\poly{T_1} \approx \poly{T_2}$, in which case we say that 
$T_1$ and $T_2$ are \emph{equivalent}.%
   \footnote{``$\poly{T_1} \approx \poly{T_2}$'' means that 
   $\poly{T_1}$ and $\poly{T_2}$ are the same up to renaming their
   dimensions, \ie, up to renaming the input and output arcs in $\N_1$
   and $\N_2$.}
If $\N_1 = \N_2$, then $T_1\equiv
T_2$ whenever $\poly{T_1} = \poly{T_2}$.

\begin{definition}{Tight Typings}
\label{def:tight-typings}
Let $\N$ be a network specification, with $\aaa_{\text{in}} =
\inn{\N}$ and $\aaa_{\text{out}} = \out{\N}$, and
$T :\power{\aaa_{\text{in}}\cup\aaa_{\text{out}}}
\to \reals\times\reals$ a typing for $\N$. 
$T$ is a \emph{tight} typing if for
every typing $T'$ such that $T\equiv T'$ and
for every $A\subseteq \aaa_{\text{in}}\cup\aaa_{\text{out}}$,
the interval $T(A)$ is contained in the interval $T'(A)$, \ie,
$T(A)\subseteq T'(A)$.
\end{definition}

\begin{proposition}[Every Typing Is Equivalent to a Tight Typing]
\label{prop:converting-to-uniformly-tight}
There is an algorithm $\tight{}$ which, given
a typing $(\N:T)$ as input, always terminates and
returns an equivalent tight typing $(\N:\tight{T})$.%
%    \footnote{
%     For now, we leave out the issue of the efficiency 
%     of algorithm $\tight{}$.
%     }
\end{proposition}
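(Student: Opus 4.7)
\smallskip
\noindent\textit{Proof Proposal.}
The plan is to define $\tight{}$ by independently computing, for each non-empty $A \subseteq \aaa_{\text{in}}\cup\aaa_{\text{out}}$, the tightest interval consistent with $\poly{T}$. Concretely, for such $A$ with $A\cap\aaa_{\text{in}} = \Set{a_1,\ldots,a_k}$ and $A\cap\aaa_{\text{out}} = \Set{a_{k+1},\ldots,a_\ell}$, solve two linear programs over $\poly{T}$:
\[
   r_A \;=\; \min_{\bfmath{x}\in\poly{T}}\bigl(x_{a_1}+\cdots+x_{a_k}-x_{a_{k+1}}-\cdots-x_{a_\ell}\bigr),
   \qquad
   r'_A \;=\; \max_{\bfmath{x}\in\poly{T}}\bigl(x_{a_1}+\cdots+x_{a_k}-x_{a_{k+1}}-\cdots-x_{a_\ell}\bigr),
\]
and set $\tight{T}(A) := [r_A, r'_A]$. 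Since $T$ induces at most $2\cdot(2^m-1)$ halfspaces, $\poly{T}$ has a finite description, the objectives are linear, and each LP terminates (and both the min and max are attained, since $\poly{T}$ is bounded). Hence $\tight{}$ makes at most $2\cdot(2^m-1)$ calls to an LP solver and terminates. The edge cases are routine: if $\poly{T}=\varnothing$, assign the empty interval $[1,0]$ to every $A$; if $m=0$, there is nothing to compute.

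Next I would verify equivalence, $\tight{T}\equiv T$, i.e.\ $\poly{\tight{T}}=\poly{T}$. The inclusion $\poly{T}\subseteq\poly{\tight{T}}$ is immediate from the definition of $r_A,r'_A$: every $\bfmath{x}\in\poly{T}$ satisfies the two inequalities $T_{\geqslant}(A)$ and $T_{\leqslant}(A)$ built from $\tight{T}(A)$ in display~(\ref{sixA}). For the reverse inclusion, observe that $[r_A,r'_A]\subseteq T(A)$, since every $\bfmath{x}\in\poly{T}$ already satisfies the inequalities of $T$ and the functional ranges between $r_A$ and $r'_A$ on $\poly{T}$; therefore each half-space defining $\poly{\tight{T}}$ is contained in the corresponding half-space defining $\poly{T}$, whence $\poly{\tight{T}}\subseteq\poly{T}$.

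Then I would verify tightness. Suppose $T'$ is any typing with $T'\equiv\tight{T}$, so $\poly{T'}=\poly{\tight{T}}=\poly{T}$. Fix $A$ and write $T'(A)=[s,s']$. Because every point of $\poly{T'}$ satisfies $T'_{\geqslant}(A)$ and $T'_{\leqslant}(A)$, we have $\sum f(A\cap\aaa_{\text{in}})-\sum f(A\cap\aaa_{\text{out}})\in[s,s']$ for every $f\in\poly{T}$. The values $r_A$ and $r'_A$ are attained at actual points of $\poly{T}$, so $r_A,r'_A\in[s,s']$, i.e.\ $\tight{T}(A)=[r_A,r'_A]\subseteq[s,s']=T'(A)$, which is the definition of tightness.

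The main obstacle I anticipate is not conceptual but computational: there are $2^m-1$ non-empty subsets $A$, so even though each LP is polynomial-time solvable, the algorithm as described is exponential in $m$. A refinement worth pursuing (and probably the subject of the deferred material promised in the full report~\cite{kfouryDSL:2011}) is to prune or merge subsets whose inequalities are already implied by others, so that only the non-redundant facets of $\poly{T}$ need to be recomputed; but this optimization is not needed to establish the existential statement of the proposition, which follows directly from the plan above.
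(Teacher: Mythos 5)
Your proposal is correct and follows essentially the same route as the paper's own proof sketch: the paper likewise implements $\tight{}$ by two applications of a linear-programming algorithm for each nonempty $A\subseteq\aaa_{\text{in}}\cup\aaa_{\text{out}}$, computing the minimum and maximum of $(\sum A\cap\aaa_{\text{in}})-(\sum A\cap\aaa_{\text{out}})$ over $\poly{T}$ and assigning the resulting interval. Your added verification of equivalence and tightness (which the paper's sketch leaves implicit) is sound.
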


%%%
\Hide
{
\begin{sketch} 
Let $\size{\dimIO{\N}} = m$.  According to
Proposition~\ref{prop:typing-in-first-orthant}, $\poly{T}$ is entirely
contained within the first orthant of the $m$-dimentional hyperspace.
There are $2\cdot (2^m-1)$ induced inequalities of the form
$T_{\geqslant}(A)$ and $T_{\leqslant}(A)$, where
$A\subseteq\aaa_{\text{in}}\cup\aaa_{\text{out}}$ with
$\aaa_{\text{in}}= \inn{\N}$ and $\aaa_{\text{out}} = \out{\N}$,
according to~(\ref{sixA}) earlier in this section.  We can implement
the desired algorithm $\tight{}$ as a repeated application of a linear
programming algorithm, twice to every nonempty
$A\subseteq\aaa_{\text{in}}\cup\aaa_{\text{out}}$, in order to compute
the minimum $r$ and the maximum $r'$ of the following linear
expression $E$:
\[
   E\ =\ (\sum A\cap\aaa_{\text{in}}) - (\sum A\cap\aaa_{\text{out}}) 
\]
where we use the arcs in $A$ as variables, over the polytope
$\poly{T}$. The interval $[r,r']$ is precisely the interval that
a uniformly tight typing must assign to $A$.
\end{sketch}
}
%%%

\Hide
{
\begin{corollary}
\label{cor:converting-to-uniformly-tight}
$(\N:T)$ is a uniformly tight typing iff $T = \tight{T}$.
\end{corollary}
}

\Hide
{
\begin{proposition}[Projection of Polytope Contained in 
Polytope of Projection] 
\label{prop:tight-typings}
Let $(\N:T)$ be a typing, with $\aaa_{\text{in}} = \inn{\N}$ and
$\aaa_{\text{out}} = \out{\N}$.  The following assertions
hold, for every 
$\varnothing\neq A\in\power{\aaa_{\text{in}}\cup\aaa_{\text{out}}}$:
\begin{enumerate}
\item $\proj{A}{}{\poly{T}} \subseteq \poly{\rest{T}{A}}$.
\item If $T$ is tight for $A$ and 
      $f_0:A\to\nreals$ satisfies $\rest{T}{A}$, then  
      $f_0$ can be extended to 
      $f:\aaa_{\text{\em in}}\cup\aaa_{\text{\em out}}\to\nreals$ 
      which satisfies $T$.
\end{enumerate}
\end{proposition}
}
%%%
\Hide
{
\begin{sketch}
For both parts of the proposition, suppose $\size{\dimIO{\N}} = m \geqslant 1$
and, with no loss of generality, let $A = \Set{a_1,\ldots,a_{\ell}}$ for
some $1\leqslant\ell\leqslant m$.

For part 1, consider an arbitrary $\bfmath{r}
= \Angles{r_1,\ldots,r_m}\in\poly{T}$. Define
$\bfmath{r}_0= \proj{A}{}{\bfmath{r}} = \Angles{r_1,\ldots,r_{\ell}}$.
We have to show that $\bfmath{r}_0\in\poly{\rest{T}{A}}$. 
By the definition of $\rest{T}{A}$ in~(\ref{sixE}), this is
equivalent to showing that $\bfmath{r}_0$ satisfies the two
induced inequalities in~(\ref{sixA}) for every $B\subseteq A$. This
last assertion is a straightforward consequence of the definitions.

For part 2, let $f_0$ be represented by 
$\bfmath{r}_0 = \Angles{r_1,\ldots,r_{\ell}}$. If $f_0$
satisfies $\rest{T}{A}$, then $\bfmath{r}_0$ is a point in 
$\poly{\rest{T}{A}}$. Because $T$ is tight for $A$, this
implies $\bfmath{r}_0$ is a point in $\proj{A}{}{\poly{T}}$.
This means there is a point $\bfmath{r}\in\poly{T}$ such
that $\bfmath{r}_0 = \proj{A}{}{\bfmath{r}}$, which in turn
implies the desired conclusion.
\end{sketch}
}
%%%

\Hide
{
\begin{example}
\label{ex:merge-gadget-again}
Consider again the typing $T$ defined in
Example~\ref{ex:merge-gadget}. It is a straightforward exercise to
check that $T$ satisfies~(\ref{sixD}) for every 
$A\subseteq\Set{d_1,d_2,d_3}$, implying that $T$ is uniformly tight.

Define a new typing $T'$ from $T$ by making a single change in it,
namely, change the interval assignment for $d_2$
% \begin{quote}
   from ``$d_2:[0,20]$'' to ``$d_2:[0,30]$''.
% \end{quote}

It is easy to see that this change is without effect on the meaning of
the typing, \ie, $\poly{T} = \poly{T'}$, so that if we project on
$A = \Set{d_2}$ we obtain the equality:
\[
    \proj{d_2}{}{\poly{T}}\ =\ \proj{d_2}{}{\poly{T'}}\ =
    \ \Set{\,r\in\reals\;|\;0\leqslant r\leqslant 20\,}
\]
However, $\rest{T}{d_2} = [0,20]$ while $\rest{T'}{d_2} = [0,30]$,
which implies;
\[
  \Set{\,r\in\reals\;|\;0\leqslant r\leqslant 20\,}\ =
  \ \poly{\rest{T}{d_2}} \ \neq
  \ \poly{\rest{T'}{d_2}} \ =
  \Set{\,r\in\reals\;|\;0\leqslant r\leqslant 30\,}
\]
Hence, $T'$ is not tight for $\Set{d_2}$ and,
\emph{a fortiori}, not uniformly tight for every set of 
arcs containing $d_2$. A graphical explanation
limited to $\Set{d_1,d_2}$ is given in
Figure~\ref{fig:graphic-explanation}.
\end{example}

\begin{figure}[!ht] %[ht]  
        % h  Place the float here, i.e., 
        %    at the same point it occurs in the source text.
        % t  Position at the top of the page.
        % b  Position at the bottom of the page.
        % p  Put on a special page for floats only.
        % !  Override internal parameters Latex uses 
        %    for determining `good' float positions.
\centering
% \fbox{
\begin{minipage}[b]{0.45\linewidth} 
\includegraphics[scale=.25,trim=0cm 8.50cm 0cm 0cm,clip]{Figures/graphic-explanation}
\end{minipage}
\begin{minipage}[b]{0.45\linewidth}
\includegraphics[scale=.25,trim=0cm 8.50cm 0cm 0cm,clip]{Figures/graphic-explanation-bis}
\end{minipage}
% }
\caption{% \small 
        Graphic explanation for Example~\ref{ex:merge-gadget-again}.
        $T$ and $T'$ are equivalent typings, $T$ is uniformly tight, 
        $T'$ is not. $T$ is tight for $\Set{d_1,d_2}$,  $\Set{d_1}$ and
         $\Set{d_2}$, whereas $T'$ is tight for  
        $\Set{d_1}$ but not for $\Set{d_1,d_2}$ and $\Set{d_2}$.
         } %
\label{fig:graphic-explanation}
% \end{minipage}
\end{figure}
}

\Hide
{
Let $\N$ be a network, with $\aaa_{\text{in}} = \inn{\N}$ and
$\aaa_{\text{out}} = \out{\N}$.  In this report, we use typings as
total mappings from $\power{\aaa_{\text{in}}\cup\aaa_{\text{out}}}$ to
$\reals\times\reals$, leaving for future work the question of how to
uniquely and minimally represent typings by equivalent \emph{partial}
typings. % (see Open Problem~\ref{pro:minimal+tight-typings}). 

As for the other kind of redundancy, resulting from non-tight typings,
we use algorithm $\tight{}$ defined in
Proposition~\ref{prop:converting-to-uniformly-tight} whenever needed,
and we leave for future work the question of improving $\tight{}$'s 
efficiency.
}

\subsection{Valid Typings and Principal Typings}
\label{sect:valid-vs-principal}

Let $\N$ be a network, $\aaa_{\text{in}} = \inn{\N}$ and 
$\aaa_{\text{out}} = \out{\N}$. A typing $\N:T$
is \emph{valid} iff it is sound:
\begin{description}
\item[(soundness)] 
      Every $f_0 : \aaa_{\text{in}}\cup\aaa_{\text{out}}\to\nreals$ satisfying 
      $T$ can be extended to a feasible flow $f \in\fullSem{\N}$.
\end{description}
\Hide{
The stronger notion of ``principal typing'' is an
\emph{exact} characterization of the network's input-output behavior,
in contrast to the generally \emph{approximate} characterization
provided by a ``valid typing''.}
We say the typing $\N:T$ for $\N$ is a \emph{principal typing}
if it is both sound \emph{and} complete: 
\begin{description}
\item[(completeness)] Every feasible flow $f\in\fullSem{\N}$ satisfies $T$.
\end{description}
More succintly,  using the IO-semantics $\ioSem{\N}$ instead of the full
semantics $\fullSem{\N}$, the typing $\N:T$ is \emph{valid} iff
$\poly{T}\subseteq \ioSem{\N}$, and it is \emph{principal} iff
$\poly{T} = \ioSem{\N}$.

% \medskip

\Hide{
If there are no feasible flows in $\N$, then the empty typing
$T=\varnothing$, \ie, the typing that assigns the empty interval to
every $A\in\power{\aaa_{\text{in}}\cup\aaa_{\text{out}}}$,
is a principal (and only valid) typing for $\N$. No feasible
flow satisfies $\varnothing$. In this case, $\poly{T}=\varnothing$.
}

\Hide
{
If $\N_1 : T_1$ and $\N_2 : T_2$ are typings for networks $\N_1$ and
$\N_2$ with similar input and output dimensions, we write $T_1\equiv
T_2$ whenever $\poly{T_1} \approx \poly{T_2}$ and say that $T_1$ and $T_2$ 
are \emph{equivalent}.
}

A useful notion in type theories is \emph{subtyping}.  If
$T_1$ is a \emph{subtype} of $T_2$, in symbols $T_1 <: T_2$,
this means that any object of type $T_1$ can be
safely used in a context where an object of type $T_2$ is expected: 
\begin{description}
\item[(subtyping)] 
   $\quad T_1 <: T_2\quad$ iff $\quad\poly{T_2} \subseteq \poly{T_1}$.
\end{description} 
Our subtyping relation is contravariant w.r.t. the subset 
relation, \ie, the supertype $T_2$ is more restrictive as a set of
flows than the subtype $T_1$.

\begin{proposition}[Principal Typings Are Subtypes of Valid Typings]
\label{prop:subtyping}
If $(\N : T_1)$ is a principal typing,
and $(\N : T_2)$ a valid typing for the same $\N$, then
$T_1 <: T_2$.
\end{proposition}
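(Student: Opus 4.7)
The plan is to prove this by directly chaining the three definitions that have just been set up (valid, principal, subtyping), all of which are stated in terms of polytope inclusion with respect to $\ioSem{\N}$. The key observation is that every part of the conclusion $T_1 <: T_2$ can be translated to a single containment $\poly{T_2}\subseteq\poly{T_1}$, and both hypotheses have been packaged into convenient polytope-theoretic form in the ``More succinctly'' paragraph immediately preceding the proposition.

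First I would unpack principality: since $(\N:T_1)$ is principal, we have the set equality $\poly{T_1} = \ioSem{\N}$. Next I would unpack validity: since $(\N:T_2)$ is valid, we have the inclusion $\poly{T_2}\subseteq \ioSem{\N}$. Combining these yields $\poly{T_2}\subseteq\ioSem{\N} = \poly{T_1}$. Finally, applying the defining equivalence of the subtyping relation, namely $T_1 <: T_2$ iff $\poly{T_2}\subseteq\poly{T_1}$, gives the desired conclusion.

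There is essentially no obstacle here: the proposition is almost a tautology once the three notions have been re-expressed in polytope language. The only thing worth stressing is the direction of the contravariance in the subtyping definition, which is what makes this chaining work out correctly: because the supertype's polytope lies inside the subtype's polytope, the principal typing $T_1$ (whose polytope is \emph{the largest possible}, namely all of $\ioSem{\N}$) automatically sits below every valid typing $T_2$ in the subtyping order. It may be worth a one-line remark that this proposition justifies calling $T_1$ ``principal'': principal typings really are minimal (most general) elements for $<:\,$ among valid typings.
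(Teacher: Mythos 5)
Your proof is correct and follows essentially the same route as the paper's: both arguments amount to the chain $\poly{T_2}\subseteq\ioSem{\N}=\poly{T_1}$, the paper merely writing it out pointwise (a flow $f$ satisfying $T_2$ extends by soundness of $T_2$ to a feasible flow, which by completeness of $T_1$ satisfies $T_1$, hence so does its restriction $f$). Your use of the ``more succinctly'' polytope reformulation just packages those two steps into a single set containment.
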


%%%
\Hide
{
\begin{proof}
Given an arbitrary $f :
\aaa_{\text{in}}\cup\aaa_{\text{out}}\to\nreals$, we want to show that
if $f$ satisfies $T_2$, then $f$ satisfies $T_1$, \ie, any point in
$\poly{T_2}$ is also in $\poly{T_1}$.  If $f$ satisfies $T_2$, then
$f$ can be extended to a feasible flow $f'$. Because $T_1$ is
principal, $f'$ satisfies $T_1$. This implies that the
restriction of $f'$ to $\aaa_{\text{in}}\cup\aaa_{\text{out}}$, which
is exactly $f$, satisfies $T_1$. 
\end{proof}
}
%%%

Any two principal typings $T_1$ and $T_2$ of the same network are not
necessarily identical, but they always denote the same polytope, as
formally stated in the next proposition.
% Proposition~\ref{prop:equivalence-of-principal-typings}.  
% That this is
% not the case for valid typings is illustrated by
% Example~\ref{ex:merge-gadget}, where $T$ and $\widetilde{T}$ are both
% valid but $\poly{T} \neq \poly{\widetilde{T}}$.

\Hide{
\begin{lemma}
\label{lem:principal-and-uniformly-tight}
Let $(\N:T)$ and $(\N:T')$ be typings for the same $\N$. 
If $T$ and $T'$ are uniformly tight
and $\poly{T} = \poly{T'}$, then $T = T'$. 
\end{lemma}
}

%%%
\Hide
{
\begin{proof}
This is a straightforward consequence of
Proposition~\ref{prop:converting-to-uniformly-tight}
and its Coroallary~\ref{cor:converting-to-uniformly-tight}.
\end{proof}
}
%%%

\begin{proposition}[Principal Typings Are Equivalent]
\label{cor:equivalence-of-principal-typings}
\label{prop:equivalence-of-principal-typings}
If $(\N:T_1)$ and $(\N:T_2)$ are two principal typings for the same
network specification $\N$, then $T_1 \equiv T_2$. Moreover,
if $T_1$ and $T_2$ are tight, then $T_1 = T_2$.
\end{proposition}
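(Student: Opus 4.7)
The plan is to unfold the definitions and observe that both claims follow almost immediately. For the first assertion, I will apply the definitions of principal typing and of $\equiv$ given in Section~\ref{sect:valid-vs-principal}: a typing $(\N:T)$ is principal iff $\poly{T} = \ioSem{\N}$, and for two typings of the \emph{same} network $\N$, $T_1 \equiv T_2$ means $\poly{T_1} = \poly{T_2}$. Therefore if both $T_1$ and $T_2$ are principal for $\N$, we have
\[
  \poly{T_1} \;=\; \ioSem{\N} \;=\; \poly{T_2},
\]
which is exactly $T_1 \equiv T_2$. No further argument is needed for this half.

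For the second assertion, I will invoke Definition~\ref{def:tight-typings}. Assume $T_1$ and $T_2$ are tight and $T_1 \equiv T_2$. Applying tightness of $T_1$ with the witness $T' := T_2$ (note that $T_1 \equiv T_2$ is the required hypothesis) yields $T_1(A) \subseteq T_2(A)$ for every $A \subseteq \aaa_{\text{in}} \cup \aaa_{\text{out}}$. Symmetrically, tightness of $T_2$ with $T' := T_1$ yields $T_2(A) \subseteq T_1(A)$ for every $A$. Hence $T_1(A) = T_2(A)$ for every $A$, that is, $T_1 = T_2$ as functions on $\power{\aaa_{\text{in}} \cup \aaa_{\text{out}}}$.

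There is essentially no obstacle: the work has been pushed into Proposition~\ref{prop:converting-to-uniformly-tight} (existence of tight representatives) and into the asymmetric formulation of Definition~\ref{def:tight-typings}, which is strong enough to give the uniqueness half in one line once equivalence is known. The only point worth double-checking in the write-up is that, since $\N_1 = \N_2 = \N$, the relation ``$\approx$'' collapses to genuine equality of polytopes (no renaming of input/output arcs is involved), so the principal-typing equation $\poly{T_1} = \ioSem{\N} = \poly{T_2}$ really does establish $T_1 \equiv T_2$ in the strict sense needed for the tightness argument.
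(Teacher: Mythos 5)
Your proof is correct, but it takes a somewhat more direct route than the paper's. For the first assertion, the paper argues via Proposition~\ref{prop:subtyping}: since every principal typing is in particular valid, one gets both $T_1 <: T_2$ and $T_2 <: T_1$, i.e.\ mutual containment of the polytopes, hence $T_1\equiv T_2$. You instead bypass the subtyping machinery entirely and read off $\poly{T_1}=\ioSem{\N}=\poly{T_2}$ from the succinct characterization of principality given at the end of Section~\ref{sect:valid-vs-principal}; both arguments are one-liners, but yours avoids a dependency on Proposition~\ref{prop:subtyping} (whose own proof requires the flow-extension argument), while the paper's version makes the antisymmetry of $<:$ on principal typings explicit. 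For the second assertion, the paper defers to a separate uniqueness lemma for tight typings with equal polytopes, whereas you prove that uniqueness inline by the symmetric application of Definition~\ref{def:tight-typings} (tightness of $T_1$ against $T'=T_2$ and vice versa, giving $T_1(A)\subseteq T_2(A)\subseteq T_1(A)$ for every $A$); this is exactly the content of that lemma and is the natural argument from the stated definition. Your closing remark that $\approx$ collapses to genuine polytope equality when $\N_1=\N_2=\N$ is a worthwhile check and is consistent with the paper's explicit stipulation that for identical networks $T_1\equiv T_2$ means $\poly{T_1}=\poly{T_2}$.
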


%%%
\Hide
{
\begin{proof}
Both $(\N:T_1)$ and $(\N:T_2)$ are valid. Hence, by
Proposition~\ref{prop:subtyping}, both $T_1 <: T_2$ and $T_2 <: T_1$. 
This implies that $T_1\equiv T_2$. When $T_1$ and $T_2$ are uniformly
tight, then the equality $T_1 = T_2$ follows from
Lemma~\ref{lem:principal-and-uniformly-tight}.
\end{proof}
}
%%%

\Hide
{
\begin{corollary}
Let $\N$ be a network specification. Among the valid typings for
$\N$, the principal typings are all equivalent and minimal w.r.t.
to the subtyping ordering ``$<:$''.
\end{corollary}
}

%%%
\Hide
{
\begin{proof}
Immediate from Propositions~\ref{prop:subtyping} 
and~\ref{prop:equivalence-of-principal-typings}.
\end{proof}
}
%%%

\Hide
{
\section{Other Properties and Open Problems of Network Typings}
\label{sect:necessary-conditions}
\input{open-problems}
}

% \vspace*{-.2in}
\section{Inferring Typings for Small Networks}
\label{sect:existence}
\label{sect:inference}
%% typing-inference-for-small-nets.tex

\begin{theorem}[Existence of Principal Typings]
\label{thm:principal-typings}
Let $\A$ be a small network. We can effectively compute a 
principal and uniformly tight typing $T$ for $\A$.%
%     \footnote{
%     For now, we leave out the issue of how efficiently we can
%     compute $T$.
%     }
\end{theorem}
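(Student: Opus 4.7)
The plan is to construct $T$ directly via linear programming and then verify that it is principal and uniformly tight. First I observe that $\fullSem{\A}$ is a bounded (possibly empty) polyhedron in $\reals^{|\aaa|}$, cut out by the flow-conservation equations at the nodes and the capacity box constraints $L(a) \leqslant f(a) \leqslant U(a)$. For each non-empty $A \subseteq \aaa_{\text{in}} \cup \aaa_{\text{out}}$, the linear functional $E_A(f) = \sum f(A \cap \aaa_{\text{in}}) - \sum f(A \cap \aaa_{\text{out}})$ can be minimized and maximized over $\fullSem{\A}$ by linear programming. Set $T(A) = [r_A, r'_A]$ where $r_A$ and $r'_A$ are these optimal values (using any empty interval when $\fullSem{\A} = \varnothing$). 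This is effective.

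Completeness, i.e.\ $\ioSem{\A} \subseteq \poly{T}$, is immediate: every feasible flow $f$ satisfies $E_A(f) \in [r_A, r'_A]$ by definition of $r_A, r'_A$. Uniform tightness is equally immediate: any typing $T'$ equivalent to $T$ has $\poly{T'} = \poly{T}$, and the range of $E_A$ over this common polytope is exactly $[r_A, r'_A]$, forcing $T'(A) \supseteq T(A)$.

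The main obstacle is soundness: showing $\poly{T} \subseteq \ioSem{\A}$, i.e.\ any $f_0$ satisfying $T$ extends to some $f \in \fullSem{\A}$. My strategy is to apply Hoffman's circulation theorem to the auxiliary network obtained from $\A$ by fixing the input and output arc values at $f_0$ (via the standard super-source/super-sink reduction). For each node-subset $S \subseteq \nn$, let $A_S$ be the union of input arcs with head in $S$ and output arcs with tail in $S$. Summing flow conservation over the nodes of $S$ and cancelling the within-$S$ internal flows yields, for any feasible $f$, the identity
\[
   E_{A_S}(f)\ =\ \sum_{\delta^+(S)} f(a)\ -\ \sum_{\delta^-(S)} f(a),
\]
where $\delta^+(S)$ and $\delta^-(S)$ are the internal arcs leaving and entering $S$. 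Consequently $T(A_S)$ is contained in the cut-capacity interval $[\sum_{\delta^+(S)} L - \sum_{\delta^-(S)} U,\ \sum_{\delta^+(S)} U - \sum_{\delta^-(S)} L]$. Hence any $f_0 \in \poly{T}$ automatically satisfies both the boundary-capacity constraints (from the singletons $A = \{a\}$) and every cut condition (from the subsets $A = A_S$), and by Hoffman's theorem this suffices to complete $f_0$ to a feasible flow on internal arcs.

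The key subtlety is confirming that the cut subsets $\{A_S : S \subseteq \nn\}$ together with the singletons really do exhaust the constraints Hoffman's theorem imposes on the auxiliary network: this reduces to enumerating cases according to whether the super-source and super-sink lie inside or outside the Hoffman cut, each case collapsing (via the fixed boundary values) to either a singleton condition or an $A_S$-condition. Once this reduction is carried out, the theorem follows.
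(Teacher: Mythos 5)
Your proof is correct, and it takes a somewhat different route from the paper's. The paper computes each interval $T(A)$ by a combinatorial construction: for every non-empty $A\subseteq\aaa_{\text{in}}\cup\aaa_{\text{out}}$ it builds an auxiliary network with two super-source nodes (aggregating $A\cap\aaa_{\text{in}}$ and $\aaa_{\text{in}}-A$) and two super-sink nodes (aggregating $A\cap\aaa_{\text{out}}$ and $\aaa_{\text{out}}-A$), and obtains the two endpoints of $T(A)$ from min-flow and max-flow computations on that network via min-cut/max-flow arguments; the soundness half of principality (that every point of $\poly{T}$ lifts to a member of $\fullSem{\A}$) is left to the construction and not argued separately in the sketch. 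You instead obtain the endpoints by generic linear programming over the polytope $\fullSem{\A}$, which makes completeness and tightness one-line observations, and you then make the soundness half fully explicit: the identity $E_{A_S}(f)=\sum_{\delta^+(S)}f-\sum_{\delta^-(S)}f$ shows that the constraints $T(A_S)$ for node-cuts $S\subseteq\nn$, together with the singleton constraints, imply exactly the Gale--Hoffman feasibility conditions of the super-source/super-sink circulation network with the boundary arcs pinned to $f_0$, so $f_0$ extends. This is the mathematically substantive point --- that the projection of the feasible-flow polytope onto the boundary arcs is cut out by $\pm 1$-coefficient inequalities indexed by subsets of boundary arcs --- and your Hoffman-based case analysis (super node inside or outside the cut, reducing to the upper-bound condition for $A_S$ or the lower-bound condition for $A_{\nn\setminus S}$) is the right way to establish it. What each approach buys: the paper's per-subset auxiliary-network construction points toward efficient combinatorial (max-flow rather than general LP) algorithms for computing the $2^m-1$ intervals, while yours isolates and proves the principality claim cleanly and would generalize to any situation where a Hoffman-type feasibility theorem is available. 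One small point worth stating explicitly in a final write-up: the degenerate cases $\fullSem{\A}=\varnothing$ (all intervals empty, soundness vacuous) and cuts $S$ with $A_S=\varnothing$ (whose Hoffman condition follows from the mere existence of some feasible flow rather than from $f_0$) should be dispatched separately, but both are immediate.
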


\Hide
{
\begin{sketch}
Let $\aaa_{\text{in}}=\inn{\A}$ and $\aaa_{\text{out}}=\out{\A}$.
To compute the interval $[r_1,r_2]$ which $T$ assigns to a
non-empty $A\subseteq \aaa_{\text{in}}\cup\aaa_{\text{out}}$, we carry
out the following steps.

Partition $A$ as $A = A_1\cup A_2$ where $A_1 = A\cap\aaa_{\text{in}}$ and 
$A_2 = A\cap\aaa_{\text{out}}$. Let $A'_1 = \aaa_{\text{in}} - A_1$
and $A'_2 = \aaa_{\text{out}} - A_2$. 
Next, introduce two new ``source'' nodes $n_{\text{in}}$ and
$n'_{\text{in}}$, to originate all the input arcs in $A_1$ from 
$n_{\text{in}}$ and all the input arcs in $A'_1$
from $n'_{\text{in}}$, \ie, $n_{\text{in}} = \tail{a}$ for every $a\in A_1$ and 
$n'_{\text{in}} = \tail{a}$ for every $a\in A'_1$.
Introduce two input arcs only, $a_{\text{in}}$ and $a'_{\text{in}}$, 
one entering $n_{\text{in}}$ and one entering $n'_{\text{in}}$. 

Similarly, introduce two new ``sink'' nodes $n_{\text{out}}$ and
$n'_{\text{out}}$, to direct all the output arcs in $A_2$ to
$n_{\text{out}}$ and all the output arcs in $A'_2$ to
$n'_{\text{out}}$, \ie, $n_{\text{out}} = \head{a}$ for every $a\in
A_2$ and $n'_{\text{out}} = \head{a}$ for every $a\in A'_2$.
Introduce two output arcs only, $a_{\text{out}}$ and
$a'_{\text{out}}$, one exiting $n_{\text{out}}$ and one exiting
$n'_{\text{out}}$.

We set $L(a_{\text{in}})=L(a'_{\text{in}})=L(a_{\text{out}})=L(a'_{\text{out}}) = 0$
and $U(a_{\text{in}})=U(a'_{\text{in}})=U(a_{\text{out}})=U(a'_{\text{out}}) =$
``a very large value'', \ie, the new arcs 
$a_{\text{in}}$, $a'_{\text{in}}$, $a_{\text{out}}$, and $a'_{\text{out}}$,
impose no lower bound and no upper bound on flows entering and exiting the network.
Call the resulting network $\A'$.

The lower-end $r_1$ of the desired interval $[r_1,r_2]$ is obtained
by computing: ``the value of the minimum flow that must enter $a_{\text{in}}$''
minus ``the value of the maximum flow that can exit $a_{\text{out}}$''.

Similarly, the upper-end $r_2$ is obtained by computing:
``the value of the maximum flow that can enter $a_{\text{in}}$'' minus  
``the value of the minimum flow that must exist $a_{\text{out}}$''. 

These values can be computed using graph theoretic ideas based on the
max-cut/min-flow theorem (for the lower-end $r_1$) and the
min-cut/max-flow theorem (for the upper-end $r_2$).  
\end{sketch}
}

\begin{example}
\label{ex:six-and-eight-node-networks}
Consider again the two small networks $\A$ and $\B$ from
Example~\ref{ex:illustrate-inductive-def}. % and~\ref{ex:not-valid-typing}. 
We assign capacities to their
arcs and compute their respective principal typings. The sets of arcs in 
$\A$ and $\B$ are, respectively:
\(
  \aaa = \Set{a_1,\ldots,a_{11}}\text{ and }
  \bbb = \Set{b_1,\ldots,b_{16}}.
\)
All the lower-bounds and most of the upper-bounds are trivial, \ie,
they do not restrict flow. Specifically, the lower-bound capacity on
every arc is $0$, and the upper-bound capacity on every arc is a
``very large number'', unless indicated otherwise in
Figure~\ref{fig:six-and-eight-node} by the numbers in rectangular
boxes, namely:
\begin{alignat*}{5}
&U(a_5) = 5,\quad &&U(a_8) = 10,\quad &&U(a_{11}) = 15,
   && &&\text{non-trivial upper-bounds in $\A$},
\\
&U(b_5) = 3, &&U(b_6) = 2, &&U(b_{9}) = 2, &&U(b_{10}) = 10, 
      \qquad &&\text{non-trivial upper-bounds in $\B$},
\\
&U(b_{11}) = 8,\quad &&U(b_{13}) = 8,\quad &&U(b_{15}) = 10,\quad &&U(b_{16}) = 7,  
      \qquad &&\text{non-trivial upper-bounds in $\B$}.
\end{alignat*}
We compute the principal typings $T_{\A}$ of $\A$ and $T_{\B}$ of
$\B$, by assigning a bounded interval to every subset of 
$\Set{a_1,a_2,a_3,a_4}$ and $\Set{b_1,b_2,b_3,b_4}$,
respectively. This is a total of 15 intervals for each,
ignoring the empty set to which we assign the empty interval $\varnothing$.
We use the construction in the proof (omitted in this paper, included
in the full report~\cite{kfouryDSL:2011}) of Theorem~\ref{thm:principal-typings}
to compute $T_{\A}$ and $T_{\B}$. 
%
%% However, the two networks are simple enough and it is intuitively
%% useful to compute them by inspection of their respective graphs:
%
% \clearpage
%
\begin{alignat*}{5}
&\text{$T_{\A}$ assignments}: \qquad
\\[1.5ex]
  & \framebox{$a_1:[0,15]$}\quad 
  && \fbox{$a_2:[0,25]$}\quad && \fbox{$-a_3:[-15,0]$}\quad 
  && \fbox{$-a_4:[-25,0]$}
\\
  & \fbox{$a_1+a_2:[0,30]$}\quad && \underline{a_1-a_3:[-10,10]}\quad 
  && a_1-a_4:[-25,15]
\\
  & a_2-a_3:[-15,25]\quad && \underline{a_2-a_4:[-10,10]}
  && \fbox{$-a_3-a_4:[-30,0]$}\quad 
\\
  & a_1+a_2-a_3: [0,25]\qquad&& a_1+a_2-a_4:[0,15] \qquad
   && a_1-a_3-a_4: [-25,0]\qquad && a_2-a_3-a_4: [-15,0]\quad
\\
  & a_1+a_2-a_3-a_4: [0,0]
\end{alignat*}
\begin{alignat*}{5}
% \\[1.5ex]
&\text{$T_{\B}$ assignments}: \qquad
\\[1.5ex]
  & \fbox{$b_1:[0,15]$}\quad && \fbox{$b_2:[0,25]$}\quad 
  && \fbox{$-b_3:[-15,0]$}\quad && \fbox{$-b_4:[-25,0]$}
\\
  & \fbox{$b_1+b_2:[0,30]$}\quad && \underline{b_1-b_3:[-10,12]}\quad
  && b_1-b_4:[-25,15]
\\
  & b_2-b_3:[-15,25]\quad && \underline{b_2-b_4:[-12,10]}
  && \fbox{$-b_3-b_4:[-30,0]$}\quad 
\\
  & b_1+b_2-b_3: [0,25]\qquad&& b_1+b_2-b_4:[0,15] \qquad
   && b_1-b_3-b_4: [-25,0]\qquad && b_2-b_3-b_4: [-15,0]\quad
\\
  & b_1+b_2-b_3-b_4: [0,0]
\end{alignat*}
The types in rectangular boxes are those of $\inT{T_{\A}}$ and
$\inT{T_{\B}}$ which are equivalent, and those of $\outT{T_{\A}}$ and
$\outT{T_{\B}}$ which are also equivalent. Thus,
$\inT{T_{\A}}\equiv \inT{T_{\B}}$ and $\outT{T_{\A}}\equiv \outT{T_{\B}}$.  
Nevertheless, $T_{\A}\not\equiv T_{\B}$, the
difference being in the (underlined) types assigned to some 
subsets mixing input and output arcs:
\begin{itemize}
\item $[-10,10]$ assigned by $T_{\A}$ to $\Set{a_1,a_3}$\ \ $\neq$
      \ \ $[-10,12]$ assigned by $T_{\B}$ to the corresponding $\Set{b_1,b_3}$,
\item $[-10,10]$ assigned by $T_{\A}$ to $\Set{a_2,a_4}$\ \ $\neq$
      \ \ $[-12,10]$ assigned by $T_{\B}$ to the corresponding $\Set{b_2,b_4}$.
\end{itemize}
\Hide
{
It is not difficult to check that:
\begin{alignat*}{4}
& \poly{T_{\A}} &&= && \{\Angles{r_1,r_2,r_3,r_4}\in {\reals}^4\,|
    &&\, 0\leqslant r_1,r_3\leqslant 15;\ 0\leqslant r_2,r_4\leqslant 25;
\\
& && && &&-10\leqslant r_1-r_3\leqslant 10;\ -10\leqslant r_2-r_4\leqslant 10; 
    \ 0\leqslant r_1+r_2 = r_3+r_4\leqslant 30\,\}
\\
& \poly{T_{\B}} &&= && \{\Angles{r_1,r_2,r_3,r_4}\in {\reals}^4\,|
    &&\, 0\leqslant r_1,r_3\leqslant 15;\ 0\leqslant r_2,r_4\leqslant 25;
\\
& && && &&-10\leqslant r_1-r_3\leqslant 12;\ -12\leqslant r_2-r_4\leqslant 10; 
    \ 0\leqslant r_1+r_2 = r_3+r_4\leqslant 30\,\}
\end{alignat*}
}
In this example, $T_{\B} <: T_{\A}$ because $\poly{T_{\A}}\subseteq\poly{T_{\B}}$.
The converse does not hold.
\Hide
{
Among other things, this example shows that satisfaction of the inequalities
in~(\ref{four}) and~(\ref{five}) in Definition~\ref{def:type-satisfaction-bis}
does not necessarily imply satisfaction of the inequalities in~(\ref{three-bis}).
} 
As a result, there are feasible flows in $\B$
which are not feasible flows in $\A$. 
\Hide
{
For example, if we set:
\begin{alignat*}{5}
   &f_0(a_1)\ \ &&=\ \ &&f_0(b_1)\ \ &&=\ \ &&15
\\
   &f_0(a_2)\ &&=\ &&f_0(b_2)\ &&=\ &&0
\\
   &f_0(a_3)\ &&=\ &&f_0(b_3)\ &&=\ &&3
\\
   &f_0(a_4)\ &&=\ &&f_0(b_4)\ &&=\ &&12
\end{alignat*}
it is easy to define a feasible flow $f$ extending $f_0$ in $\B$ 
but not in $\A$.
}
\end{example}

\begin{figure}[!ht] % [ht]  
        % h  Place the float here, i.e., 
        %    at the same point it occurs in the source text.
        % t  Position at the top of the page.
        % b  Position at the bottom of the page.
        % p  Put on a special page for floats only.
        % !  Override internal parameters Latex uses 
        %    for determining `good' float positions.
\begin{center}
\hspace*{.2in}
\begin{minipage}[b]{0.45\linewidth}
\includegraphics[scale=.3,trim=0cm 12.50cm 0cm 0.5cm,clip]{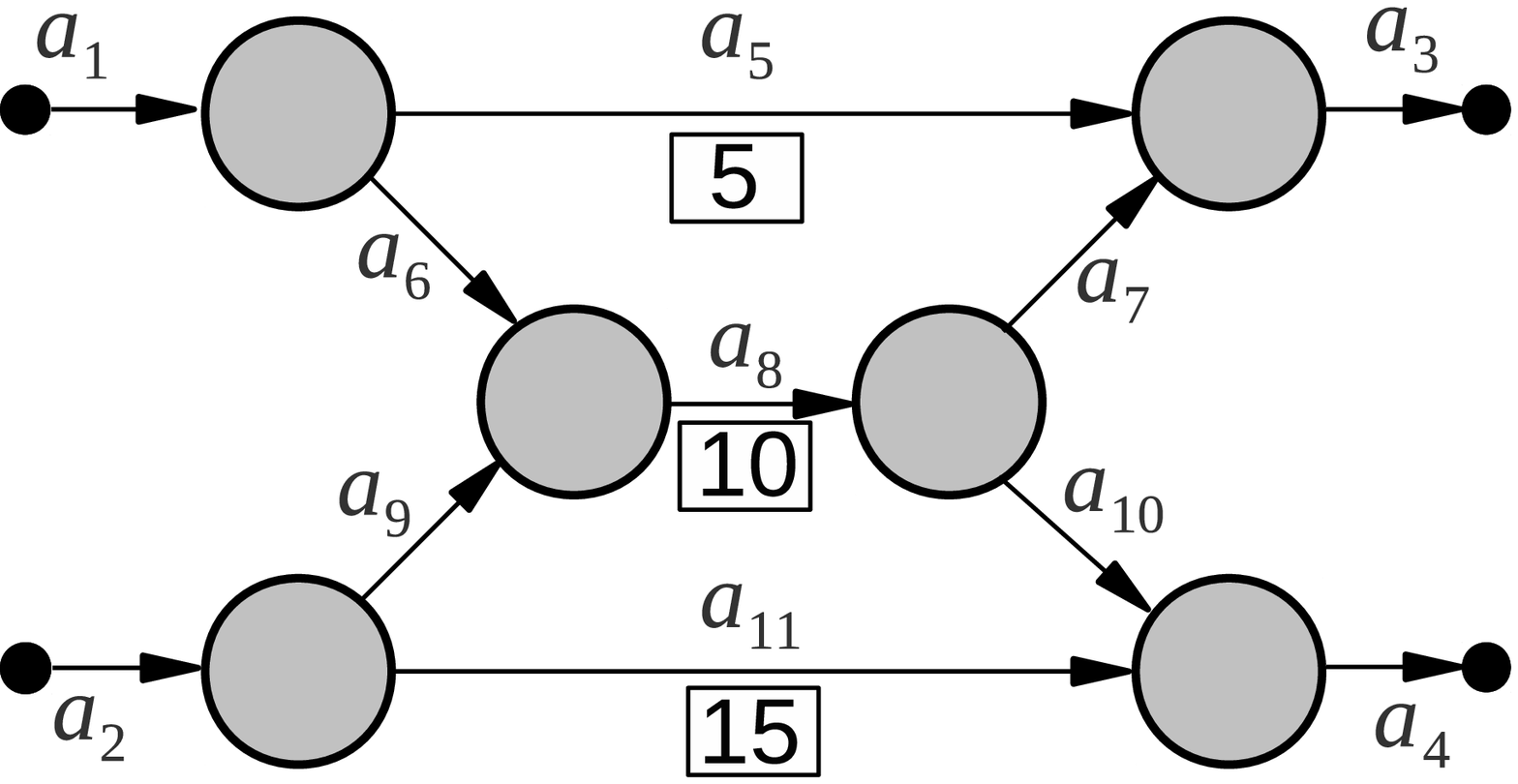}
% \caption{Network $\M$.} %
% \label{fig:six-node}
\end{minipage}
\begin{minipage}[b]{0.45\linewidth} 
\includegraphics[scale=.3,trim=0cm 10.50cm 0cm 0.5cm,clip]{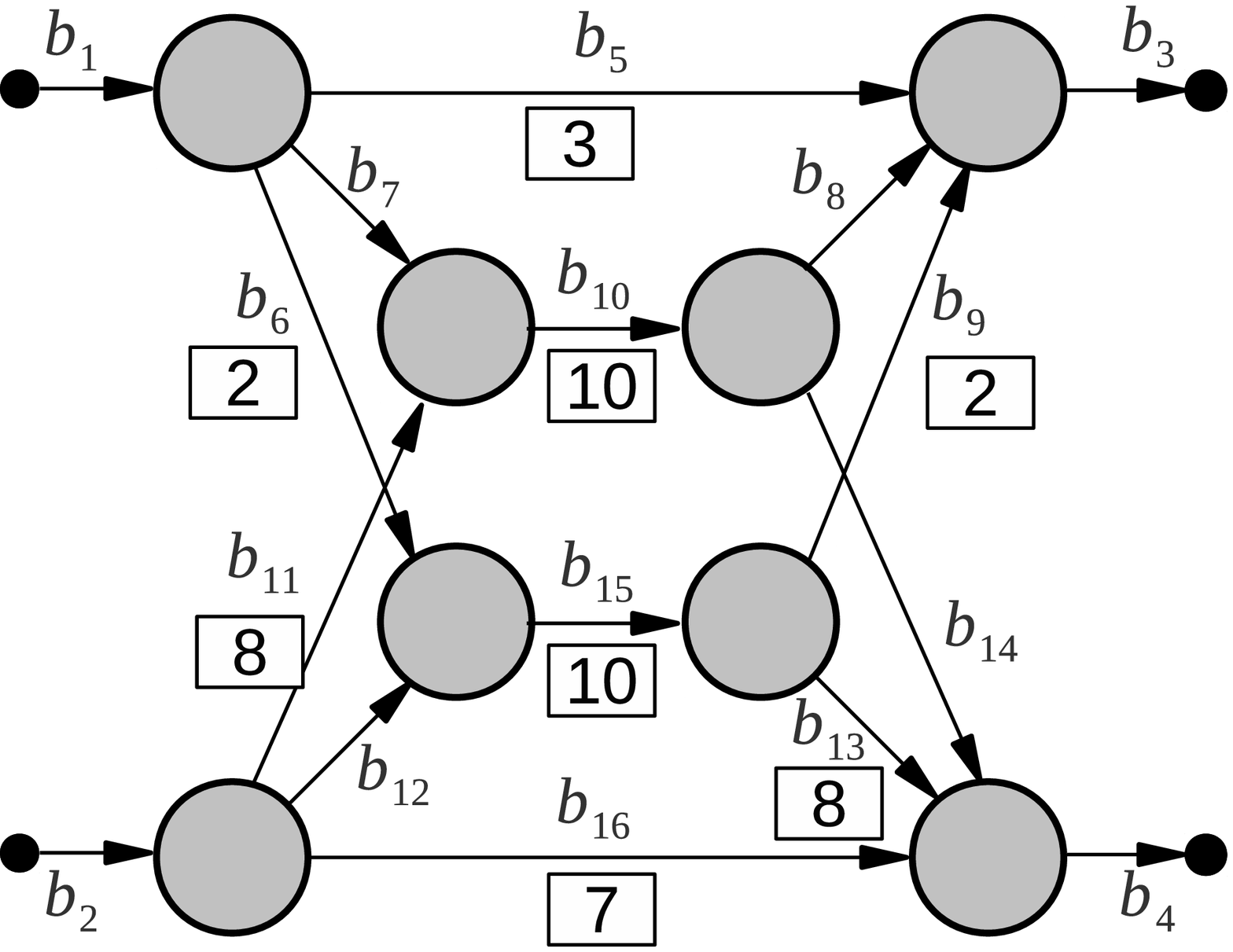}
% \caption{Network $\N$.} %
% \label{fig:eight-node}
\end{minipage}
\caption{An assignment of arc capacities
         for small networks $\A$ (on the left) and $\B$ (on the right)
         in Example~\ref{ex:six-and-eight-node-networks}.}
\label{fig:six-and-eight-node}
\end{center}
\end{figure}

\Hide
{
\begin{example}
\label{ex:weakly-valid-but-not-valid}
\label{ex:valid-but-not-principal}
In Example~\ref{ex:six-and-eight-node-networks} we determined a
principal typing $T_{\A}$ for $\A$, which is also uniformly tight and
therefore unique. It is also valid and there are many other valid
typings for $\A$.  The following $T$ is valid for $\A$, but not principal,
and also for $\B$ after the appropriate renaming of input and output
arcs.
\begin{alignat*}{5}
&\text{$T$ assignments}: \qquad
\\[1.2ex]
  &  a_1:[0,15]\quad 
  && a_2:[0,25]\quad && -a_3:[-15,0]\quad 
  && -a_4:[-25,0]
\\
  & a_1+a_2:[0,30]\quad && \underline{a_1-a_3:[-5,5]}\quad 
  && a_1-a_4:[-25,15]
\\
  & a_2-a_3:[-15,25]\quad && \underline{a_2-a_4:[-5,5]}
  && -a_3-a_4:[-30,0]\quad 
\\
  & a_1+a_2-a_3: [0,25]\qquad&& a_1+a_2-a_4:[0,15] \qquad
   && a_1-a_3-a_4: [-25,0]\qquad && a_2-a_3-a_4: [-15,0]\quad
\\
  & a_1+a_2-a_3-a_4: [0,0]
\end{alignat*}
The underlined type assignments of $T$ are the only differences with
$T_{\A}$ and $T_{\B}$. To show that $T$ is valid, we need to show that
every flow $f_0$ satisfying $T$ can be extended to a feasible flow
$f$. To see this, consider maximal ``input-skewed'' flows: there
are two such flows here, $f_1$ and $f_2$, where $f_1$ and $f_2$
maximize flow through $a_1$ and $a_2$, respectively.  
If $f_1$ is input-skewed in favor of $a_1$, then
$f_1(a_1) = 15$, thus forcing $f_1(a_2) = 15$, and we can easily
extend $f_1$ to a feasible flow $f_1'$ in $\A$.  Likewise, if $f_2$ is
input-skewed in favor of $a_2$, then $f_2(a_2) = 25$, forcing
$f_2(a_1) = 5$, and we extend $f_2$ to a feasible flow $f_2'$ in $\A$.
Every other flow satisfying $T$ falls between the two extreme
cases just described, corresponding to $f_1$ and $f_2$.

Hence, $T$ is valid for $\A$. But $T$ cannot be principal, because
there are feasible flows $f$ in $\A$ such that $f(a_1) - f(a_3) = 10$
or $f(a_2) - f(a_4) = 10$, thus violating the type $[-5,5]$ assigned to
both $\Set{a_1,a_3}$ and $\Set{a_2,a_4}$ by $T$.

Many valid typings for $\A$, and for $\B$ after appropriate renaming
of input and output arcs, are equivalent to \emph{partial} typings
with a far more economical assignment of as few as three
type/intervals.  An example of such a valid partial typing $T'$ is the
following:
\begin{alignat*}{5}
&\text{$T'$ assignments}: \qquad
\\[1.2ex]
  &  a_1:[0,5]\qquad a_2:[0,10]\qquad && a_1+a_2-a_3-a_4:[0,0] 
\end{alignat*}
It is easy to check that any (total) typing extending $T'$ is valid.
\end{example}
}

% \vspace*{-.32in}
\section{A Typing System}
\label{sect:inference-bis}
\label{sect:typing-rules}
%% typing-rules.tex

We set up a formal system for assigning typings to network
specifications. The
process of inferring typings, based on this system, is deferred to
Section~\ref{sect:typing-inference}.  We need several preliminary
definitions.

\subsection{Operations on Typings}
\label{sect:operations}

Let $(\N_1:T_1)$ and $(\N_2:T_2)$ be two typings for 
two networks $\N_1$ and $\N_2$. 
The four arc sets: $\inn{\N_1}$, $\out{\N_1}$, $\inn{\N_2}$, and
$\out{\N_2}$, are pairwise disjoint. By our inductive
definition in Section~\ref{sect:inductive},
$\inn{\N_1}\cup\inn{\N_2}$ is the set of input arcs,
and $\out{\N_1}\cup\out{\N_2}$ the set of output arcs,
for the network specification $\bigl(\ConnP{\N_1}{\N_2}\bigr)$.
We define the typing $\bigl(\ConnPT{T_1}{T_2}\bigr)$ for
the specification $\bigl(\ConnP{\N_1}{\N_2}\bigr)$ as follows:
\[
   \bigl(\ConnPT{T_1}{T_2}\bigr)(A) = 
         \begin{cases}
         T_1(A) & \text{if $A\subseteq \inn{\N_1}\cup\out{\N_1}$},
         \\[2ex]
         T_2(A) & \text{if $A\subseteq \inn{\N_2}\cup\out{\N_2}$},
         \\[2ex]
         T_1(A_1) \oplus T_2(A_2)
             & \text{if $A = A_1\cup A_2$ where}
         \\
             & \text{$A_1\subseteq \inn{\N_1}\cup\out{\N_1}$
                     and $A_2\subseteq \inn{\N_2}\cup\out{\N_2}$}.
         \end{cases}
\]
where the operation ``$\oplus$'' on intervals is defined 
as follows: $[r_1,r_2]\oplus [r_1',r_2'] = [r_1+r_1',r_2+r'_2]$.
 
\begin{lemma}
\label{lem:typing-of-parallel}
If $(\N_1:T_1)$ and $(\N_2:T_2)$ are principal typings, respectively
valid typings, then so is the typing 
$\bigl((\ConnP{\N_1}{\N_2}):(\ConnPT{T_1}{T_2})\bigr)$ principal, 
respectively valid.
\end{lemma}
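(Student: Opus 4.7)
The plan is to reduce both claims (preservation of validity and of principality) to a single polytope-factorization identity. Using the reformulation at the end of Section~\ref{sect:valid-vs-principal}, validity of $(\N:T)$ is equivalent to $\poly{T}\subseteq\ioSem{\N}$, and principality to $\poly{T}=\ioSem{\N}$. So it suffices to establish the two Cartesian-product equalities
\[
   \poly{\ConnPT{T_1}{T_2}}\ =\ \poly{T_1}\times\poly{T_2}
   \qquad\text{and}\qquad
   \ioSem{\ConnP{\N_1}{\N_2}}\ =\ \ioSem{\N_1}\times\ioSem{\N_2},
\]
interpreted in the Euclidean hyperspace indexed by the disjoint union $(\inn{\N_1}\cup\out{\N_1})\uplus(\inn{\N_2}\cup\out{\N_2})$. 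The semantic identity on the right is essentially clause~3 of the inductive definition of $\fullSem{\cdot}$ in Section~\ref{sect:flows}, specialized to the I/O arcs: $(\ConnPT{f_1}{f_2})$ is feasible iff $f_1$ and $f_2$ are, and restriction to the I/O arcs decomposes as a pair of independent I/O restrictions since the two arc sets are disjoint.

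For the typing-side identity I would prove two inclusions. For $(\subseteq)$, given $(\bfmath{r}_1,\bfmath{r}_2)\in\poly{\ConnPT{T_1}{T_2}}$, I apply the defining constraints only at subsets $A\subseteq\inn{\N_1}\cup\out{\N_1}$; the first clause in the definition of $T_1\|T_2$ yields $(T_1\|T_2)(A)=T_1(A)$, so $\bfmath{r}_1$ satisfies every $T_1$-constraint and therefore lies in $\poly{T_1}$, and symmetrically $\bfmath{r}_2\in\poly{T_2}$. For $(\supseteq)$, suppose $\bfmath{r}_i\in\poly{T_i}$ for $i=1,2$. The pure-subset constraints of $T_1\|T_2$ are satisfied immediately by the respective components. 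For a mixed $A = A_1\cup A_2$, the induced signed linear form evaluates to $x_1+x_2$, where $x_i$ is the signed sum built from $\bfmath{r}_i$ over $A_i$; by hypothesis $x_i\in T_i(A_i)$, and the definition $[r,s]\oplus[r',s']=[r+r',s+s']$ gives $x_1+x_2\in T_1(A_1)\oplus T_2(A_2) = (T_1\|T_2)(A)$ directly.

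With both factorizations in hand, validity is immediate: $\poly{T_1}\subseteq\ioSem{\N_1}$ and $\poly{T_2}\subseteq\ioSem{\N_2}$ yield $\poly{T_1}\times\poly{T_2}\subseteq\ioSem{\N_1}\times\ioSem{\N_2}$, \ie, $\poly{\ConnPT{T_1}{T_2}}\subseteq\ioSem{\ConnP{\N_1}{\N_2}}$; and principality similarly follows from the corresponding pair of equalities. The only place any care is needed is in the $(\supseteq)$ direction of the polytope identity, to see that the mixed-case clause contributes no constraints beyond those already imposed by the two pure cases. That is exactly the content of $\oplus$-additivity on intervals, and it works precisely because flow conservation and capacity constraints decouple cleanly across the disjoint arc sets of $\N_1$ and $\N_2$, with no additional interaction introduced by the $\ConnP{}{}$ constructor.
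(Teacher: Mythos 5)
Your proof is correct. The paper itself gives no argument for this lemma (the proof is deferred to the full report), so there is nothing to compare against line by line; but the two factorizations you isolate, $\poly{\ConnPT{T_1}{T_2}} = \poly{T_1}\times\poly{T_2}$ and $\ioSem{\ConnP{\N_1}{\N_2}} = \ioSem{\N_1}\times\ioSem{\N_2}$, are exactly what the definitions of $\ConnPT{}{}$ on typings and on flows are engineered to deliver, and your verification of both inclusions (in particular that the mixed-set clause $T_1(A_1)\oplus T_2(A_2)$ imposes no constraint beyond the two pure clauses, by interval additivity) is sound. The only definitional edge case, which is a quirk of the paper's definition rather than a gap in your argument, is that the mixed clause must be read as applying only when both $A_1$ and $A_2$ are nonempty, so that it does not overlap with the pure clauses.
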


%%%
\Hide
{
\begin{proof} Delayed.
\end{proof}
}
%%%

Let $(\N:T)$ be a typing with $\Angles{a,b}\in\out{\N}\times
\inn{\N}$, with
$\dimI{\N}=\Angles{a_1,\ldots,a_{\ell}}$ and 
$\dimO{\N}=\Angles{a_{\ell+1},\ldots,a_{m}}$, so that $b = a_i$
and $a = a_j$ for some $1\leqslant i\leqslant\ell$ and
$\ell+1\leqslant j\leqslant m$.
In the full report~\cite{kfouryDSL:2011}
we explain how to define a typing we denote $\loopT{T}{\Angles{a,b}}$ from
the given typing $T$ for the network specification $\Loop{\Angles{a,b}}{\N}$ 
satisfying the equation:
\(
    \poly{\loopT{T}{\Angles{a,b}}}
    \ =\ % \proj{\aaa_{\text{in}}\cup\aaa_{\text{out}}}{}{\poly{T} \cap \poly{a=b}}
    \poly{T} \cap \poly{a=b}
\)
where
\[
   \poly{a=b}\ =\ \Set{\,\Angles{r_1,\ldots,r_m}\in {\reals}^m\;|
        \;r_i = r_j\,}\quad
        \text{where\ $b = a_i$ and $a = a_j$ 
              with $1\leqslant i\leqslant\ell< j\leqslant m$}.
\]

\Hide
{
we explain how to define the
typing we denote $\loopT{T}{\Angles{a,b}}$ for the network
specification $\Loop{\Angles{a,b}}{\N}$. Suppose
$\inn{\N}\neq\varnothing$, $\out{\N}\neq\varnothing$, and 
$m = \size{\inn{\N}} + \size{\out{\N}}$. We thus have 
the ordered sets:
\begin{alignat*}{3}
  &\dimI{\N}\, &&=\ &&\Angles{a_1,\ldots,a_{\ell}}, \\
  &\dimO{\N}\, &&=\ &&\Angles{a_{\ell+1},\ldots,a_{m}},\\
  &\dimIO{\N}  &&=  &&\dimI{\N}\cdot\dimO{\N}.
\end{alignat*}
If $b = a_i$ and $a = a_j$, where $1\leqslant i\leqslant\ell$
and $\ell+1\leqslant j\leqslant m$, then an equation of the form
$a=b$ defines a hyperplane in the space ${\reals}^m$, a special
case of a polyhedron, which we also denote $\poly{a=b}$:
\[
   \poly{a=b}\ =\ \Set{\,\Angles{r_1,\ldots,r_m}\in {\reals}^m\;|
        \;r_i = r_j\,}
\]
(We have abused notation slightly, because we have used $\poly{\ }$ to
denote ``polytope'' which is a bounded polyhedron. The hyperplane
defined by $a=b$ is not bounded.) Let
\[
 \aaa_{\text{in}} \,=\,\inn{\N} - \Set{b}
 \quad\text{and}\quad
 \aaa_{\text{out}} \,=\,\out{\N} - \Set{a}
\] 
which are the sets of input arcs and output arcs in 
$\Loop{\Angles{a,b}}{\N}$. We define the function 
\[
   \loopT{T}{\Angles{a,b}}\;:
   \;\power{\aaa_{\text{in}}\cup\aaa_{\text{out}}}\to {\reals}\times {\reals}
\]
for every $A\subseteq \aaa_{\text{in}}\cup\aaa_{\text{out}}$
as follows -- we use notation from Section~\ref{sect:notational}:
\[ 
  \loopT{T}{\Angles{a,b}}\,(A) = 
          \begin{cases}
          \varnothing     &\text{if $S=\varnothing$}, \\
          [\min S,\max S] &\text{otherwise},
          \end{cases}
\]
where, posing $A'= A\cup\Set{a,b}$:
\[
  S = \Bigl\{\,\sum\proj{\aaa_{\text{in}}}{}{\bfmath{r}}
                   - \sum\proj{\aaa_{\text{out}}}{}{\bfmath{r}}
                   \;\Bigl|\;
        \bfmath{r}\in\; \half{T_{\geqslant}(A')}\;\cap 
            \; \half{T_{\leqslant}(A')}\; \cap \;\poly{a=b}\,\Bigr\}
\]
There is plenty of notation in the preceding for precision. More
succintly, but less explicitly, we can write:
\[
    \poly{\loopT{T}{\Angles{a,b}}}
    \ =\ % \proj{\aaa_{\text{in}}\cup\aaa_{\text{out}}}{}{\poly{T} \cap \poly{a=b}}
    \poly{T} \cap \poly{a=b}
\]
though this does not yet assign a type (\ie, an interval of reals)
to every set in $\power{\aaa_{\text{in}}\cup\aaa_{\text{out}}}$. 
Example~\ref{ex:merge-gadget-bis} illustrates the preceding notions
on a simple small network.
}

\begin{lemma}
\label{lem:typing-of-loop}
If $(\N:T)$ is a principal (respectively, valid) typing 
and $\Angles{a,b}\in\inn{\N}\times\out{\N}$,  
then \emph{$\bigl(\Loop{\Angles{a,b}}{\N}:\loopT{T}{\Angles{a,b}}\bigr)$}
is a principal (respectively, valid) typing.
\end{lemma}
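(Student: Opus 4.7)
The plan is to transfer both directions (soundness and completeness) across the loop construction by combining two ingredients: the formal semantics of $\Loop{\Angles{a,b}}{\N}$ from clause~5 of Section~\ref{sect:semantics}, which says $\fullSem{\Loop{\Angles{a,b}}{\N}} = \Set{f \in \fullSem{\N} \,|\, f(a) = f(b)}$, and the defining polytope identity for the operation $\loopT{\cdot}{\Angles{a,b}}$, namely $\poly{\loopT{T}{\Angles{a,b}}} = \proj{A}{}{\poly{T} \cap \poly{a=b}}$, where $A = (\inn{\N}\setminus\Set{b}) \cup (\out{\N}\setminus\Set{a})$ is the input/output arc set of the looped network. (The excerpt writes this informally as ``$\poly{T} \cap \poly{a=b}$''; the projection is implicit since the two polytopes live in Euclidean spaces of different dimensions.) The strategy is then to push flows back and forth through this projection, using the respective hypotheses on $T$ at the $\N$-level.

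For the \textbf{validity} direction, I would start with an arbitrary $f_0 : A \to \nreals$ satisfying $\loopT{T}{\Angles{a,b}}$, \emph{i.e.} a point of $\poly{\loopT{T}{\Angles{a,b}}}$. By the defining identity there exists $r \in \nreals$ and an extension $f_0'$ of $f_0$ to $\inn{\N}\cup\out{\N}$ with $f_0'(a) = f_0'(b) = r$ such that $f_0'$ satisfies $T$. Since $T$ is valid for $\N$, $f_0'$ extends to a feasible flow $f \in \fullSem{\N}$. Because $f(a) = f_0'(a) = f_0'(b) = f(b)$, clause~5 of the semantics places $f$ in $\fullSem{\Loop{\Angles{a,b}}{\N}}$, and its restriction to $A$ is exactly $f_0$. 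This establishes soundness of $\loopT{T}{\Angles{a,b}}$.

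For the \textbf{principality} direction, one additionally needs completeness. Take any $f \in \fullSem{\Loop{\Angles{a,b}}{\N}}$; by the semantics clause, $f \in \fullSem{\N}$ with $f(a) = f(b)$. Since $T$ is principal for $\N$, $f$ satisfies $T$, and together with $f(a) = f(b)$ its restriction to $\inn{\N}\cup\out{\N}$ is a point of $\poly{T}\cap\poly{a=b}$. Projecting to the coordinates $A$ yields a point of $\proj{A}{}{\poly{T}\cap\poly{a=b}} = \poly{\loopT{T}{\Angles{a,b}}}$, \emph{i.e.} $\rest{f}{A}$ satisfies $\loopT{T}{\Angles{a,b}}$. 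Combined with soundness, this gives $\poly{\loopT{T}{\Angles{a,b}}} = \ioSem{\Loop{\Angles{a,b}}{\N}}$, hence principality.

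The main obstacle is the validity step: recovering the extension $f_0'$ of $f_0$ to the full set $\inn{\N}\cup\out{\N}$ that additionally lies in $\poly{T}\cap\poly{a=b}$. A point of the \emph{projected} polytope need not, \emph{a priori}, come from a point of $\poly{T}\cap\poly{a=b}$ with the right value on $a$ and $b$ unless $\loopT{T}{\Angles{a,b}}$ is by construction the projection (rather than some larger polytope containing the projection). The clean way around this is to take the operational definition of $\loopT{T}{\Angles{a,b}}$ from the full report, in which each interval it assigns to some $A' \subseteq A$ is computed by optimizing $\sum(A'\cap \inn{\N}) - \sum(A'\cap\out{\N})$ over exactly $\poly{T}\cap\poly{a=b}$; every such point then lifts trivially by construction. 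This reduces the lemma to an application of the tight-typing machinery of Proposition~\ref{prop:converting-to-uniformly-tight} together with the validity/principality hypothesis on $T$.
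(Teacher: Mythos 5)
The paper contains no proof of this lemma --- the proof is explicitly marked as delayed to the full report --- so there is no argument of record to compare yours against line by line. That said, your proposal is the natural derivation and is correct modulo the one point you yourself flag at the end. Your two ingredients are exactly the right ones: clause 5 of the semantics gives $\fullSem{\Loop{\Angles{a,b}}{\N}} = \Set{f\in\fullSem{\N}\;|\;f(a)=f(b)}$, from which $\ioSem{\Loop{\Angles{a,b}}{\N}} = \proj{A}{}{\ioSem{\N}\cap\poly{a=b}}$ with $A$ the I/O arc set of the looped network; combining this with the identity $\poly{\loopT{T}{\Angles{a,b}}} = \proj{A}{}{\poly{T}\cap\poly{a=b}}$ and with the characterizations ``valid iff $\poly{T}\subseteq\ioSem{\N}$'' and ``principal iff $\poly{T}=\ioSem{\N}$'' yields both halves by monotonicity of intersection and projection. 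The caveat you raise is genuine and is really the only substantive content of the lemma: soundness requires every point of $\poly{\loopT{T}{\Angles{a,b}}}$ to lift to a point of $\poly{T}\cap\poly{a=b}$, which holds only if the constructed typing's polytope \emph{equals} the projection rather than merely containing it. Since a typing polytope is cut out exclusively by inequalities with $\pm 1$ coefficients, whereas the projection of $\poly{T}\cap\poly{a=b}$ is a priori an arbitrary polytope, the asserted equality is itself a nontrivial claim about this restricted class of polytopes (and is what the full report must establish), not a formal consequence of assigning each subset the interval $[\min S,\max S]$ computed over the intersection --- that recipe only guarantees containment of the projection, which is the direction completeness needs, while validity consumes the hard direction. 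Your reduction of the lemma to that equality is the right move; it would strengthen the write-up to state explicitly which inclusion each half of the lemma uses.
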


%%%
\Hide
{
\begin{proof} Delayed.
\end{proof}
}
%%%

\Hide{
Let $(\N_1:T_1)$ and $(\N_2:T_2)$ be two network typings. Assume that
$\N_1$ and $\N_2$ have similar input and output dimensions:
\begin{alignat*}{6}
  &\dimI{\N_1}\, &&=\ &&\Angles{a_1,\ldots,a_{\ell}}, \quad
     &&\dimI{\N_2}\, &&=\ &&\Angles{b_1,\ldots,b_{\ell}}, \\ 
  &\dimO{\N_2}\, &&=\ &&\Angles{a_{\ell+1},\ldots,a_{m}}, \quad
     &&\dimO{\N_2}\, &&=\ &&\Angles{b_{\ell+1},\ldots,b_{m}}, \\
  &\dimIO{\N_1}  &&=  &&\dimI{\N_1}\cdot\dimO{\N_1}, \quad
     &&\dimIO{\N_2}  &&=  &&\dimI{\N_2}\cdot\dimO{\N_2},
\end{alignat*}
for some natural numbers $0\leqslant\ell\leqslant m$. Assume
$m\geqslant 1$, otherwise there will be no meaning to the definitions
to follow. $\poly{T_1}$ and $\poly{T_2}$ are polytopes in the same
space ${\reals}^{m}$ and we can compare them.  The $m$ dimensions are
differently named in $T_1$ and $T_2$, but this is not a problem, as
there is a one-one onto correspondence between the two. Let
$\aaa_{\text{in}}$ be a common renaming of $\inn{\N_1}$ and
$\inn{\N_2}$, and $\aaa_{\text{out}}$ a common renaming of
$\out{\N_1}$ and $\out{\N_2}$. We define the typing $\andT{T_1}{T_2}$
as follows:
\[
   \bigl(\andT{T_1}{T_2}\bigr)(A)\ =\ T_1(A)\cap T_2(A)
\]
for every $A\in\power{\aaa_{\text{in}}\cup\aaa_{\text{out}}}$.

\begin{lemma}
\label{lem:typing-of-and}
Let $(\N_1:T_1)$ and $(\N_2:T_2)$ be network typings where
$\N_1$ and $\N_2$ have the same input and output dimensions.
We then have the following:
\begin{enumerate}
\item $\poly{\andT{T_1}{T_2}} = \poly{T_1}\cap\poly{T_2}$.
\item $\poly{T_1} <: \poly{\andT{T_1}{T_2}}$ and
      $\poly{T_2} <: \poly{\andT{T_1}{T_2}}$.
\item If $(\N_1:T_1)$ and $(\N_2:T_2)$ are valid typings,
      then so are $(\N_1:\andT{T_1}{T_2})$ 
      and $(\N_2:\andT{T_1}{T_2})$ valid typings.
\end{enumerate}
\end{lemma}

%%%
\Hide
{
\begin{proof}
Part 1 is a straightforward consequence of the definition of 
$\andT{T_1}{T_2}$. Part 2 follows from part 1 and the definition
of subtyping ``$<:$'' in Section~\ref{sect:valid-vs-principal}.
Proof of part 3 is delayed.
\end{proof}
}
%%%

It is clear that the operation ``$\andT{}{}$'' on typings is
commutative and associative. Hence, for a set of network typings
$\Set{(\M_1:T_1),\ldots,(\M_k:T_k)}$ where all the members of
$\Set{\M_1,\ldots,\M_k}$ have the same input and output dimensions,
it is meaningful to write 
$\andT{T'_1}{\andT{T'_2}{\andT{\cdots}{T'_k}}}$ without parentheses
and where $T'_1, T'_2,\ldots,T'_k$ is a permutation of 
$T_1, T_2,\ldots,T_k$.
}

\medskip
%%%
\Hide
{
\begin{example}
\label{ex:merge-gadget-bis}
This continues our examination of small network
\textbf{\textsf{M}} in Examples~\ref{ex:illustrate-inductive-def},
\ref{ex:merge-gadget}, and~\ref{ex:merge-gadget-again}. 
\textbf{\textsf{M}} is shown again in Figure~\ref{fig:merge-network-bis},
but now with both lower-bound and upper-bound capacities inserted:
\begin{alignat*}{2}
  &L(d_1) = 0\qquad &&U(d_1) = 15 \\
  &L(d_2) = x\qquad &&U(d_2) = 35 \\
  &L(d_3) = y\qquad &&U(d_3) = 35 
\end{alignat*}
where $x$ and $y$ are set to different values to illustrate
the effect of connecting output arc $d_3$ to input arc $d_1$,
which is the result of constructing 
$\Loop{\Angles{d_3,d_1}}{\textbf{\textsf{M}}}$.
If $x = y = 0$, the principal typing for \textbf{\textsf{M}}
is $\widetilde{T}$, already mentioned in 
Example~\ref{ex:merge-gadget}:
\begin{alignat*}{6}
&(i)\quad && d_1\ :\ [0,15]\qquad &&(ii)\quad && d_2\ :\ [0,35]\qquad 
    &&(iii)\quad && -d_3\ :\ [-35,0]
\\
&(iv)\quad && d_1+d_2\ :\ [0,35]\qquad \qquad
&&(v)\quad && d_1-d_3\ :\ [-35,0]\qquad \qquad
&&(vi)\quad && d_2-d_3\ :\ [-15,0]
\\
&(vii)\quad && d_1+d_2-d_3\ :\ [0,0]
\end{alignat*}
If $x = 0$ and $y = 10$, a principal typing for \textbf{\textsf{M}}
is specified by the following interval assignment -- call
it $\widetilde{T}'$:
\begin{alignat*}{6}
&(i)\quad && d_1\ :\ [0,15]\qquad &&(ii)\quad && d_2\ :\ [0,35]\qquad 
    &&(iii)\quad && -d_3\ :\ [-35,-10]
\\
&(iv)\quad && d_1+d_2\ :\ [10,35]\qquad \qquad
&&(v)\quad && d_1-d_3\ :\ [-35,0]\qquad \qquad
&&(vi)\quad && d_2-d_3\ :\ [-15,0]
\\
&(vii)\quad && d_1+d_2-d_3\ :\ [0,0]
\end{alignat*}
If $x = 5$ and $y = 0$, a principal typing for \textbf{\textsf{M}}
is specified by the following interval assignment -- call
it $\widetilde{T}''$:
\begin{alignat*}{6}
&(i)\quad && d_1\ :\ [0,15]\qquad &&(ii)\quad && d_2\ :\ [5,35]\qquad 
    &&(iii)\quad && -d_3\ :\ [-35,-5]
\\
&(iv)\quad && d_1+d_2\ :\ [5,35]\qquad \qquad
&&(v)\quad && d_1-d_3\ :\ [-35,10]\qquad \qquad
&&(vi)\quad && d_2-d_3\ :\ [-30,0]
\\
&(vii)\quad && d_1+d_2-d_3\ :\ [0,0]
\end{alignat*}
There is considerable redundancy in $\widetilde{T}$,
$\widetilde{T}'$, and $\widetilde{T}''$, in that many of the type
assignments can be removed without changing the meaning of
$\poly{\widetilde{T}}$, $\poly{\widetilde{T}'}$, and
$\poly{\widetilde{T}''}$. We will not worry about this redundancy here
-- we already examined how to remove it in the case of $\widetilde{T}$
in Example~\ref{ex:merge-gadget} and left the general case as  
Open Problem~\ref{pro:minimal+tight-typings}. 
Graphical representations are shown in
Figure~\ref{fig:graphical-explanation}: The three lightly shaded surfaces
are precisely $\poly{\widetilde{T}}$, $\poly{\widetilde{T}'}$, and
$\poly{\widetilde{T}''}$, from left to right, respectively. 

By Lemma~\ref{lem:typing-of-loop}, if $T$ is a principal typing
for \textbf{\textsf{M}} and $\poly{T}\cap\poly{d_3=d_1}\neq\varnothing$,
then $\loopT{T}{\Angles{d_3,d_1}}$ is a principal typing for
$\Loop{\Angles{d_3,d_1}}{\textbf{\textsf{M}}}$. It is readily checked that:
\begin{alignat*}{3}
   &\poly{\widetilde{T}}\cap\poly{d_3=d_1} &&\neq\ &&\varnothing
\\
   &\poly{\widetilde{T}'}\cap\poly{d_3=d_1}&&\neq &&\varnothing
\\
   &\poly{\widetilde{T}''}\cap\poly{d_3=d_1} &&= &&\varnothing
\end{alignat*}
Hence, when $x=y=0$ (resp. when $x=0$ and $y=10$),
$\loopT{\widetilde{T}}{\Angles{d_3,d_1}}$ is a principal typing (resp.
$\loopT{\widetilde{T}'}{\Angles{d_3,d_1}}$ is a principal typing)
for $\Loop{\Angles{d_3,d_1}}{\textbf{\textsf{M}}}$.
On the other hand, when $x=5$ and $y=0$, there is no feasible flow
in $\Loop{\Angles{d_3,d_1}}{\textbf{\textsf{M}}}$ and its
principal typing is $\loopT{\widetilde{T}''}{\Angles{d_3,d_1}} = \varnothing$. 
\end{example}

\begin{figure}[!ht] %[ht]  
        % h  Place the float here, i.e., 
        %    at the same point it occurs in the source text.
        % t  Position at the top of the page.
        % b  Position at the bottom of the page.
        % p  Put on a special page for floats only.
        % !  Override internal parameters Latex uses 
        %    for determining `good' float positions.
\centering
% \fbox{
% \begin{minipage}[b]{0.9\linewidth}
% \begin{center}
\includegraphics[scale=.25,trim=1.2cm 18.00cm 1.20cm 1.0cm,clip]{Figures/merge-network-bis}
% \end{center}
% }
\caption{An assignment of lower-bound and upper-bound
         capacities for the small network 
         \textbf{\textsf{M}} in Example~\ref{ex:merge-gadget-bis}.
         } %
\label{fig:merge-network-bis}
% \end{minipage}
\end{figure}

\begin{figure}[!ht] %[ht]  
        % h  Place the float here, i.e., 
        %    at the same point it occurs in the source text.
        % t  Position at the top of the page.
        % b  Position at the bottom of the page.
        % p  Put on a special page for floats only.
        % !  Override internal parameters Latex uses 
        %    for determining `good' float positions.
\centering
% \fbox{
\begin{minipage}[b]{0.32\linewidth}
\includegraphics[scale=.23,trim=0cm 5.50cm 0cm 0cm,clip]{Figures/graphical-explanation1}
\end{minipage}
\begin{minipage}[b]{0.32\linewidth}
\includegraphics[scale=.23,trim=0cm 5.50cm 0cm 0cm,clip]{Figures/graphical-explanation2}
\end{minipage}
\begin{minipage}[b]{0.32\linewidth}
\includegraphics[scale=.23,trim=0cm 5.50cm 0cm 0cm,clip]{Figures/graphical-explanation3}
\end{minipage}
% }
\caption{From Example~\ref{ex:merge-gadget-bis},
         $\poly{\widetilde{T}}$, % is on the left,
         $\poly{\widetilde{T}'}$, % in the middle, and
         and $\poly{\widetilde{T}''}$, % on the right. 
         are shown as light-shaded surfaces, on the left, in the middle,
         and on the right, respectively. 
         The two first intersect $\poly{d_3=d_1}$, the third does not.
         } %
\label{fig:graphical-explanation}
% \end{minipage}
\end{figure}
}
%%%

\subsection{Typing Rules}
\label{sect:rules}

The system is in Figure~\ref{fig:typing-rules}, where we follow
standard conventions in formulating the rules. We call
$\Env$ a \emph{typing environment}, which
is a finite set of \emph{typing assumptions} for holes, each of
the form $(X:T)$. If $(X:T)$ is a typing assumption, with 
$\inn{X} = \aaa_{\text{in}}$ and $\out{X} = \aaa_{\text{out}}$, then
$T:\power{\aaa_{\text{in}}\cup\aaa_{\text{out}}}\to\reals\times\reals$.

\Hide
{
In the rule \textsf{\sc Let}, assumptions are discharged from the
context $\Env$. This is not essential, because we assume there is at
most one binding occurrence for every hole in a network specification
and we purposely avoid any process of ``reducing'' a network
specification whereby all let-bindings have been eliminated. (Review
the conditions for well-formedness in
Section~\ref{sect:well-formedness} to back up these comments). We
discharge assumptions in the rule \textsf{\sc Let} for conciseness and
only to indicate which holes in a network specification remain
unbound.
}

If a typing $T$ is derived for a network specification $\N$ 
according to the rules in Figure~\ref{fig:typing-rules}, 
it will be the result of deriving an \emph{assertion} 
(or \emph{judgment}) of the form ``$\Judgement{\Env}{\N}{T}$''.
If $\N$ is closed, then this final typing judgment will 
be of the form ``$\Judgement{}{\N}{T}$'' where all typing
assumptions have been discharged.
% The side conditions in Figure~\ref{fig:typing-rules} must
% be satisfied in order that the corresponding rules can be
% applied.

{
\begin{figure}[!ht] % \begin{figure*}
\noindent      
\begin{minipage}{1.0\textwidth}
\small
{ % \spacing{\footnotesize}{0.9}  
%% \[ \begin{array}{llll} %% \begin{alignat*}{3} 
%% AJK: You can't use \multicolum within \begin{array} environment
%%      or \begin{alignat} environment, use instead \begin{tabular}
\begin{tabular}{lllll}
& \textsf{\sc Hole}
  \ &$\dfrac{\ (X : T) \in\ \Env\ }
          {\ \Judgement{\Env}{\rename{X}{i}}{\rename{T}{i\;}}\ }\qquad$ 
    & $i\geqslant 1$ is the smallest available renaming index
\\[3ex]
&\textsf{\sc Small}
  $\quad$ &$\dfrac{\ }
             {\ \Judgement{\Env}{\A}{T}\ } \qquad$
           & $T$ is a typing for small network $\A$
\\[3ex]
&\textsf{\sc Par} $\quad$
  & $\dfrac{\ \Judgement{\Env}{\N_1}{T_1}\qquad\Judgement{\Env}{\N_2}{T_2}\ }
          {\ \Judgement{\Env}{(\ConnP{\N_1}{\N_2})}{(\ConnPT{T_1}{T_2})}\ }\qquad$
\\[3ex]
&\textsf{\sc Bind} $\quad$
  & $\dfrac{\ \Judgement{\Env}{\N}{T}\ }
    {\ \Judgement{\Env}{\Loop{\Angles{a,b}}{\N}}{\loopT{T}{\Angles{a,b}}}\ }
    \qquad$
  & $\Angles{a,b} \in \out{\N}\times \inn{\N}$
% \\ & & & $\poly{\loopT{T}{\Angles{a,b}}}\neq\varnothing$ 
\\[3ex]
&\textsf{\sc Let} $\quad$
   & $\dfrac{\ \Judgement{\Env}{\M}{T_1} \qquad
     \ \Judgement{\Env\cup\Set{(X:T_2)}}{\N}{T}\ }
     {\ \Judgement{\Env}{(\;\Let{X}{=\M}{\N}\;)}{T}\ }$ &
     $T_1\approx T_2$
\\
\end{tabular}
%%% \end{array} \] %% \end{alignat*}
}
\end{minipage}
\caption{Typing Rules for Flow Networks.}\small\smallskip
        The operations $(\ConnPT{T_1}{T_2})$ and
        $\loopT{T}{\Angles{a,b}}$ are defined 
        in Section~\ref{sect:operations}. A derivation according 
        to the rules is stopped from the moment a judgment
        $\Judgement{\Env}{\N}{T}$ is reached such that $\poly{T} = \varnothing$,
        at which point $\N$ is rejected as ``unsafe''.
 %       the relation $T<:T'$ in Section~\ref{sect:valid-vs-principal}.
\label{fig:typing-rules} 
\end{figure} % \end{figure*}
}

% \bigskip
\begin{theorem}[Existence of Principal Typings]
\label{thm:principal-typings-in-general}
Let $\N$ be a closed network specification and $T$ a typing for $\N$
derived according to the rules in Figure~\ref{fig:typing-rules}, \ie,
the judgment ``$\Judgement{}{\N}{T}$'' is derivable according to the
rules. 
If the typing of every small 
network $\A$ in $\N$ is principal (resp., valid)
for $\A$, then $T$ is a principal (resp., valid) 
typing for $\N$.
\end{theorem}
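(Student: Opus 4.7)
The plan is to proceed by structural induction on the derivation of $\Judgement{}{\N}{T}$, which (since the rules are syntax-directed) is essentially induction on the structure of $\N$. To make the induction go through for the \textsf{\sc Let} rule, I first strengthen the statement to an open form: for any derivable judgment $\Judgement{\Env}{\M}{T}$ in which (i) every small network occurring in $\M$ is assigned a principal typing, and (ii) every assumption $(X : T_X)\in\Env$ is such that $T_X$ is principal for $X$ relative to the full semantics $\fullSem{X}$ imposed by the enclosing $\textbf{\textsf{let-in}}$ binding, the derived $T$ is principal for $\M$. Theorem~\ref{thm:principal-typings-in-general} is then the closed special case, where $\Env = \varnothing$ at the conclusion.

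The base cases are immediate. The \textsf{\sc Small} case is the hypothesis of the theorem. The \textsf{\sc Hole} case follows directly from assumption (ii), modulo the renaming $\rename{T}{i}$ that accompanies the fresh index $i$; the well-formedness condition guaranteeing unique arc-naming (Section~\ref{sect:well-formedness}) ensures the renaming is a bijection between the abstract hole arcs and the concrete arcs of $\rename{X}{i}$, so principality is preserved. The \textsf{\sc Par} and \textsf{\sc Bind} cases are handled directly by invoking Lemma~\ref{lem:typing-of-parallel} and Lemma~\ref{lem:typing-of-loop} on the inductive hypotheses for the premises: $\fullSem{\ConnP{\N_1}{\N_2}}$ is defined pointwise via the $\ConnPT{}{}$ operation, and $\fullSem{\Loop{\Angles{a,b}}{\N}}$ is the restriction of $\fullSem{\N}$ to flows with $f(a)=f(b)$, which is exactly the polytope intersection $\poly{T}\cap \poly{a=b}$ captured by the $\loopT{T}{\Angles{a,b}}$ construction.

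The main obstacle, as expected, is the \textsf{\sc Let} rule. Given the premises $\Judgement{\Env}{\M}{T_1}$ and $\Judgement{\Env\cup\{(X:T_2)\}}{\N}{T}$ with $T_1\approx T_2$, I need to verify that the assumption $(X:T_2)$ fed into the second premise genuinely satisfies hypothesis (ii) of the strengthened statement. By the inductive hypothesis applied to the first premise, $T_1$ is principal for $\M$, i.e.\ $\poly{T_1} = \ioSem{\M}$. By the semantic clause 4(b) for $\textbf{\textsf{let-in}}$ in Section~\ref{sect:semantics}, the full semantics of $X$ inside the scope of the binding is required to satisfy $\fullSem{X}\approx\{\rest{g}{A}\mid g\in\fullSem{\M}\} = \ioSem{\M}$; combined with $T_1\approx T_2$ this gives $\poly{T_2}\approx \ioSem{X}$, so $T_2$ is principal for $X$. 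The inductive hypothesis then applies to the second premise, yielding $T$ principal for $\N$, and since clause 4 gives $\fullSem{\Let{X}{=\M}{\N}}=\fullSem{\N}$ we conclude. Care must be taken with the matching-dimensions condition $\dimIO{X}\approx\dimIO{\M}$, which is guaranteed by well-formedness and is what makes the $\approx$ between polytopes well defined.

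The valid-typings version is proved in parallel, replacing each ``$=$'' on polytopes by ``$\subseteq$'' (in the direction $\poly{T}\subseteq\ioSem{\N}$) throughout; both Lemma~\ref{lem:typing-of-parallel} and Lemma~\ref{lem:typing-of-loop} state the corresponding preservation for validity, and the \textsf{\sc Let} argument goes through identically since $T_1\approx T_2$ with $\poly{T_1}\subseteq\ioSem{\M}$ forces $\poly{T_2}\subseteq\ioSem{X}$, which is exactly what soundness of a hole assumption requires. No new operation needs to be introduced beyond those already proved sound in Section~\ref{sect:operations}.
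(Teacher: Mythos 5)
Your proposal is correct and follows exactly the route the paper sets up: the paper omits an explicit proof of Theorem~\ref{thm:principal-typings-in-general} (deferring details to the full report), but Lemmas~\ref{lem:typing-of-parallel} and~\ref{lem:typing-of-loop} are stated precisely to serve as the \textsf{\sc Par} and \textsf{\sc Bind} cases of the induction on the typing derivation that you carry out. Your strengthening to open judgments with hypothesis (ii) on the environment, and the use of clause 4(b) of the semantics together with the side condition $T_1\approx T_2$ to discharge the \textsf{\sc Let} case, is the intended argument.
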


% \vspace*{-.2in}
\section{Inferring Typings for Flow Networks in General}
\label{sect:typing-inference}
%% typing-inference-for-general-nets.tex

The main difficulty in typing inference is in relation to 
$\textbf{\textsf{let}}$-bindings. 
Consider a specification $\N$ of the form $(\Let{X}{= \M}{\PP})$. 
Let $\aaa_{\text{in}} = \inn{X}$ and $\aaa_{\text{out}} = \out{X}$.
Suppose $X$ occurs $n\geqslant 1$ times in $\PP$, so that 
its input/output arcs are renamed in each of the $n$ occurrences 
according to: 
\(
   \rename{(\aaa_{\text{in}}\cup\aaa_{\text{out}})}{1}\ ,
   \ \ldots\ ,
   \ \rename{(\aaa_{\text{in}}\cup\aaa_{\text{out}})}{n}.
\)
A typing for $X$ and for its occurrences $\rename{X}{i}$ in $\PP$
can be given \emph{concretely} or \emph{symbolically}. If concretely,
then these typings are functions of the form:
\[
    T_{X}:
    \power{\aaa_{\text{in}}\cup\aaa_{\text{out}}}\to\reals\times\reals
    \quad\text{and}\quad
    \rename{T_{X}}{i}:
    \power{\rename{\aaa_{\text{in}}}{i}
    \cup\rename{\aaa_{\text{out}}}{i}}\to\reals\times\reals
\]
for every $1\leqslant i\leqslant n$. According to the typing rule
\textsc{Hole} in Figure~\ref{fig:typing-rules}, a valid typing for $\N$ requires that:
\(
    T_X \approx \rename{T_{X}}{1}\approx\cdots\approx\rename{T_{X}}{n}.
\)
If symbolically, then for every $B\subseteq \aaa_{\text{in}}\cup\aaa_{\text{out}}$, 
the interval $T_{X}(B)$ is written as $[x_B,y_B]$
where the two ends $x_B$ and $y_B$ are yet to be determined,
and similarly for $\rename{T_{X}}{i}(B)$ and every
$B\subseteq \rename{\aaa_{\text{in}}}{i}\cup\rename{\aaa_{\text{out}}}{i}$.
% For later reference, call $x_B$ a \emph{lower-end parameter} and $y_B$ 
% an \emph{upper-end parameter}.
%
We can infer a typing for $\N$ in one of two ways, which 
produce the same end result but whose organizations are very different:
\begin{description}
\item[(sequential)] First infer a principal typing $T_{\M}$ for $\M$,
  then use $k$ copies $\rename{T_{\M}}{1},\ldots,\rename{T_{\M}}{n}$
  to infer a principal typing $T_{\PP}$ for $\PP$, which is also a
  principal typing $T_{\N}$ for $\N$.
\item[(parallel)] Infer principal typings $T_{\M}$ for $\M$
  and $T_{\PP}$ for $\PP$, separately. 
  $T_{\PP}$ is parametrized by the typings $\rename{T_{X}}{i}$
  written symbolically. A typing for $\N$ is obtained by
  setting lower-end and upper-end parameters in $\rename{T_{X}}{i}$
  to corresponding lower-end and upper-end values in
  $T_{\M}$.
\end{description}
Both approaches are \emph{modular}, in that both are syntax-directed
according to the inductive definition of $\N$. However, the parallel
approach has the advantage of being independent of the order in which
the inference proceeds (\ie, it does not matter whether  $T_{\M}$
is inferred before or after, or simultaneously with, $T_{\PP}$).
We therefore qualify the parallel approach as being
additionally \emph{fully compositional}, in contrast to the sequential
approach which is not. Moreover, the latter requires that the whole
specification $\N$ be known before typing inference can start,
justifying the additional qualification of being a
\emph{whole-specification} analysis.
The sequential approach is simpler to define
and is presented in full in~\cite{kfouryDSL:2011}. 
We delay the examination of the parallel/fully-compositional approach
to a follow-up report.

% \vspace*{-.2in}
\section{Semantics of Flow Networks Relative to Objective Functions}
\label{sect:relativized-semantics}
%% relativized-semantics.tex

Let $\N$ be a network, with 
$\aaa_{\text{in}} = \inn{\N}$, $\aaa_{\text{out}} = \out{\N}$, and
$\aaa_{\text{\#}} = \inter{\N}$.
We write $\aaa_{\text{out,\#}}$ to denote $\aaa_{\text{out}}\uplus\aaa_{\text{\#}}$,
the set of all arcs in $\N$ excluding the input arcs.
An \emph{objective function} selects a subset of feasible flows
that minimize (or maximize) some quantity. 
We list two possible objective functions, among
several others, commonly considered in ``traffic
engineering'' (see \cite{BalonLeduc:icon2006} for example).
\begin{description}
\item[\emph{Minimize Hop Routing} (HR)] A minimum hop route is a route
     with minimal number of links. \\ 
     Given a feasible flow $f\in\fullSem{\N}$, we define the quantity $\hr{f}
     = \sum_{a\in \aaa_{\text{out,\#}}} f(a)$. Given two feasible flows
     $f_1,f_2\in\fullSem{\N}$, we write $\LE{\symhr}{f_1}{f_2}$ iff two
     conditions: 
     \begin{itemize}
     \item 
     $\rest{f_1}{\aaa_{\text{in}}} = \rest{f_2}{\aaa_{\text{in}}}$,\ \ and
     \item 
     $\hr{f_1} < \hr{f_2}$.
     \end{itemize}
     Note that we compare $f_1$ and $f_2$ using $\LE{\symhr}{}{}$ only
     if they assign the same values to the input arcs, which implies
     in particular that $f_1$ and $f_2$ carry equal flows across
     $\N$. It can be shown that $\hr{f_1} < \hr{f_2}$ holds iff $f_1$
     is non-zero on fewer arcs in $\aaa_{\text{out,\#}}$ than $f_2$, \ie,
     \[ 
        \size{\Set{\,a\in \aaa_{\text{out,\#}}\;|\;f_1(a)\neq 0\,}}
        \ <
      \ \size{\Set{\,a\in \aaa_{\text{out,\#}}\;|\;f_2(a)\neq 0\,}}
     \]
     We write $\LEQ{\symhr}{f_1}{f_2}$ to mean $\LE{\symhr}{f_1}{f_2}$ 
     or $\hr{f_1} = \hr{f_2}$.
\item[\emph{Minimize Arc Utilization} (AU)] The utilization
    of an arc $a$ is defined as $u(a) = f(a)/U(a)$. \\
    Given a feasible flow $f\in\fullSem{\N}$, we define the quantity $\au{f}
    = \sum_{a\in \aaa_{\text{out,\#}}} u(a)$. Given two feasible flows 
    $f_1,f_2\in\fullSem{\N}$, we write $\LE{\symau}{f_1}{f_2}$ iff two conditions:
     \begin{itemize}
     \item 
     $\rest{f_1}{\aaa_{\text{in}}} = \rest{f_2}{\aaa_{\text{in}}}$,\ \ and
     \item 
     $\au{f_1} < \au{f_2}$.
     \end{itemize}
    It can be shown that $\au{f_1} < \au{f_2}$ holds iff:
    \[ 
       \sum \Set{\,1/U(a)\;|\; a\in \aaa_{\text{out,\#}}
        \text{ and } f_1(a)\neq 0\,}
        \ <
      \ \sum \Set{\,1/U(a)\;|\; a\in \aaa_{\text{out,\#}}
        \text{ and } f_2(a)\neq 0\,}
     \]
     Minimizing arc utilization corresponds to computing
     ``shortest paths'' from inputs to outputs using 
     $1/U(a)$ as the metric on every arc in $\aaa_{\text{out,\#}}$.
     We write $\LEQ{\symau}{f_1}{f_2}$ to mean $\LE{\symau}{f_1}{f_2}$ 
     or $\au{f_1} = \au{f_2}$.
\Hide{
\item[\emph{Minimize Mean Delay} (MD)] The mean delay
    of an arc $a$ can be measured by $d(a) = 1/(U(a)-f(a))$. \\
    Given a feasible flow $f\in\fullSem{\N}$, we define the quantity $\md{f}
    = \sum_{a\in \aaa_{\text{out,\#}}} d(a)$. Given two feasible flows 
    $f_1,f_2\in\fullSem{\N}$, we write $\LE{\symmd}{f_1}{f_2}$ iff two conditions:
     \begin{itemize}
     \item 
     $\rest{f_1}{\aaa_{\text{in}}} = \rest{f_2}{\aaa_{\text{in}}}$, \ \ and
     \item 
     $\md{f_1} < \md{f_2}$.
     \end{itemize}
    It can be shown that $\md{f_1} < \md{f_2}$ holds iff:
    \[ 
       \sum \Set{\,1/(U(a)-f_1(a))^2\;|\; a\in \aaa_{\text{out,\#}} \,}
          \ <
      \ \sum \Set{\,1/(U(a)-f_2(a))^2\;|\; a\in \aaa_{\text{out,\#}} \,}
    \]
     In contrast to \textbf{HR} and \textbf{AU}, the minimization of
     \textbf{MD} depends on the flow carried by the arcs in 
     $\aaa_{\text{out,\#}}$.    
     We write $\LEQ{\symmd}{f_1}{f_2}$ to mean $\LE{\symmd}{f_1}{f_2}$ 
     or $\md{f_1} = \md{f_2}$.
}
\end{description} 

\Hide{
\begin{remark}
The definition of the functions $\symhr$, $\symau$, and $\symmd$, is a
summation over $\aaa_{\text{out,\#}} = \aaa_{\text{out}}\cup\aaa_{\text{\#}}$.
An alternative is to sum over $\aaa_{\text{in}}\cup\aaa_{\text{\#}}$ or over 
only $\aaa_{\text{\#}}$. There are minor technical differences between these
three alternatives. Our choice for summing over $\aaa_{\text{out,\#}}$ simplifies 
a little clause 5, in the formal semantics below.
\end{remark}
}

\noindent
For the rest of this section, consider a fixed objective
$\ooo\in\Set{\symhr,\symau,\ldots}$.  We relativize the formal
semantics of flow networks as presented in
Section~\ref{sect:flows}. To be correct, our relativized semantics
requires that the objective $\ooo$ be an ``additive aggregate
function''.

\begin{definition}{Additive Aggregate Functions} 
Let $\N$ be a network and consider its set $\fullSem{\N}$ of feasible flows.
A function $\alpha:\fullSem{\N}\to\nreals$ is an \emph{additive aggregate}
if $\alpha(f)$ is of the form $\sum_{a\in \aaa_{\text{out,\#}}} \theta(f,a)$
for some function $\theta: \fullSem{\N}\times\aaa_{\text{out,\#}}\to\nreals$. 
\end{definition}

The particular objective functions $\symhr$ and $\symau$ considered above
are additive aggregate. For $\symhr$, the corresponding function $\theta$
is the simplest and defined by $\theta(f,a) = f(a)$. And for $\symau$, 
the corresponding function $\theta$ is defined by $\theta(f,a) = f(a)/U(a)$.
All the objective functions considered in~\cite{BalonLeduc:icon2006}
are additive aggregate.

The \emph{full semantics of a flow network $\N$
relative to objective $\ooo$}, denoted $\fullSem{\N\,|\,\ooo}$, will be
a set of triples each of the form $\Angles{f,B,r}$ where:
\begin{itemize}
\item $f\in\fullSem{\N}$, \ie, $f$ is a feasible flow in $\N$,
\item $B\subseteq \inn{\N}\cup\out{\N}$,
\item $r = \ooo(f)$,
\end{itemize}
such that, for every feasible flow $g\in\fullSem{\N}$, if $\rest{f}{B}
= \rest{g}{B}$ then $\ooo(g)\geqslant r$.  The information
provided by the parameters $B$ and $r$ allows us to
determine $\fullSem{\N\,|\,\ooo}$ compositionally, \ie,
in clause 5 in the definition of $\fullSem{\N\,|\,\ooo}$ below: We can define
the semantics of a network $\M$ relative to $\ooo$ from the semantics
of its \emph{immediate} constituent parts relative to $\ooo$. Informally, if 
$\Angles{f,B,r}\in\fullSem{\N\,|\,\ooo}$, then among all feasible flows that
agree on $B$, flow $f$ minimizes $\ooo(f)$. We include the parameter
$r = \ooo(f)$ in the triple to avoid re-computing $\ooo$
from scratch at every step of the induction, 
by having to sum over \emph{all} the arcs of $\N$.
Based on the preceding, starting with small networks $\A$, 
we define the full semantics of $\A$ relative to the objective $\ooo$
as follows:
\begin{alignat*}{4}
  &\fullSem{\A\,|\,\ooo}\ &&=\ &&\bigl\{\,
   \Angles{f,B,r}\;\bigl|\; &&f\in\fullSem{\A},\ B\subseteq\inn{\A}\cup\out{\A},
    \ r = \ooo(f),
\\ 
  & && && &&\text{and for every $g\in\fullSem{\A}$, if $\rest{f}{B} = \rest{g}{B}$
            then $\ooo(f)\leqslant \ooo(g)$}\,\bigr\}
\end{alignat*}
The IO-semantics $\ioSem{\A\,|\,\ooo}$ of the small network
$\A$ relative to the objective $\ooo$ is:
\begin{alignat*}{4}
  &\ioSem{\A\,|\,\ooo}\ &&=\ &&\SET{\,
       \Angles{\rest{f}{A},B,r}\;\bigl|\; 
       &&\Angles{f,B,r}\in\fullSem{\A\,|\,\ooo}\,}
\end{alignat*}
where $A = \inn{\A}\cup\out{\A}$.
As in Section~\ref{sect:flows}, the full semantics 
$\fullSem{X\,|\,\ooo}$ and the IO-semantics $\ioSem{X\,|\,\ooo}$ of
a hole $X$ relative to the objective $\ooo$ are the same.
Let $\aaa_{\text{in}}=\inn{X}$ and $\aaa_{\text{out}}=\out{X}$, so that:
\begin{alignat*}{2}
   &\fullSem{X\,|\,\ooo}\ =\ \ioSem{X\,|\,\ooo}\ \subseteq
   \ \bigl\{ \Angles{f,B,s}\;\bigl|
   \; &&f:\aaa_{\text{in}}\cup\aaa_{\text{out}}\to\nreals,
   \ B\subseteq\aaa_{\text{in}}\cup\aaa_{\text{out}},
   \ s\in\nreals,\text{ and $f$ is bounded}\bigr\}
%   \\
%   & &&\text{and}\ r\leqslant \sum f(\aaa_{\text{in}}) =
%     \sum f(\aaa_{\text{out}}) \leqslant r'\,\bigr\}
% \\
%    & &&\text{and for every $(g,B,t)\in\fullSem{X\,|\,\ooo}$,
%       if $\rest{f}{B} = \rest{g}{B}$ then $s\leqslant t$}\,\bigr\}
\end{alignat*}
Again, as in Section~\ref{sect:flows}, 
$\fullSem{X\,|\,\ooo} = \ioSem{X\,|\,\ooo}$ is not uniquely
defined.  Whether this assigned semantics of $X$ will work
depends on whether the condition in clause 4 below is satisfied. 

\medskip
We define $\fullSem{\M\,|\,\ooo}$ for every subexpression $\M$ of
$\N$, by induction on the structure of the specification $\N$. 
The five clauses here are identical to those in Section~\ref{sect:flows}, 
except for the $\ooo$-relativization. The only non-trivial clause is the 5th and
last; Proposition~\ref{prop:relativized-full-semantics} establishes
the correctness of this definition:
\begin{enumerate}
\item If $\M = \A$, then 
      $\fullSem{\M\,|\,\ooo} = \fullSem{\A\,|\,\ooo}$.
\item If $\M = \rename{X}{i}$, then $\fullSem{\M\,|\,\ooo} =
      \rename{\fullSem{X\,|\,\ooo}}{i\,}$.
\item If $\M = \bigl(\ConnP{\PP_1}{\PP_2}\bigr)$,
      then 
   \begin{alignat*}{3}
      &\fullSem{\M\,|\,\ooo}\ &&= 
      \ &&\SET{\,\Angles{\ConnPT{f_1}{f_2}, B_1\cup B_2, r_1+r_2}\;\bigl|
      \;\Angles{f_1,B_1,r_1}\in\fullSem{\PP_1\,|\,\ooo}\text{ and }
      \Angles{f_2,B_2,r_2}\in\fullSem{\PP_2\,|\,\ooo}\,}
   \end{alignat*}
\item If $\M = \bigl(\Let{X}{=\PP}{{\PP}'}\bigr)$,
      then $\fullSem{\M\,|\,\ooo} = \fullSem{{\PP}'\,|\,\ooo}$, 
      provided two conditions:%
      \footnote{Review footnote~\ref{crucial-foot} for
      the meaning of ``$\approx$''.} 
      \begin{enumerate}
      \item
          $\dimIO{X}\ \approx\ \dimIO{\PP}$,
      \item 
          $\fullSem{X\,|\,\ooo} \ \approx
           \Bigl\{\,\Angles{\rest{g}{A},C,r}\;\bigl|
           \;\Angles{g,C,r}\in\fullSem{\PP\,|\,\ooo}\,\Bigr\}$
\Hide{          for every $f:\inn{X}\cup\out{X}\to\nreals$, 
          $B\subseteq \inn{X}\cup\out{X}$, and $r\in\nreals$,
            \[
               \Angles{f,B,r}\in\fullSem{X\,|\,\ooo}\quad \text{iff}
               \quad\text{there is }
               \Angles{g,C,r}\in\fullSem{\PP\,|\,\ooo}
               \text{ such that $f \approx \rest{g}{A}$ and $B\approx C$, }
            \]}
          where $A = \inn{\PP}\cup\out{\PP}$.
      \end{enumerate}
\item If $\M = \Loop{\Angles{a,b}}{\PP}$, then 
      \begin{alignat*}{3}
      &\fullSem{\M\,|\,\ooo}\ &&=  
        \ \Bigl\{\,\Angles{f,B,r}\;\bigl|\;\;\;
        && % B\subseteq\inn{\M}\cup\out{\M}, 
        \Angles{f,B\cup\Set{a,b},r}\in\fullSem{\PP\,|\,\ooo},
        \ f(a) = f(b), % \ r = \ooo(f),
\\
  & && &&
     \text{and for every $\Angles{g,B\cup\Set{a,b},s}\in\fullSem{\PP\,|\,\ooo}$}
\\
  & && &&\text{if $g(a) = g(b)$ and $\rest{f}{B} = \rest{g}{B}$ 
               then $r\leqslant s$}\,\Bigr\}
      \end{alignat*}
%      Note that, even though $\rest{f}{B} = \rest{g}{B}$, 
%      in general $f(a) = f(b)\neq g(a) = g(b)$.
\end{enumerate}
% Remarks~\ref{rem:capacities}
% \ref{rem:special-cases}, 
% and~\ref{rem:invariance-of-semantics}
% in Section~\ref{sect:flows}, apply again here, properly relativized to
% the objective $\ooo$. 
We define $\ioSem{\N\,|\,\ooo}$ from $\fullSem{\N\,|\,\ooo}$:
\(
  \ioSem{\N\,|\,\ooo} =\ \SET{\,\Angles{\rest{f}{A},B,r}\,\bigl|\,
      \Angles{f,B,r}\in\fullSem{\N\,|\,\ooo}\,}
\)
where $A = \inn{\N}\cup\out{\N}$.

\begin{proposition}[Correctness of Flow-Network Semantics, Relativized]
\label{prop:relativized-full-semantics}
Let $\N$ be a network specification and let 
$\ooo$ be an additive aggregate objective. For
every $f:\aaa_{\text{in}}\cup\aaa_{\text{out}}\cup\aaa_{\text{\#}}\to\nreals$,
every $B\subseteq \aaa_{\text{in}}\cup\aaa_{\text{out}}$, and every
$r\in\nreals$, it is the case that:
\begin{alignat*}{2}
&\Angles{f,B,r}\in\fullSem{\N\,|\,\ooo}
  \quad\text{iff}\quad &&f\in\fullSem{\N} \text{ and } r=\ooo(f) \text{ and}
\\
& &&\text{for every $g\in\fullSem{\N}$, if 
          $\rest{f}{B} = \rest{g}{B}$ then $\ooo(g)\geqslant r$.}
\end{alignat*}
In words, for every $B\subseteq \aaa_{\text{in}}\cup\aaa_{\text{out}}$,
among all feasible flows in $\N$ that agree on $B$, we include in  
$\fullSem{\N\,|\,\ooo}$ those that are $\ooo$-optimal and exclude
from $\fullSem{\N\,|\,\ooo}$ those that are not.
\end{proposition}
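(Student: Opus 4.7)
The plan is to prove both directions of the biconditional by structural induction on $\N$, following the five clauses in the inductive definition of $\fullSem{\N\,|\,\ooo}$. The base cases are $\N = \A$ and $\N = \rename{X}{i}$: the first is immediate from the definition of $\fullSem{\A\,|\,\ooo}$, and the second reduces to the required correctness of $\fullSem{X\,|\,\ooo}$, which is imposed as part of condition (b) in clause 4.

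For the parallel case $\M = \bigl(\ConnP{\PP_1}{\PP_2}\bigr)$, the crucial fact is that additivity of $\ooo$ yields $\ooo(\ConnPT{f_1}{f_2}) = \ooo(f_1) + \ooo(f_2)$. This uses disjointness of the arc sets: $\aaa_{\text{out,\#}}$ of $\M$ splits as the disjoint union of the corresponding sets in $\PP_1$ and $\PP_2$, and $\theta(\ConnPT{f_1}{f_2}, a) = \theta(f_i, a)$ for every arc $a$ in $\PP_i$. The forward direction then follows by decomposing any competing $g = \ConnPT{g_1}{g_2}$ that agrees with $\ConnPT{f_1}{f_2}$ on $B_1 \cup B_2$, and invoking the induction hypothesis separately on each piece. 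For the converse, given $\Angles{f, B, r}$ satisfying the specification, I set $B_i = B \cap (\inn{\PP_i} \cup \out{\PP_i})$ and let $f_i$ be the restriction of $f$ to the arcs of $\PP_i$; the optimality of each $f_i$ relative to $B_i$ then follows by considering competing flows of the form $\ConnPT{g_1}{f_2}$ or $\ConnPT{f_1}{g_2}$, which still agree with $f$ on all of $B$, so that the hypothesis $\ooo(\ConnPT{g_1}{f_2}) \geqslant r = \ooo(f_1) + \ooo(f_2)$ forces $\ooo(g_1) \geqslant \ooo(f_1)$, and symmetrically on the other side.

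The let-binding case is a direct pass-through: once conditions (a) and (b) of clause 4 are satisfied, correctness of $\fullSem{\M\,|\,\ooo}$ transfers from that of $\fullSem{{\PP}'\,|\,\ooo}$. For the bind case $\M = \Loop{\Angles{a,b}}{\PP}$, note that $\aaa_{\text{out,\#}}$ of $\M$ coincides with $\aaa_{\text{out,\#}}$ of $\PP$, since arc $a$ stays in this set (changing only from output to internal) and arc $b$ was an input arc now removed; hence $\ooo(f)$ computed in $\M$ equals $\ooo(f)$ computed in $\PP$. Feasibility in $\M$ gains the constraint $f(a) = f(b)$, which is explicitly the added condition of clause 5, while optimality over a set $B \subseteq \inn{\M} \cup \out{\M}$ in $\M$ corresponds precisely to optimality over $B \cup \Set{a,b}$ in $\PP$ among flows $g$ satisfying $g(a) = g(b)$.

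The main obstacle will be the parallel case, specifically the converse direction: showing that a globally $\ooo$-optimal flow $f$ on $\M$ with parameter $B$ decomposes into $\ooo$-optimal flows on $\PP_1$ and $\PP_2$ with the induced parameters $B_1$ and $B_2$. Additivity of $\ooo$ is exactly what makes this decomposition valid, explaining why the relativized semantics is defined only for additive aggregate objectives. A secondary subtlety appears in the bind case, where one must verify that enlarging $B$ to $B \cup \Set{a,b}$ when descending into $\PP$ correctly captures the implicit coupling forced by the equality $f(a) = f(b)$, rather than over-restricting the set of competing flows.
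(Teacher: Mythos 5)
Your overall strategy --- a direct structural induction tracking the five clauses --- is viable, but it is not the route the paper takes, and the one step you explicitly defer is precisely the step that pushed the paper onto a different route. The paper strengthens the induction hypothesis: it works with quadruples $\Angles{f,B,\CC,r}$, where $\CC\subseteq\aaa_{\text{in}}\times\aaa_{\text{out}}$ is a set of input--output equality constraints, and proves that $\Angles{f,B,\CC,r}$ belongs to an auxiliary semantics $\FullSem{\N\,|\,\ooo}$ iff $f$ is feasible, satisfies $\CC$, has $r=\ooo(f)$, and is $\ooo$-optimal among feasible $g$ that agree with $f$ on $B$ \emph{and} satisfy $\CC$. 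In that setting the \textbf{\textsf{bind}} case merely adds the constraint $b=a$ to $\CC$ (leaving $B$ untouched), the parallel case distributes cross-constraints from $\inn{\PP_1}\times\out{\PP_2}\cup\inn{\PP_2}\times\out{\PP_1}$ between the components, and the proposition is recovered as the instance $\CC=\varnothing$. The payoff is that the hypothesis available at each stage matches verbatim what the next stage needs, with no translation between ``agreement on $\Set{a,b}$'' and ``equality of $a$ and $b$.''

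In your version that translation is unavoidable, and it is the forward direction of the \textbf{\textsf{bind}} case --- the ``secondary subtlety'' you name but do not discharge. Clause 5 gives you optimality of $f$ among flows of $\PP$ agreeing with $f$ on $B\cup\Set{a,b}$, plus domination of the \emph{representatives} $\Angles{g,B\cup\Set{a,b},s}\in\fullSem{\PP\,|\,\ooo}$ with $g(a)=g(b)$ and $\rest{g}{B}=\rest{f}{B}$; you must conclude that $f$ beats \emph{every} $g'\in\fullSem{\M}$ agreeing with $f$ on $B$ alone, including those with $g'(a)=g'(b)\neq f(a)$. The argument that closes this: partition such competitors $g'$ by their values on $\Set{a,b}$; in each class choose $g''$ minimizing $\ooo$ over the flows of $\PP$ that agree with $g'$ on $B\cup\Set{a,b}$; by the induction hypothesis (used in its converse direction) $\Angles{g'',B\cup\Set{a,b},\ooo(g'')}\in\fullSem{\PP\,|\,\ooo}$, and $g''$ inherits $g''(a)=g''(b)$ and $\rest{g''}{B}=\rest{f}{B}$, so clause 5's second condition gives $r\leqslant\ooo(g'')\leqslant\ooo(g')$. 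Note this requires the minimum to be attained --- automatic for the linear objectives $\symhr$ and $\symau$ over the compact feasible polytope, and tacitly assumed by the paper, but worth stating since ``additive aggregate'' by itself does not guarantee it. With that paragraph supplied, your proof closes; the remainder of your plan (additivity splitting the parallel case in both directions, preservation of $\aaa_{\text{out,\#}}$ under \textbf{\textsf{bind}}, the pass-through for \textbf{\textsf{let}}) is correct and matches what a direct argument requires.
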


\Hide
{
\begin{proof}
The proof is by induction on the definition of $\N$. To push the
induction through, we need to strengthen the induction hypothesis.
The strengthened induction hypothesis (IH) will read as follows, where
$\FullSem{\N\,|\,\ooo}$ is a set of quadruples to be defined yet: 
\begin{itemize}
\item[(IH)]
    For every $f:\aaa_{\text{in}}\cup\aaa_{\text{out}}\cup\aaa_{\text{\#}}\to\nreals$,
    every $B\subseteq \aaa_{\text{in}}\cup\aaa_{\text{out}}$, \\
    every $\CC\subseteq \aaa_{\text{in}}\times\aaa_{\text{out}}$,
    and every $r\in\nreals$, it is the case that:
    \begin{alignat*}{2}
    % &\text{(IH)}\quad 
    &\Angles{f,B,\CC,r}\in\FullSem{\N\,|\,\ooo}
      \quad\text{iff}\quad &&f\in\fullSem{\N},
      \ f\models\CC, \text{ and } r=\ooo(f), \text{ and}
\\
    & &&\text{for every $g\in\fullSem{\N}$, if 
    $\rest{f}{B} = \rest{g}{B}$ and $g\models\CC$ then $\ooo(g)\geqslant r$.}
    \end{alignat*}
\end{itemize}
We write $\CC$ as a set of equalities, say 
$\Set{a_1=a'_1,\ldots,a_k=a'_k}$ where 
$\Set{a_1,\ldots,a_k}\subseteq\aaa_{\text{in}}$
and $\Set{a'_1,\ldots,a'_k}\subseteq\aaa_{\text{out}}$, and write 
$f\models\CC$ iff $f(a_i) = f(a'_i)$ for every $1\leqslant i\leqslant k$.
We give the full details of the inductive definition of 
$\FullSem{\N\,|\,\ooo}$. With every small network $\A$, we set:
\begin{alignat*}{5}
  &\hspace*{-.18in} 1.\ \ &&\FullSem{\A\,|\,\ooo}\ &&=\ &&\bigl\{\,
   \Angles{f,B,\CC,r}\;\bigl|\ \; && f\in\fullSem{\A},
    \ B\subseteq\inn{\A}\cup\out{\A},
    \ \CC\subseteq\inn{\A}\times\out{\A},
\\ 
  & && && && && f\models\CC,\ r = \ooo(f), 
  \ \text{and for every $g\in\fullSem{\A}$,}
\\
  & && && && && \text{if $\rest{f}{B} = \rest{g}{B}$
    and $g\models\CC$, then $\ooo(f)\leqslant \ooo(g)$}\,\bigr\}
\end{alignat*}
For a hole $X$, let $\aaa_{\text{in}}=\inn{X}$ and 
$\aaa_{\text{out}}=\out{X}$, and $r, r'$ real numbers such that 
$0\leqslant r\leqslant r'$. We set:
\begin{alignat*}{5}
  &\hspace*{-.18in} 2.\ \ &&\FullSem{X\,|\,\ooo}\ &&\subseteq\ &&\bigl\{\,
   \Angles{f,B,\CC,s}\;\bigl|\ \; && f:\aaa_{\text{in}}\cup\aaa_{\text{out}}\to\nreals,
    \ B\subseteq\aaa_{\text{in}}\cup\aaa_{\text{out}},
    \ \CC\subseteq\aaa_{\text{in}}\times\aaa_{\text{out}}
\\ 
  & && && && && f\models\CC,\ s \in\nreals, 
  \ \text{and 
  $r\leqslant \sum f(\aaa_{\text{in}}) = \sum f(\aaa_{\text{out}}) \leqslant r'$}
   \,\bigr\}
\end{alignat*}
The rest of the induction proceeds as follows:
\begin{enumerate}
\item[3.] If $\M = \bigl(\ConnP{\PP_1}{\PP_2}\bigr)$, then 
   \begin{alignat*}{4}
      &\FullSem{\M\,|\,\ooo}\ &&= 
      \ &&\Bigl\{\;
    \Angles{\ConnPT{f_1}{f_2},&&B_1\cup B_2,{\CC}_1\cup {\CC}_2\cup\CC,r_1+r_2}
    \;\bigl|
\\
  & && && &&\Angles{f_1,B_1,{\CC}_1,r_1}\in\FullSem{\PP_1\,|\,\ooo},
      \ \Angles{f_2,B_2,{\CC}_2,r_2}\in\FullSem{\PP_2\,|\,\ooo},\text{ and}
\\
  & && && &&\CC\subseteq\inn{{\PP}_1}\times \out{{\PP}_2}
            \cup \inn{{\PP}_2}\times \out{{\PP}_1}\text{ such that }
            (\ConnPT{f_1}{f_2})\models\CC \,\Bigr\}
   \end{alignat*}
\item[4.] If $\M = \bigl(\Let{X}{=\PP}{{\PP}'}\bigr)$,
      then $\FullSem{\M\,|\,\ooo} = \FullSem{{\PP}'\,|\,\ooo}$, 
      provided two conditions:
      \begin{enumerate}
      \item
          $\dimIO{X}\ \approx\ \dimIO{\PP}$,
      \vspace{.08in}
      \item 
          for every $f:\inn{X}\cup\out{X}\to\nreals$, 
          $B\subseteq \inn{X}\cup\out{X}$,
          $\CC\subseteq \inn{X}\times\out{X}$, and $r\in\nreals$,
            \begin{alignat*}{2}
               &\Angles{f,B,\CC,r}\,\in\ &&\FullSem{X\,|\,\ooo}\quad \text{iff}
            \\[1.0ex]
               & &&\text{there is }
               \Angles{g,B',{\CC}',r}\in\FullSem{\PP\,|\,\ooo}
               \text{ such that $f \approx \rest{g}{A}$,
               $B\approx B'$, and $\CC\approx {\CC}'$, }
            \end{alignat*}
          where $A = \inn{\PP}\cup\out{\PP}$.
      \end{enumerate}
\item[5.] If $\M = \Loop{\Angles{a,b}}{\PP}$, then 
      \begin{alignat*}{3}
      &\FullSem{\M\,|\,\ooo}\ &&=  
        \ \Bigl\{\,\Angles{f,B,\CC,r}\;\bigl|\;\;\;
        && 
        \Angles{f,B,\CC\cup\Set{\Angles{b,a}},r}\in\FullSem{\PP\,|\,\ooo},
        \ B\cap\Set{a,b}=\varnothing,
\\
  & && && % f(a) = f(b),\ r = \ooo(f),
     \text{and for every $\Angles{g,B,\CC\cup\Set{\Angles{b,a}},s}
           \in\FullSem{\PP\,|\,\ooo}$}
\\
  & && &&\text{if $\rest{f}{B} = \rest{g}{B}$ 
               then $r\leqslant s$}\,\Bigr\}
      \end{alignat*}
\end{enumerate}
It is now a straightforward proof by induction on the definition
of $\N$ to show that (IH) holds for every subexpression of $\N$
and for $\N$ itself. To conclude the proof, we simply observe that 
\[
  \fullSem{\N\,|\,\ooo} = \SET{\;\Angles{f,B,r}\;\bigl|
         \;\Angles{f,B,\varnothing,r}
  \in\FullSem{\N\,|\,\ooo}\;}
\]
which implies (IH) holds for the particular case when $\CC = \varnothing$,
which in turn implies the proposition.
\end{proof}
}

% \vspace*{-.2in}
\section{A Relativized Typing System}
\label{sect:relativized-typing-system}
%% relativized-typing-rules.tex

Let $\ooo$ be an additive aggregate objective, \eg, one of those
mentioned in Section~\ref{sect:relativized-semantics}. Assume $\ooo$
is fixed and the same throughout this section.
\Hide{, except in
Example~\ref{ex:illustrating-objective-functions} where we instantiate
$\ooo$ to the objectives $\symhr$, $\symau$, and $\symmd$.}
Let $\N$ be a closed network specification. According to 
Section~\ref{sect:typing-rules}, if the judgment ``$\Judgement{}{\N}{T}$'' 
is derivable using the rules in Figure~\ref{fig:typing-rules} and $T$ is 
a valid typing, then $\poly{T}$ is a set of feasible IO-flows in $\N$,
\ie, $\poly{T} \subseteq \ioSem{\N}$. And if $T$ is principal, then
in fact $\poly{T} = \ioSem{\N}$.

In this section, judgments are of the form
``$\Judgement{}{\N}{(T,\objA{}{})}$'' and derived using the rules in
Figure~\ref{fig:relativized-typing-rules}.  We call $(T,\objA{}{})$ a
\emph{relativized typing}, where $T$ is a typing as before and
$\objA{}{}$ is an auxiliary function depending on the objective
$\ooo$.  If $T$ is a valid (resp. principal) typing for $\N$, then
once more $\poly{T} \subseteq \ioSem{\N}$ (resp. $\poly{T} =
\ioSem{\N}$), but now the auxiliary $\objA{}{}$ is used to select
members of $\poly{T}$ that minimize $\ooo$.

If this is going to work at all, $\objA{}{}$ should not 
inspect the whole of $\N$.  Instead, $\objA{}{}$ should be
defined inductively from the relativized typings for only the immediate
constituent parts of $\N$.
We first explain what the auxiliary $\objA{}{}$
tries to achieve, and then explain how it can be defined inductively.
The objective $\ooo$ is already defined
on $\fullSem{\N}$, as in Section~\ref{sect:relativized-semantics}.
We now define it on $\ioSem{\N}$. For every $f\in\ioSem{\N}$, let:
\begin{alignat*}{3} 
  & \ooo(f)\ =&&\ \min\;\SET{\,\ooo(f')\;\bigl|
    \;f'\in\fullSem{\N}\text{ and $f'$ extends $f$}\,}.
\end{alignat*} 
As before, let $\aaa_{\text{in}} = \inn{\N}$ and $\aaa_{\text{out}} =
\out{\N}$.  Let $T$ be a valid typing for $\N$, so that
$\poly{T}\subseteq\ioSem{\N}$.  For economy of writing, let
$\F=\poly{T}$. Relative to this $T$, we define the function
$\objA{T}{}$ as follows:

\bigskip
% \begin{center}
\fbox{
\hspace*{-.3in}
\begin{minipage}{.8\textwidth}
\vspace*{-.2in}
\begin{alignat*}{3}
 &\objA{T}{} &&:&&
    \power{\aaa_{\text{in}}\cup\aaa_{\text{out}}}\to \power{\F\times\nreals}
\\
 &\objA{T}{}(B) 
  \ && =\ &&\SET{\,\Angles{f,r}\;\bigl|\;f\in\F,\ r=\ooo(f),\ \text{and }
  \text{for every $g\in\F$, if $\rest{f}{B} = \rest{g}{B}$,
        then $r\leqslant \ooo(g)$\,}}
\end{alignat*}
\end{minipage}
}
% \end{center}

\bigskip
\noindent
where $B \in \power{\aaa_{\text{in}}\cup\aaa_{\text{out}}}$.
In words, $\objA{T}{}(B)$ selects $f$ provided, among all members of
$\F\subseteq\ioSem{\N}$ that agree with $f$ on $B$, $f$ is $\ooo$-optimal --
and also appends to $f$ its $\ooo$-value $r$ for book-keeping purposes.
Whenever the context makes it clear, we omit the subscript ``$T$''
from ``$\objA{T}{}$'' and simply write ``$\objA{}{}$''.

The trick here is to define the auxiliary function
$\objA{}{}$ for $\N$ from the corresponding auxiliary
functions for the immediate constituent parts of $\N$. The only non-trivial
step follows the 5th and last clause in the definition of 
$\fullSem{\N\,|\,\ooo}$ in Section~\ref{sect:relativized-semantics}.
% Note that, in this 5th clause, we define $\fullSem{\M\,|\,\ooo}$ 
% without having to examine any of the internal properties of the
% immediate constituent part $\PP$.

\begin{definition}{Valid and Principal Relativized Typings}
\label{defn:valid-and-principal}
Let $(T,\objA{}{})$ be a relativized typing for $\N$, where
$\inn{\N} = \aaa_{\text{in}}$ and $\out{\N} = \aaa_{\text{out}}$. We
define $\Poly{T,\objA{}{}}$ as a set of triples:
\[
  \Poly{T,\objA{}{}}\ =\ \Set{\,\Angles{f,B,r}\;|\;
  B\subseteq\aaa_{\text{in}}\cup\aaa_{\text{out}}\text{ and }
  \Angles{f,r}\in\objA{}{}(B)\,}
\]
We call this function ``$\Poly{}$'' because of its close association
with ``$\poly{}$'', as it is easy to see that:
\begin{alignat*}{2}
  &\Poly{T,\objA{}{}}\ =\ 
  \bigl\{\,\Angles{f,B,r}\;\bigl|\;&&f\in\poly{T},
   \ B\subseteq\aaa_{\text{in}}\cup\aaa_{\text{out}},\ r=\ooo(f),
\\ 
  & &&\text{and for all $g\in\poly{T}$ 
     if $\rest{f}{B} = \rest{g}{B}$ then $\ooo(f)\leqslant\ooo(g)$}\,\bigl\}
\end{alignat*}
We say the relativized typing $\bigl(\N:(T,\objA{}{})\bigr)$ 
is \emph{valid} iff $\Poly{T,\objA{}{}}\subseteq\ioSem{\N\,|\,\ooo}$,
and we say it is \emph{principal} iff $\Poly{T,\objA{}{}}=\ioSem{\N\,|\,\ooo}$.
% Compare these definitions with their un-relativized counterparts at
% the beginning of Section~\ref{sect:valid-vs-principal}.
\end{definition}

A case of particular interest % in the preceding definitions 
is when $B = \aaa_{\text{in}}$. Suppose 
$\Angles{f,\aaa_{\text{in}},r}\in\Poly{T,\objA{}{}}$. This means that,
among all feasible flows $g$ in $\N$ agreeing with
$f$ on $\aaa_{\text{in}}$, $f$ is $\ooo$-optimal with $\ooo(f) = r$.
\Hide
{
More, in fact, for every $\ooo$-optimal feasible flow $g$ in $\N$ agreeing 
with $f$ on $\aaa_{\text{in}}$, we will have $\ooo(g) = r$, so that:
\[
  \SET{\,\Angles{g,\aaa_{\text{in}},r}\;\bigl|\;\rest{f}{\aaa_{\text{in}}} =
  \rest{g}{\aaa_{\text{in}}}
  \text{ and $g$ is $\ooo$-optimal feasible flow in $\N$}\,}
  \ \subseteq\  \Poly{T,\objA{}{}}
\]
Note $\Angles{g,\aaa_{\text{in}},r}$ provides no information
about the path taken by $\ooo$-optimal feasible flow $g$
\emph{inside} $\N$. It only says the $\ooo$-value of $g$ is
$r$, assuming $g$ indeed uses an $\ooo$-optimal path through $\N$.
Nor does it say anything about the values assigned to output arcs
by an $\ooo$-optimal feasible flow $g$ through $\N$, although
we only know that 
\( % \[
  \sum f(\aaa_{\text{in}}) = \sum f(\aaa_{\text{out}}) =
  \sum g(\aaa_{\text{in}}) = \sum g(\aaa_{\text{out}})
\) % \]
because of flow conservation.
}

\subsection{Operations on Relativized Typings}
\label{sect:operations-relativized}

There are two different operations on relativized typings
depending on how they are obtained from previously
defined relativized typings. These two operations are
``$\ConnPT{(T_1,\objA{1}{})}{(T_2,\objA{2}{})}$'' and
``$\loopT{(T,\objA{}{})}{\Angles{a,b}}$'', whose definitions are based
on clauses 3 and 5 in the inductive definition of
$\fullSem{\N\,|\,\ooo}$ in Section~\ref{sect:relativized-semantics}.

Let $\bigl(\N_1:(T_1,\objA{1}{})\bigr)$ and 
$\bigl(\N_2:(T_2,\objA{2}{})\bigr)$ be two relativized typings for 
two networks $\N_1$ and $\N_2$. Recall that the
the four arc sets: $\inn{\N_1}$, $\out{\N_1}$, $\inn{\N_2}$, and
$\out{\N_2}$, are pairwise disjoint. We define the relativized typing
$(T,\objA{}{}) = \ConnPT{(T_1,\objA{1}{})}{(T_2,\objA{2}{})}$ for 
the specification $\bigl(\ConnP{\N_1}{\N_2}\bigr)$ as follows:
\begin{itemize}
\item $T = (\ConnPT{T_1}{T_2})$, as defined at the beginning of 
      Section~\ref{sect:operations},
\item for every $B_1\subseteq \inn{\N_1}\cup\out{\N_1}$
      and every $B_2\subseteq \inn{\N_2}\cup\out{\N_2}$:
\[
     \objA{}{}(B_1\cup B_2)\ =
     \ \SET{\,\Angles{(\ConnPT{f_1}{f_2}),r_1+r_2}\;\bigl|
     \;\Angles{f_1,r_1}\in\objA{1}{}(B_1)\text{ and }
     \Angles{f_2,r_2}\in\objA{2}{}(B_2)\,}
\]
\end{itemize}

\begin{lemma}
\label{lem:relativized-typing-of-parallel}
If the relativized typings
$\bigl(\N_1:(T_1,\objA{1}{})\bigr)$ and 
$\bigl(\N_2:(T_2,\objA{2}{})\bigr)$ are principal, resp. valid,
then so is the relativized typing
$\bigl(\ConnP{\N_1}{\N_2}\bigr):
\bigl(\ConnPT{(T_1,\objA{1}{})}{(T_2,\objA{2}{})}\bigr)$ principal, resp.
valid.
\end{lemma}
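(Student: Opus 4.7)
My plan is to leverage Lemma~\ref{lem:typing-of-parallel} for the ``typing'' component $T = \ConnPT{T_1}{T_2}$ and reduce the new content of the lemma to showing that the auxiliary function $\objA{}{}$ defined in Section~\ref{sect:operations-relativized} correctly aggregates the $\ooo$-optimality information carried by $\objA{1}{}$ and $\objA{2}{}$. Two structural facts about $\bigl(\ConnP{\N_1}{\N_2}\bigr)$ will be used throughout: (a) its arcs, nodes, and capacity bounds are the disjoint union of those of $\N_1$ and $\N_2$, since the parallel constructor adds no new nodes, merges no arcs, and imposes no new capacity constraints; and (b) by clause~3 of the definition of $\fullSem{\cdot}$ in Section~\ref{sect:flows}, every feasible flow $f$ in $\bigl(\ConnP{\N_1}{\N_2}\bigr)$ decomposes uniquely as $f = \ConnPT{f_1}{f_2}$ with $f_i \in \fullSem{\N_i}$, and this decomposition is a bijection.

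The key technical observation is that, since $\ooo$ is an additive aggregate with $\ooo(f) = \sum_{a \in \aaa_{\text{out,\#}}} \theta(f,a)$, and $\theta(f,a)$ depends only on $f(a)$ (and on $a$ itself, including its capacity), the disjointness of the arc sets of $\N_1$ and $\N_2$ yields
\[
   \ooo\bigl(\ConnPT{f_1}{f_2}\bigr) \;=\; \ooo(f_1) \,+\, \ooo(f_2).
\]
Moreover, any $B \subseteq \inn{(\ConnP{\N_1}{\N_2})}\,\cup\,\out{(\ConnP{\N_1}{\N_2})}$ partitions uniquely as $B = B_1 \uplus B_2$ with $B_i \subseteq \inn{\N_i}\cup\out{\N_i}$, and for any two flows $f = \ConnPT{f_1}{f_2}$ and $g = \ConnPT{g_1}{g_2}$ we have $\rest{f}{B} = \rest{g}{B}$ iff $\rest{f_i}{B_i} = \rest{g_i}{B_i}$ for both $i \in \Set{1,2}$.

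Combining these two facts, minimizing $\ooo\bigl(\ConnPT{g_1}{g_2}\bigr)$ over pairs $\Angles{g_1,g_2}\in\poly{T_1}\times\poly{T_2}$ agreeing with $\Angles{f_1,f_2}$ on $B_1$ and $B_2$ respectively decomposes into two \emph{independent} minimizations, one over each $\poly{T_i}$, and the optimal value is the sum of the two optimal values. This is exactly what the definition of $\objA{}{}(B_1\cup B_2)$ in Section~\ref{sect:operations-relativized} encodes. Hence $\Angles{f,B,r}\in \Poly{T,\objA{}{}}$ iff its components $\Angles{f_i,B_i,r_i}\in \Poly{T_i,\objA{i}{}}$ with $r = r_1+r_2$. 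Validity then follows from the assumed inclusions $\Poly{T_i,\objA{i}{}}\subseteq \ioSem{\N_i\,|\,\ooo}$ together with clause~3 in the definition of $\fullSem{\cdot\,|\,\ooo}$ in Section~\ref{sect:relativized-semantics}; principality follows by running the decomposition in the reverse direction, since any triple in $\ioSem{(\ConnP{\N_1}{\N_2})\,|\,\ooo}$ splits into component triples that must lie in $\ioSem{\N_i\,|\,\ooo} = \Poly{T_i,\objA{i}{}}$ by principality of the parts.

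The main obstacle I expect is making rigorous the ``separable minimization'' step: one must verify that no coupling is introduced between $\N_1$ and $\N_2$ by flow conservation (none arises, since $\ConnP{\cdot}{\cdot}$ adds no new nodes) and that the book-keeping works uniformly for every $B$, including those mixing input and output arcs from both $\N_1$ and $\N_2$. Once separability is established, the remainder of the argument is a direct, induction-free unfolding of the definitions of $\Poly{\cdot,\cdot}$, $\objA{}{}$, and the relativized IO-semantics $\ioSem{\cdot\,|\,\ooo}$.
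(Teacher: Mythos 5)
Your argument is correct, and since the paper defers the proof of this lemma to the full report, there is no in-paper proof to compare it against; your route is the natural one. The decisive points are exactly the ones you isolate: $\poly{\ConnPT{T_1}{T_2}}$ is the product $\poly{T_1}\times\poly{T_2}$ (the mixed constraints $T_1(A_1)\oplus T_2(A_2)$ are implied by the componentwise ones), restriction to $B=B_1\uplus B_2$ factors through the decomposition $f=\ConnPT{f_1}{f_2}$, and $\ooo$ splits additively over the disjoint arc sets, so the constrained minimization separates and the combined $\objA{}{}$ of Section~\ref{sect:operations-relativized} coincides with the product of the $\objA{i}{}$. From there, validity and principality follow by chasing clause~3 of the relativized semantics in each direction, as you say.

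One point deserves an explicit flag rather than a parenthesis: the additivity $\ooo(\ConnPT{f_1}{f_2})=\ooo(f_1)+\ooo(f_2)$ is \emph{not} a formal consequence of the paper's definition of additive aggregate, which only requires $\ooo(f)=\sum_{a}\theta(f,a)$ and places no restriction on how $\theta(f,a)$ may depend on the values of $f$ at arcs other than $a$. Your assumption that $\theta(f,a)$ depends only on $f(a)$ and on $a$ is what actually does the work; it holds for $\symhr$ and $\symau$ and is implicitly required for clause~3 of the relativized semantics (and hence for Proposition~\ref{prop:relativized-full-semantics}) to be correct at all, but you should state it as a standing hypothesis on $\ooo$ rather than derive it from the definition as given. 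A second, minor remark: for the validity and principality directions themselves you do not need the separable-minimization argument at all, since $\ioSem{\ConnP{\N_1}{\N_2}\,|\,\ooo}$ is \emph{defined} via clause~3 as the set of componentwise combinations; the separability argument is what certifies that the combined $\objA{}{}$ agrees with the boxed semantic definition of $\objA{T}{}$ for $T=\ConnPT{T_1}{T_2}$, which is worth proving but is strictly more than the lemma asks.
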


Let $\bigl(\PP:(T,\objA{}{})\bigr)$ be a relativized typing for network
specification $\PP$.  We define the relativized typing $(T^*,\objA{}{*}) =
\loopT{(T,\objA{}{})}{\Angles{a,b}}$ for the network 
$\Loop{\Angles{a,b}}{\PP}$ as follows:
\begin{itemize}
\item $T^* = \loopT{T}{\Angles{a,b}}$,
      as defined in Section~\ref{sect:operations},
\item for every $B\subseteq (\inn{\PP}\cup\out{\PP})-\Set{a,b}$:
\begin{alignat*}{2}
     &\objA{}{*}(B)\ =
     \ \bigl\{\,\Angles{\rest{f}{B},r}\;\bigl|
     \; &&\Angles{f,r}\in\objA{}{}(B\cup\Set{a,b}),\; f(a)=f(b),
     \text{ and for all $\Angles{g,s}\in\objA{}{}(B\cup\Set{a,b})$}
\\ 
     & &&\text{if $g(a) = g(b)$ and $\rest{f}{B}=\rest{g}{B}$ 
         then $r\leqslant s$}\,\bigr\}
\end{alignat*}
\end{itemize}

\begin{lemma}
\label{lem:relativized-typing-of-bind}
If the relativized typing $\bigl(\PP:(T,\objA{}{})\bigr)$
is principal, resp. valid, then so is the relativized typing
\emph{$\bigl(\Loop{\Angles{a,b}}{\PP}:\loopT{(T,\objA{}{})}{\Angles{a,b}}\bigr)$} 
principal, resp. valid.
\end{lemma}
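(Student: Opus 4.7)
The plan is to split the claim into two sub-arguments corresponding to the two components of the relativized typing $(T^*,\objA{}{*}) := \loopT{(T,\objA{}{})}{\Angles{a,b}}$. First, note that $T^* = \loopT{T}{\Angles{a,b}}$ is constructed from $T$ exactly as in the non-relativized setting, so the earlier Lemma~\ref{lem:typing-of-loop} immediately yields that $T^*$ is principal (respectively valid) for $\Loop{\Angles{a,b}}{\PP}$ whenever $T$ is principal (respectively valid) for $\PP$. Hence $\poly{T^*}\subseteq \ioSem{\Loop{\Angles{a,b}}{\PP}}$, with equality in the principal case. This takes care of the typing component and reduces the lemma to a statement about the auxiliary functions $\objA{}{}$ and $\objA{}{*}$.

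What remains is to show, in the principal case, the set equality $\Poly{T^*,\objA{}{*}} = \ioSem{\Loop{\Angles{a,b}}{\PP}\,|\,\ooo}$ (with $\subseteq$ in the valid case). I would unfold Definition~\ref{defn:valid-and-principal} applied to $(T^*,\objA{}{*})$ and compare it term by term with clause~5 of the definition of $\fullSem{\Loop{\Angles{a,b}}{\PP}\,|\,\ooo}$, projected onto $\inn{\PP}\cup\out{\PP}$ to obtain $\ioSem{\Loop{\Angles{a,b}}{\PP}\,|\,\ooo}$. The definition of $\objA{}{*}(B)$ was deliberately engineered to mirror that clause: for $B\subseteq (\inn{\PP}\cup\out{\PP})-\Set{a,b}$, a pair $\Angles{\rest{f}{B},r}$ belongs to $\objA{}{*}(B)$ iff $f(a)=f(b)$, $\Angles{f,r}\in\objA{}{}(B\cup\Set{a,b})$, and $r$ is minimal among $s$ such that $\Angles{g,s}\in\objA{}{}(B\cup\Set{a,b})$ with $g(a)=g(b)$ and $\rest{f}{B}=\rest{g}{B}$. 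Translating through $\poly{T}\subseteq\ioSem{\PP}$ (equality in the principal case) and Proposition~\ref{prop:relativized-full-semantics}, this is exactly the condition for $\Angles{\rest{f}{B},B,r}$ to lie in $\ioSem{\Loop{\Angles{a,b}}{\PP}\,|\,\ooo}$. Additivity of $\ooo$ as an aggregate function ensures the value $r$ is unchanged when $a$ is re-designated from boundary to internal arc (and $b$ is eliminated), so the $\ooo$-value recorded is the correct one for the looped network.

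The main obstacle lies in the bookkeeping around the pre- and post-loop arcs. The optimization inside $\objA{}{*}(B)$ must be performed relative to $\objA{}{}(B\cup\Set{a,b})$ rather than $\objA{}{}(B)$, because closing the loop imposes the new constraint $f(a)=f(b)$ and two flows in $\poly{T^*}$ agreeing on $B$ may extend to feasible flows of $\PP$ differing on $a$ and $b$; only by conditioning on $B\cup\Set{a,b}$ do we retain enough information to compare candidate IO-flows of $\Loop{\Angles{a,b}}{\PP}$ across their valid extensions under the loop constraint. I would need to verify carefully that (i) each triple in $\Poly{T^*,\objA{}{*}}$ admits a witness $f\in\fullSem{\PP}$ with $f(a)=f(b)$ minimizing $\ooo$ over all such extensions agreeing on $B$, and (ii) conversely, every $\ooo$-optimal IO-flow of $\Loop{\Angles{a,b}}{\PP}$ is thereby captured; the valid case is then handled by the same enumeration with $\subseteq$ in place of $=$ throughout, exploiting that the auxiliary $\objA{}{}$ still witnesses $\ooo$-minima over the smaller pool $\poly{T}\subseteq\ioSem{\PP}$.
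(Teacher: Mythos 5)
The paper itself does not include a proof of this lemma (it is deferred, along with the proofs of Lemmas~\ref{lem:typing-of-parallel}, \ref{lem:typing-of-loop} and~\ref{lem:relativized-typing-of-parallel}, to the full report), so there is no in-paper argument to compare against; I can only assess your plan on its own terms. For the \emph{principal} case your plan is the right one and, as far as I can check, it goes through: Lemma~\ref{lem:typing-of-loop} disposes of the $T$-component; the $\objA{}{}$-component is a definition chase matching $\objA{}{*}(B)$ against clause~5 of $\fullSem{\cdot\,|\,\ooo}$ via Proposition~\ref{prop:relativized-full-semantics}; your two key observations are correct, namely that the optimization must be conditioned on $B\cup\Set{a,b}$ (since two flows agreeing on $B$ may take different common values on the identified pair), and that $\ooo$ is literally unchanged under \textbf{\textsf{bind}} because $\aaa_{\text{out,\#}}$ is the same set of arcs before and after ($a$ moves from output to internal, $b$ was an input arc and never contributed). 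The one step worth writing out is the ``$\subseteq$'' direction: to show a declared triple is genuinely $\ooo$-optimal in $\Loop{\Angles{a,b}}{\PP}$ against an arbitrary competitor $g'$, you must locate a \emph{declared} pair $\Angles{g_0,s_0}\in\objA{}{}(B\cup\Set{a,b})$ in the $(B\cup\Set{a,b})$-fiber of $\rest{g'}{\inn{\PP}\cup\out{\PP}}$ with $s_0\leqslant\ooo(g')$, and only then does the minimality clause of $\objA{}{*}$ apply. This requires that every such fiber of $\ioSem{\PP}$ meets $\F=\poly{T}$ and that the fiber-minimum is attained and declared --- both of which hold when $T$ is principal, i.e., $\F=\ioSem{\PP}$.

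That last observation is exactly where your treatment of the \emph{valid} case has a genuine gap. You assert the valid case is ``the same enumeration with $\subseteq$ in place of $=$,'' but the witness-finding step above is not monotone in that way: when $T$ is merely valid, $\F=\poly{T}$ may be a proper subset of $\ioSem{\PP}$, so a competitor $g'\in\fullSem{\Loop{\Angles{a,b}}{\PP}}$ agreeing with $f$ on $B$ but taking a different common value on $\Set{a,b}$ may project to a flow whose $(B\cup\Set{a,b})$-fiber is disjoint from $\F$ (or meets $\F$ only in points with larger $\ooo$-value). Such a $g'$ is never compared against by the minimality clause of $\objA{}{*}(B)$, which quantifies only over pairs declared in $\objA{}{}(B\cup\Set{a,b})$; hence the declared triple need not be $\ooo$-optimal in the looped network, and membership in $\ioSem{\Loop{\Angles{a,b}}{\PP}\,|\,\ooo}$ does not follow. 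Validity of $\bigl(\PP:(T,\objA{}{})\bigr)$ guarantees each declared flow is optimal \emph{within its own fiber over $B\cup\Set{a,b}$}, but it gives no control over the cross-fiber comparison that closing the loop introduces. So the valid half of the lemma needs either an additional hypothesis relating $\poly{T}$ to the fibers of $\ioSem{\PP}$, or a genuinely different argument; at minimum your proof must isolate and justify this step rather than fold it into the principal-case enumeration.
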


\subsection{Relativized Typing Rules}
\label{sect:rules-relativized}

\begin{figure}[!ht] % \begin{figure*}
\noindent      
\begin{minipage}{1.0\textwidth}
\small
{ % \spacing{\footnotesize}{0.9}  
%% \[ \begin{array}{llll} %% \begin{alignat*}{3} 
%% AJK: You can't use \multicolum within \begin{array} environment
%%      or \begin{alignat} environment, use instead \begin{tabular}
\begin{tabular}{lllll}
& \textsf{\sc Hole}
  &$\dfrac{\ (X : (T,\objA{}{})) \in\ \Env\ }
          {\ \Judgement{\Env}{\rename{X}{i}}
    {(\rename{T}{i\;},\rename{\objA{}{}}{i\;})}\ }$ %
% \qquad 
  & $i\geqslant 1$ is smallest 
\\[-1ex] & & & 
  available renaming index
\\[2ex]
&\textsf{\sc Small} % $\quad$ 
  &$\dfrac{\ }
          {\ \Judgement{\Env}{\A}{(T,\objA{}{})}\ }$
   \qquad & $(T,\objA{}{})$ is a relativized 
\\[-1ex] & & & 
   typing for small network $\A$
\\[2ex]
&\textsf{\sc Par} % $\quad$
  & $\dfrac{\ \Judgement{\Env}{\N_1}{(T_1,\objA{1}{})}
    \qquad\Judgement{\Env}{\N_2}{(T_2,\objA{2}{})}\ }
   {\ \Judgement{\Env}{(\ConnP{\N_1}{\N_2})}
   {\ConnPT{(T_1,\objA{1}{})}{(T_2,\objA{2}{})}}\ }
   $ % \qquad$
\\[3ex]
&\textsf{\sc Bind} % $\quad$
  & $\dfrac{\ \Judgement{\Env}{\N}{(T,\objA{}{})}\ }
    {\ \Judgement{\Env}{\Loop{\Angles{a,b}}{\N}}
       {\loopT{(T,\objA{}{})}{\Angles{a,b}}}\ }
    $ % \quad$
  & $\Angles{a,b} \in \out{\N}\times \inn{\N}$
\\[3ex]
&\textsf{\sc Let} % $\quad$
   & $\dfrac{\ \Judgement{\Env}{\M}{(T_1,\objA{1}{})}
            \qquad
 \ \Judgement{\Env\cup\Set{X:(T_2,\objA{2}{})}}{\N}
   {(T,\objA{}{})}\  }
 {\ \Judgement{\Env}
     {(\;\Let{X}{=\M}{\N})}
     {(T,\objA{}{})}\ }$ 
   & $(T_1,\objA{1}{})\approx (T_2,\objA{2}{})$
% \\
\end{tabular}
%%% \end{array} \] %% \end{alignat*}
}
\end{minipage}
\caption{Relativized Typing Rules for Flow Networks.}\small\smallskip
        The operations 
        ``$\ConnPT{(T_1,\objA{1}{})}{(T_2,\objA{2}{})}$''
        and ``$\loopT{(T,\objA{}{})}{\Angles{a,b}}$''
        are defined in Section~\ref{sect:operations-relativized}.
        A derivation according 
        to the rules is stopped from the moment a judgment
        $\Judgement{\Env}{\N}{(T,\objA{}{})}$ is reached such that 
        $\Poly{T,\objA{}{}} = \varnothing$, at which point
        $\N$ is rejected as ``unsafe''.
\label{fig:relativized-typing-rules} 
\end{figure} % \end{figure*}

\begin{theorem}[Existence of Relativized Principal Typings]
\label{thm:principal-relativized-typings-in-general}
Let $\N$ be a closed network specification and $(T,\objA{}{})$ a
relativized typing for $\N$ derived according to the rules in
Figure~\ref{fig:relativized-typing-rules}, \ie, the judgment
``$\Judgement{}{\N}{(T,\objA{}{})}$'' is derivable according to the
rules.
If the relativized typing of every small network $\A$ in $\N$ is 
principal (resp., valid) for $\A$, then $(T,\objA{}{})$ 
is a principal (resp., valid) relativized typing for $\N$.
\end{theorem}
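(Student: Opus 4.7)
The plan is to proceed by structural induction on the derivation of the judgment $\Judgement{}{\N}{(T,\objA{}{})}$, closely mirroring the proof of Theorem~\ref{thm:principal-typings-in-general} but now tracking, in addition to $\poly{T}$, the auxiliary function $\objA{}{}$ against the relativized IO-semantics $\ioSem{\N\,|\,\ooo}$ defined in Section~\ref{sect:relativized-semantics}. The invariant I will maintain along the induction is exactly the statement of Definition~\ref{defn:valid-and-principal}: for the valid case, $\Poly{T,\objA{}{}} \subseteq \ioSem{\N\,|\,\ooo}$; for the principal case, $\Poly{T,\objA{}{}} = \ioSem{\N\,|\,\ooo}$. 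Because $\N$ is closed, the final derivation has an empty environment, and all hole assumptions have been discharged by applications of \textsf{\sc Let}.

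The base cases are immediate. For \textsf{\sc Small}, the conclusion is exactly the hypothesis that $(T,\objA{}{})$ is principal (resp.\ valid) for $\A$. The \textsf{\sc Hole} rule only appears in the subderivations, underneath a \textsf{\sc Let} binding, and is handled as part of the \textsf{\sc Let} case. The inductive cases for \textsf{\sc Par} and \textsf{\sc Bind} reduce directly to Lemmas~\ref{lem:relativized-typing-of-parallel} and~\ref{lem:relativized-typing-of-bind}, respectively: applying the induction hypothesis to the premise(s) gives us principal (resp.\ valid) relativized typings for the immediate sub-specifications, and the lemmas then lift this property through the operations $\ConnPT{}{}$ and $\loopT{}{\Angles{a,b}}$ on relativized typings. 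Clauses 3 and 5 in the inductive definition of $\fullSem{\N\,|\,\ooo}$ were precisely engineered to make these lemmas hold, so the arguments here are essentially bookkeeping: one unfolds the definition of $\Poly{\cdot,\cdot}$ and checks agreement against the matching clause in $\fullSem{\N\,|\,\ooo}$, using the additive-aggregate property of $\ooo$ to justify the $r_1+r_2$ accumulation in the \textsf{\sc Par} case.

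The main obstacle will be the \textsf{\sc Let} case. Here we must argue that if $(T_1,\objA{1}{})$ is principal (resp.\ valid) for $\M$ and $(T,\objA{}{})$ is derived for $\N$ under the assumption $(X:(T_2,\objA{2}{}))$ with $(T_1,\objA{1}{}) \approx (T_2,\objA{2}{})$, then $(T,\objA{}{})$ is principal (resp.\ valid) for $\Let{X}{=\M}{\N}$. The delicate point is that $X$ may have multiple renamed occurrences $\rename{X}{1},\ldots,\rename{X}{n}$ in $\N$; each supplies, through the \textsf{\sc Hole} rule, a renamed copy $\rename{(T_2,\objA{2}{})}{i}$. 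Condition (4b) in the definition of $\fullSem{\N\,|\,\ooo}$ is what makes this substitution sound, but to push the induction through I expect to need to strengthen the induction hypothesis so that it handles subexpressions with free holes: the statement should read that for any $\N'$ with free holes typed by $\Env$, if every assumption $(X:(T_X,\objA{X}{}))\in\Env$ is principal (resp.\ valid) for some ``plug'' that will eventually be substituted, then $(T,\objA{}{})$ for $\N'$ is principal (resp.\ valid) relative to those plugs. This is analogous to the substitution lemma needed in the non-relativized proof, and the relativized version follows the same pattern once we observe that optimality relative to $B$ is preserved when passing between a hole and its filling, because condition (4b) forces the $\objA{X}{}$ in the environment to coincide (up to renaming) with the compositionally-computed auxiliary function from the relativized typing of $\M$.

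With the strengthened induction hypothesis in place, each inductive step is local: \textsf{\sc Par} and \textsf{\sc Bind} use the two lemmas verbatim, \textsf{\sc Hole} unpacks the environment, \textsf{\sc Small} is immediate, and \textsf{\sc Let} combines the induction hypothesis on $\M$ with a substitution argument on $\N$. The conclusion of the theorem then follows by specializing the strengthened statement to the closed $\N$ with empty environment.
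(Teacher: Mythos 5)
Your proof follows exactly the route the paper intends: structural induction on the typing derivation, with the \textsf{\sc Par} and \textsf{\sc Bind} cases discharged by Lemmas~\ref{lem:relativized-typing-of-parallel} and~\ref{lem:relativized-typing-of-bind}, and the \textsf{\sc Let} case handled via clause 4 of the relativized semantics under an induction hypothesis strengthened to open specifications with typed free holes. The paper itself leaves this proof implicit (deferring details to the full report), but the operations and lemmas of Section~\ref{sect:operations-relativized} are set up precisely for this argument, so your reconstruction matches the intended one.
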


\Hide
{
\bigskip
Example~\ref{ex:illustrating-objective-functions} illustrates some of
the preceding notions. Let $T$ be a valid typing for network
specification $\N$ and $\F=\poly{T}$. If $\aaa_{\text{in}} = \inn{\N}$
and $\aaa_{\text{out}} = \out{\N}$, we define the function
$\objB{T}{}$ according to:

\bigskip
% \begin{center}
\fbox{
\hspace*{-.3in}
\begin{minipage}{.5\textwidth}
\vspace*{-.2in}
\begin{alignat*}{3}
 &\objB{T}{} &&:&&
    \power{\aaa_{\text{in}}\cup\aaa_{\text{out}}}\to \power{\F}
\\
 &\objB{T}{}(B)\ && =
   \ && \SET{\,f\;\bigl|\;\Angles{f,r}\in\objA{T}{}(B)\text{ for some $r$}\,}
\end{alignat*}
\end{minipage}
}
% \end{center}

\bigskip
\noindent
\ie, $\objB{T}{}(B)$ is the same set as $\objA{T}{}(B)$ after throwing
away the second entry of every pair in the latter. In words,
$\objB{T}{}(B)$ selects all the $\ooo$-optimal ones among the flows in $\F$ that
agree on $B$. 

Recall a convenient shorthand representation of $f\in\ioSem{\N}$ from
Section~\ref{sect:semantics}.  If $\size{\dimIO{\N}} = m\geqslant 1$,
we can write $f$ as an $m$-tuple of non-negative reals:
\[
   f\ \text{ is represented by }\ \bm{r}\ =\ \Angles{r_1,\ldots,r_m},
\]
\ie, if $\dimIO{\N} = \Angles{a_1,\ldots,a_m}$, then 
$f(a_1) = r_1,\ldots,f(a_m) = r_m$. More succintly, we write
$f(\dimIO{\N}) = \bm{r}$. The zero flow through $\N$ is represented
by the $m$-tuple of zero's, $\Angles{0,0,\ldots,0}$, which can be extended
to a feasible flow only if $L(a) = 0$ for every arc $a$.
It is also convenient to define the ``monus'' function on real numbers 
as follows:
\[
   x \dotminus y = \begin{cases}
                   x - y\quad &\text{if $x > y \geqslant 0$},\\
                   0     &\text{otherwise}.
                   \end{cases}
\]
}

\Hide
{
\begin{example}
\label{ex:illustrating-objective-functions}
Consider small network $\A$ from
Examples~\ref{ex:illustrate-inductive-def},
\ref{ex:six-and-eight-node-networks},
and~\ref{ex:valid-but-not-principal}.  Here, $\dimIO{\A} =
\Angles{a_1,a_2,a_3,a_4}$.  For every arc $a$ in this small network,
the lower bound $L(a) = 0$. Let $T$ be a valid typing for $\A$
and write $\F$ for $\poly{T}$. Hence, $\F\subseteq\ioSem{\A} = \poly{T_{\A}}$, 
where $T_{\A}$ is the principal typing of $\A$ determined in 
Example~\ref{ex:six-and-eight-node-networks}.

The function on $\dimIO{\A}$ represented by $\Angles{0,0,0,0}$ 
can be trivially extended to a feasible flow in $\A$ which is 
$\symhr$-, $\symau$-, and $\symmd$-optimal. Hence, 
if $B = \varnothing$, then:
\[
  \objB{T}{\symhr}(\varnothing)\ =\ \objB{T}{\symau}(\varnothing)
  \ =\ \objB{T}{\symmd}(\varnothing) 
  \ =\ \Set{\Angles{0,0,0,0}}
\]
If $B = \Set{a_1,a_2,a_3,a_4}$, then:
\[
  \objB{T}{\symhr}(B)\ =\ \objB{T}{\symau}(B)\ =\ \objB{T}{\symmd}(B) 
 \ =\ \F
\]
For the rest of this example, we omit the subscript ``$T$''.

The more interesting cases to consider are for $\varnothing\neq
B\neq\Set{a_1,a_2,a_3,a_4}$.  We consider $B = \Set{a_1}$ and $B =
\Set{a_1,a_2}$ only, leaving the other cases of $B$ such that
$\varnothing\neq B\neq\Set{a_1,a_2,a_3,a_4}$ to the reader.  For $B =
\Set{a_1}$, and objectives $\symhr$ and $\symau$, it is readily
checked that:
\begin{alignat*}{3}
  &\objB{}{\symhr}(B)\ &&=\ &&
    \F\cap \SET{\,\Angles{r_1,\,0,\,r_1+s,\,-s}
          \in {\reals}^4\;\bigl|
          \;-(r_1\dotminus 5)\leqslant s \leqslant 0 \,}
\\
  &\objB{}{\symau}(B)\ &&=\ &&
    \F\cap \SET{\,\Angles{r_1,\,0,\,r_1+s,\,-s}
          \in {\reals}^4\;\bigl|
          \;-\min\Set{r_1,10}\leqslant s \leqslant 0 \,}
\end{alignat*}
For $B = \Set{a_1,a_2}$, and objectives % \textbf{HR} and \textbf{AU}
$\symhr$ and $\symau$ again, by brute-force inspection:
\begin{alignat*}{3}
  &\objB{}{\symhr}(B)\ &&=\ &&
    \F\cap \SET{\,\Angles{r_1,\,r_2,\,r_1+s,\,r_2 - s}
          \in {\reals}^4\;\bigl|
          \;-(r_1\dotminus 5)\leqslant s \leqslant r_2\dotminus 15 \,}
\\
  &\objB{}{\symau}(B)\ &&=\ &&
    \F\cap \SET{\,\Angles{r_1,\,r_2,\,r_1+s,\,r_2 - s}
          \in {\reals}^4\;\bigl|
          \;-\min\Set{r_1,10-(r_2\dotminus 15)}\leqslant s 
                      \leqslant r_2\dotminus 15 \,}
\end{alignat*}
To see how we determined $\objB{}{\symhr}(B)$ with $B = \Set{a_1,a_2}$,
start with fixed values $r_1$ and $r_2$ in the intervals $[0,15]$ and
$[0,25]$, such that $r_1+r_2$ is also in the interval
$[0,30]$. Objective $\symhr$ pushes incoming flow at arc $a_1$
through internal arc $a_5$ as much as possible before using arc $a_8$,
and incoming flow at arc $a_2$ through internal arc $a_{11}$ as much
as possible before using arc $a_8$. By contrast, for the determination
of $\objB{}{\symau}(B)$, objective $\symau$ pushes incoming flow at
arc $a_1$ through arc $a_8$ as much as possible before using arc
$a_5$, and incoming flow at arc $a_2$ through arc $a_{11}$ as much as
possible before using arc $a_8$.

It is a little trickier to determine $\objB{}{\symmd}(B)$, because
the function $\md{f}$ is non-linear in the arc flows $f(a)$. 
For $B=\Set{a_1}$, a little examination shows:
\[
  \objB{}{\symmd}(B)\ =\ 
  \F\cap \SET{\,
    \Angles{r_1,\,0,\,r_1+s,\,- s}\in {\reals}^4\;
    \bigl|\;-(r_1 -\frac{r_1\dotminus 5}{2})
          \leqslant s \leqslant 0 \,}
\]
For $B=\Set{a_1,a_2}$, the determination of $\objB{}{\symmd}(B)$ is
more involved. By brute-force computation: 
\begin{alignat*}{1}
  &\objB{}{\symmd}(B)\ =\ 
\\
  &\F\cap \Bigl\{\,
    \Angles{r_1,\,r_2,\,r_1+s,\,r_2 - s}\in {\reals}^4\;
    \Bigl|\;-\Bigl[r_1 -\frac{r_1\dotminus (5 - (r_2\dotminus 5)/2)}{2}\Bigr]
          \leqslant s \leqslant \frac{r_2\dotminus 5}{2}
          \quad\text{where } r_2\leqslant 15 \,\Bigr\}
\\ &\bigcup
\\
  &\F\cap \Bigl\{
    \,\Angles{r_1,\,r_2,\,r_1+s,\,r_2 - s}\in {\reals}^4\;
    \Bigl|\;-\Bigl[\frac{r_1 \dotminus ((r_2 - 5)/2 - 5)}{2}\Bigr]
          \leqslant s \leqslant \frac{r_2 - 5}{2}
          \quad\text{where } r_2\geqslant 15 \,\Bigr\}
\end{alignat*}
where we consider separately the two cases, 
$0\leqslant r_2\leqslant 15$ and $15\leqslant r_2\leqslant 25$.

To see how we determined $\objB{}{\symmd}(B)$, start with 
a fixed $r_1$ in the interval $[0,15]$, then choose $r_2$ in the interval 
$[0,25]$ without violating $r_1+r_2$ in the interval $[0,30]$. 
Consider the quantity that must be minimized
to achieve the objective $\symmd$: 
\[
  \frac{1}{(5 - r_{1,1})^2} + 
  \frac{1}{(10 - r_{1,2} - r_{2,2})^2} + \frac{1}{(15 - r_{2,1})^2}
\]
where $r_{1,1}$ and $r_{1,2}$ are the portions of $r_1$ which use
internal arcs $a_5$ (of capacity 5) and $a_8$ (of capacity 10), and
$r_{2,1}$ and $r_{2,2}$ are the portions of $r_2$ which use internal
arcs $a_{11}$ (of capacity 15) and $a_8$ (of capacity 10).  For $r_1
\leqslant 5$, objective $\symmd$ tries to minimize $r_{1,1}$ and
maximize $r_{1,2}$. By contrast, for $r_2 \leqslant 15$, objective
$\symmd$ tries to maximize $r_{2,1}$ and minimize $r_{2,2}$. When both
$r_1 \geqslant 5$ and $r_2 \geqslant 15$, the two flows compete for
use of arc $a_8$.
\end{example}
}

\vspace{-.2in}

\section{Related and Future Work}
\label{sect:related}
%% conclusion.tex

Ours is not the only study that uses \emph{intervals} as types and
\emph{polytopes} as typings. There were earlier attempts that heavily
drew on linear algebra and polytope theory, mostly initiated
by researchers who devised ``types as abstract interpretations'' --
see~\cite{Cousot97-1} and references therein. However, the motivations
for these earlier attempts were entirely different and applied to 
programming languages unrelated to our DSL. For example, polytopes
were used to define ``invariant safety properties'', or
``types'' by another name, for \textsc{Esterel} -- an imperative
synchronous language for the development of reactive systems
\cite{polys}.

Apart from the difference in motivation with earlier works,
there are also technical differences in the use of polytopes.  Whereas
earlier works consider polytopes defined by unrestricted linear
constraints \cite{cousot-halbwachs-78,polys}, our polytopes are
defined by linear constraints where every coefficient is $+1$ or $-1$,
as implied by our Definitions~\ref{def:flow-conservation},
\ref{def:capacity-constraints}, \ref{def:feasible-flows},
and~\ref{def:type-satisfaction}. Ours are identical to the
linear constraints (but not necessarily the linear objective function)
that arise in the \emph{network simplex method} \cite{Cunningham79},
\ie, linear programming applied to problems of network flows. There is
still on-going research to improve network-simplex algorithms
(\eg, \cite{rashidi-tsang-09}), which will
undoubtedly have a bearing on the efficiency of typing inference for
our DSL.

Our polytopes-cum-typings are far more restricted than polytopes
in general. Those of particular interest to us correspond to \emph{valid}
typings and \emph{principal} typings. As of now, we do not have a characterization 
-- algebraic or even syntactic on the shape of linear constraints -- of 
polytopes that are \emph{valid network typings} 
(or the more restrictive \emph{principal network typings}). 
Such a characterization will likely guide and improve 
the process of typing inference.

Let $\N$ be a network specification, with $\aaa_{\text{in}} = \inn{\N}$ and
$\aaa_{\text{out}} = \out{\N}$. Another source of current inefficiency is
that valid and principal typings for $\N$ tend to be ``over-specified'', 
as they unnecessarily assign an interval-cum-type to \emph{every} subset of 
${\aaa_{\text{in}}\uplus\aaa_{\text{out}}}$. 
Several examples in~\cite{kfouryDSL:2011} illustrate this kind of
inefficiency. This will lead
us to study \emph{partial typings} 
\(
   T : \power{\aaa_{\text{in}}\uplus\aaa_{\text{out}}}
       \rightharpoonup\reals\times\reals
\),
which assign intervals to some, not necessarily all, subsets of 
${\aaa_{\text{in}}\uplus\aaa_{\text{out}}}$.
Such a partial mapping $T$ can always be extended to a total mapping 
$T' : \power{\aaa_{\text{in}}\uplus\aaa_{\text{out}}}\to\reals\times\reals$,
in which case we write $T\subseteq T'$. We say 
the partial typing $T$ is \emph{valid} for $\N$ if \emph{every} (total)
typing $T'\supseteq T$ is valid for $\N$, and we say $T$
is \emph{minimal valid} for $\N$ if $T$ is valid for $\N$ \emph{and}
for every partial typing $T''$ for $\N$ such that $T''\subsetneq T$, 
\ie, $T''$ assigns strictly fewer intervals than $T$, it is the case that 
$T\not\equiv T'$. And similarly for the definitions of partial typings that 
are \emph{principal} and \emph{minimal principal} for $\N$.  

As alluded in the Introduction and again in Remark~\ref{rem:invariance-of-semantics}, 
we omitted an operational semantics of
our DSL in this paper to stay clear of complexity issues arising from
the associated rewrite (or reduction) rules. Among other benefits,
relying on a denotational semantics allowed us to harness this
complexity by performing a static analysis, via our typing theory,
without carrying out a naive hole-expansion
(or \textbf{\textsf{let-in}} elimination).  We thus traded the
intuitively simpler but costlier operational semantics for the more
compact denotational semantics.

However, as we introduce other more complex constructs involving holes in follow-up
reports (\textbf{\textsf{try-in}}, \textbf{\textsf{mix-in}}, and 
\textbf{\textsf{letrec-in}} mentioned in the Introduction and in
Remark~\ref{rem:other-derived-constructors} of
Section~\ref{sect:inductive}) this trade-off will diminish in
importance. An operational semantics of our DSL involving these 
more complex hole-binders will bring it closer in line with various calculi
involving \emph{patterns} (similar to our \emph{holes} in many ways, different
in others) and where rewriting consists in eliminating \emph{pattern-binders}. See
\cite{baldan,barthe:pure,Cirstea2004,Cirstea03rewritingcalculus,jay-kesner:esop06} 
and references therein. It remains
to be seen how much of the theory developed for these pattern calculi
can be adapted to an operational semantics of our DSL.

\bibliographystyle{eptcs}
\bibliography{generic,extra}

\begin{thebibliography}{10}
\providecommand{\bibitemdeclare}[2]{}
\providecommand{\urlprefix}{Available at }
\providecommand{\url}[1]{\texttt{#1}}
\providecommand{\href}[2]{\texttt{#2}}
\providecommand{\urlalt}[2]{\href{#1}{#2}}
\providecommand{\doi}[1]{doi:\urlalt{http://dx.doi.org/#1}{#1}}
\providecommand{\bibinfo}[2]{#2}

\bibitemdeclare{inproceedings}{auyoung04resource}
\bibitem{auyoung04resource}
\bibinfo{author}{A.~AuYoung}, \bibinfo{author}{B.~Chun},
  \bibinfo{author}{A.~Snoeren} \& \bibinfo{author}{A.~Vahdat}
  (\bibinfo{year}{2004}): \emph{\bibinfo{title}{Resource allocation in
  federated distributed computing infrastructures}}.
\newblock In: {\sl \bibinfo{booktitle}{1st Workshop on Op Systems and
  Architectural Support for Ondemand IT InfraStructure}}.
\newblock
  \urlprefix\url{http://www.cs.ucsd.edu/~aauyoung/papers/bellagio-oasis04.pdf}.

\bibitemdeclare{article}{baldan}
\bibitem{baldan}
\bibinfo{author}{P.~Baldan}, \bibinfo{author}{C.~Bertolissi},
  \bibinfo{author}{H.~Cirstea} \& \bibinfo{author}{C.~Kirchner}
  (\bibinfo{year}{2007}): \emph{\bibinfo{title}{{A Rewriting Calculus for
  Cyclic Higher-Order Term Graphs}}}.
\newblock {\sl \bibinfo{journal}{Math. Structures in Computer Science}}
  \bibinfo{volume}{17}, pp. \bibinfo{pages}{363--406},
  \doi{10.1017/S0960129507006093}.

\bibitemdeclare{inproceedings}{BalonLeduc:icon2006}
\bibitem{BalonLeduc:icon2006}
\bibinfo{author}{S.~Balon} \& \bibinfo{author}{G.~Leduc}
  (\bibinfo{year}{2006}): \emph{\bibinfo{title}{{Dividing the Traffic Matrix to
  Approach Optimal Traffic Engineering}}}.
\newblock In: {\sl \bibinfo{booktitle}{14th IEEE Int'l Conf. on Networks (ICON
  2006)}}, \bibinfo{volume}{2}, pp. \bibinfo{pages}{566--571}.

\bibitemdeclare{inproceedings}{barthe:pure}
\bibitem{barthe:pure}
\bibinfo{author}{G.~Barthe}, \bibinfo{author}{H.~Cirstea},
  \bibinfo{author}{C.~Kirchner} \& \bibinfo{author}{L.~Liquori}
  (\bibinfo{year}{2003}): \emph{\bibinfo{title}{{Pure Patterns Type Systems}}}.
\newblock In: {\sl \bibinfo{booktitle}{Proc. 30th ACM Symp. on POPL}}, pp.
  \bibinfo{pages}{250--261}, \doi{10.1145/640128.604152}.

\bibitemdeclare{inproceedings}{BestavrosKfouryLapetsOcean:crts09}
\bibitem{BestavrosKfouryLapetsOcean:crts09}
\bibinfo{author}{A.~Bestavros}, \bibinfo{author}{A.~Kfoury},
  \bibinfo{author}{A.~Lapets} \& \bibinfo{author}{M.~Ocean}
  (\bibinfo{year}{2009}): \emph{\bibinfo{title}{{Safe Compositional Network
  Sketches: Tool and Use Cases}}}.
\newblock In: {\sl \bibinfo{booktitle}{IEEE Workshop on Compositional Theory
  and Technology for Real-Time Embedded Systems}}, \bibinfo{address}{Wash D.C.}
\newblock \urlprefix\url{http://www.cs.bu.edu/fac/best/res/papers/crts09.pdf}.

\bibitemdeclare{inproceedings}{BestavrosKfouryLapetsOcean:hscc10}
\bibitem{BestavrosKfouryLapetsOcean:hscc10}
\bibinfo{author}{A.~Bestavros}, \bibinfo{author}{A.~Kfoury},
  \bibinfo{author}{A.~Lapets} \& \bibinfo{author}{M.~Ocean}
  (\bibinfo{year}{2010}): \emph{\bibinfo{title}{{Safe Compositional Network
  Sketches: The Formal Framework}}}.
\newblock In: {\sl \bibinfo{booktitle}{13th ACM HSCC}},
  \bibinfo{address}{Stockholm}.
\newblock \urlprefix\url{http://www.cs.bu.edu/fac/best/res/papers/hscc10.pdf}.

\bibitemdeclare{book}{networkcalculus}
\bibitem{networkcalculus}
\bibinfo{author}{J.-Y.~Le Boudec} \& \bibinfo{author}{P.~Thiran}
  (\bibinfo{year}{2004}): \emph{\bibinfo{title}{Network Calculus}}.
\newblock \bibinfo{publisher}{Springer Verlag}, \doi{10.1007/3-540-45318-0}.
\newblock \bibinfo{note}{LNCS 2050}.

\bibitemdeclare{inproceedings}{Buyya2000}
\bibitem{Buyya2000}
\bibinfo{author}{R.~Buyya}, \bibinfo{author}{D.~Abramson} \&
  \bibinfo{author}{J.~Giddy} (\bibinfo{year}{2000}):
  \emph{\bibinfo{title}{{Nimrod/G}: An Architecture for a Resource Management
  and Scheduling System in a Global Computational Grid}}.
\newblock In: {\sl \bibinfo{booktitle}{HPC ASIA 2000}}.
\newblock \urlprefix\url{http://www.gridbus.org/papers/nimrodg.pdf}.

\bibitemdeclare{article}{Cirstea2004}
\bibitem{Cirstea2004}
\bibinfo{author}{H.~Cirstea}, \bibinfo{author}{C.~Kirchner} \&
  \bibinfo{author}{L.~Liquori} (\bibinfo{year}{2004}):
  \emph{\bibinfo{title}{{Rewriting Calculus with(out) Types}}}.
\newblock {\sl \bibinfo{journal}{Electronic Notes in Theoretical Computer
  Science}} \bibinfo{volume}{71}, pp. \bibinfo{pages}{3--19},
  \doi{10.1016/S1571-0661(05)82526-5}.

\bibitemdeclare{inproceedings}{Cirstea03rewritingcalculus}
\bibitem{Cirstea03rewritingcalculus}
\bibinfo{author}{H.~Cirstea}, \bibinfo{author}{L.~Liquori} \&
  \bibinfo{author}{B.~Wack} (\bibinfo{year}{2003}):
  \emph{\bibinfo{title}{{Rewriting Calculus with Fixpoints: Untyped and
  First-order Systems}}}.
\newblock In: {\sl \bibinfo{booktitle}{Post-proceedings of TYPES, LNCS}},
  \bibinfo{publisher}{Springer}, pp. \bibinfo{pages}{147--161}.

\bibitemdeclare{inproceedings}{Cousot97-1}
\bibitem{Cousot97-1}
\bibinfo{author}{P{.} Cousot} (\bibinfo{year}{1997}):
  \emph{\bibinfo{title}{{Types as Abstract Interpretations, invited paper}}}.
\newblock In: {\sl \bibinfo{booktitle}{Proc. of 24th ACM Symp. on Principles of
  Programming Languages}}, \bibinfo{address}{Paris}, pp.
  \bibinfo{pages}{316--331}.

\bibitemdeclare{inproceedings}{cousot-halbwachs-78}
\bibitem{cousot-halbwachs-78}
\bibinfo{author}{P.~Cousot} \& \bibinfo{author}{N.~Halbwachs}
  (\bibinfo{year}{1978}): \emph{\bibinfo{title}{{Automatic Discovery of Linear
  Restraints Among Variables of a Program}}}.
\newblock In: {\sl \bibinfo{booktitle}{Proc. 5th ACM Symp. on POPL}},
  \bibinfo{address}{Tucson}, pp. \bibinfo{pages}{84--96}.

\bibitemdeclare{article}{Cunningham79}
\bibitem{Cunningham79}
\bibinfo{author}{W.~H. Cunningham} (\bibinfo{year}{1979}):
  \emph{\bibinfo{title}{{Theoretical Properties of the Network Simplex
  Method}}}.
\newblock {\sl \bibinfo{journal}{Mathematics of Operations Research}}
  \bibinfo{volume}{4}(\bibinfo{number}{2}), pp. \bibinfo{pages}{196--208}.

\bibitemdeclare{article}{Gomoluch2004}
\bibitem{Gomoluch2004}
\bibinfo{author}{J.~Gomoluch} \& \bibinfo{author}{M.~Schroeder}
  (\bibinfo{year}{2004}): \emph{\bibinfo{title}{Performance evaluation of
  market-based resource allocation for Grid computing}}.
\newblock {\sl \bibinfo{journal}{Concurrency and Computation: Practice and
  Experience}} \bibinfo{volume}{16}(\bibinfo{number}{5}), pp.
  \bibinfo{pages}{469--475}, \doi{10.1002/cpe.826}.

\bibitemdeclare{inproceedings}{polys}
\bibitem{polys}
\bibinfo{author}{N.~Halbwachs} (\bibinfo{year}{1993}):
  \emph{\bibinfo{title}{{Delay Analysis in Synchronous Programs}}}.
\newblock In: {\sl \bibinfo{booktitle}{Fifth Conference on Computer-Aided
  Verification}}, \bibinfo{publisher}{LNCS 697, Springer Verlag},
  \bibinfo{address}{Elounda (Greece)}.

\bibitemdeclare{inproceedings}{IshakianBestavrosKfoury:rtsca10}
\bibitem{IshakianBestavrosKfoury:rtsca10}
\bibinfo{author}{V.~Ishakian}, \bibinfo{author}{A.~Bestavros} \&
  \bibinfo{author}{A.~Kfoury} (\bibinfo{year}{2010}): \emph{\bibinfo{title}{{A
  Type-Theoretic Framework for Efficient and Safe Colocation of Periodic
  Real-time Systems}}}.
\newblock In: {\sl \bibinfo{booktitle}{Int'l Conf on Embedded and Real-Time
  Computing Systems and Applications (RTSCA'10)}}, \bibinfo{address}{Macau,
  China}.
\newblock \urlprefix\url{http://www.cs.bu.edu/fac/best/res/papers/rtsca10.pdf}.

\bibitemdeclare{inproceedings}{IshakianSwehaLondonoBestavros:nca10}
\bibitem{IshakianSwehaLondonoBestavros:nca10}
\bibinfo{author}{V.~Ishakian}, \bibinfo{author}{R.~Sweha},
  \bibinfo{author}{J.~Londono} \& \bibinfo{author}{A.~Bestavros}
  (\bibinfo{year}{2010}): \emph{\bibinfo{title}{{Colocation as a Service:
  Strategic and Operational Services for Cloud Colocation}}}.
\newblock In: {\sl \bibinfo{booktitle}{Int'l Symp on Network Computing and
  Applications (NCA'10)}}, \bibinfo{address}{Cambridge, MA}.
\newblock \urlprefix\url{http://www.cs.bu.edu/fac/best/res/papers/nca10.pdf}.

\bibitemdeclare{inproceedings}{jay-kesner:esop06}
\bibitem{jay-kesner:esop06}
\bibinfo{author}{C.~Barry Jay} \& \bibinfo{author}{D.~Kesner}
  (\bibinfo{year}{2006}): \emph{\bibinfo{title}{{Pure Pattern Calculus}}}.
\newblock In: {\sl \bibinfo{booktitle}{European Symposium on Programming}}, pp.
  \bibinfo{pages}{100--114}.

\bibitemdeclare{techreport}{kfouryDSL:2011}
\bibitem{kfouryDSL:2011}
\bibinfo{author}{A.~Kfoury} (\bibinfo{year}{{2011}}): \emph{\bibinfo{title}{{A
  Domain-Specific Language for Incremental and Modular Design of Large-Scale
  Verifiably-Safe Flow Networks (Part 1)}}}.
\newblock \bibinfo{type}{Technical Report} \bibinfo{number}{BUCS-TR-2011-011},
  \bibinfo{institution}{CS Dept, Boston Univ}.

\bibitemdeclare{article}{255724}
\bibitem{255724}
\bibinfo{author}{E.~Knightly} \& \bibinfo{author}{H.~Zhang}
  (\bibinfo{year}{1997}): \emph{\bibinfo{title}{{D-BIND}: an accurate traffic
  model for providing {QoS} guarantees to {VBR} traffic}}.
\newblock {\sl \bibinfo{journal}{IEEE/ACM Transactions on Networking}}
  \bibinfo{volume}{5}, pp. \bibinfo{pages}{219--231}, \doi{10.1109/90.588085}.

\bibitemdeclare{article}{Liu1973Scheduling}
\bibitem{Liu1973Scheduling}
\bibinfo{author}{C.~L. Liu} \& \bibinfo{author}{James~W. Layland}
  (\bibinfo{year}{1973}): \emph{\bibinfo{title}{Scheduling Algorithms for
  Multiprogramming in a Hard-Real-Time Environment}}.
\newblock {\sl \bibinfo{journal}{Journal of the ACM}}
  \bibinfo{volume}{20}(\bibinfo{number}{1}), pp. \bibinfo{pages}{46--61},
  \doi{10.1145/321738.321743}.

\bibitemdeclare{article}{rashidi-tsang-09}
\bibitem{rashidi-tsang-09}
\bibinfo{author}{H.~Rashidi} \& \bibinfo{author}{E.P.K. Tsang}
  (\bibinfo{year}{2009}): \emph{\bibinfo{title}{{An Efficient Extension of
  Network Simplex Algorithm}}}.
\newblock {\sl \bibinfo{journal}{Journal of Industrial Engineering}}
  \bibinfo{volume}{2}, pp. \bibinfo{pages}{1--9}.

\bibitemdeclare{inproceedings}{Regehr2001HLS}
\bibitem{Regehr2001HLS}
\bibinfo{author}{J.~Regehr} \& \bibinfo{author}{J.A. Stankovic}
  (\bibinfo{year}{2001}): \emph{\bibinfo{title}{HLS: A Framework for Composing
  Soft Real-Time Schedulers}}.
\newblock In: {\sl \bibinfo{booktitle}{22nd IEEE Real-Time Systems Symposium
  (RTSS '01)}}, \bibinfo{publisher}{IEEE Comp Soc},
  \bibinfo{address}{Washington, DC, USA}, p.~\bibinfo{pages}{3}.

\bibitemdeclare{inproceedings}{Shin2003Periodic}
\bibitem{Shin2003Periodic}
\bibinfo{author}{I.~Shin} \& \bibinfo{author}{I.~Lee} (\bibinfo{year}{2003}):
  \emph{\bibinfo{title}{Periodic Resource Model for Compositional Real-Time
  Guarantees}}.
\newblock In: {\sl \bibinfo{booktitle}{24th IEEE International Real-Time
  Systems Symposium (RTSS '03)}}, \bibinfo{publisher}{IEEE Comp Soc},
  \bibinfo{address}{Washington, DC, USA}, p.~\bibinfo{pages}{2}.

\bibitemdeclare{inproceedings}{SouleBestKfouryLapets:eoolt11}
\bibitem{SouleBestKfouryLapets:eoolt11}
\bibinfo{author}{N.~Soule}, \bibinfo{author}{A.~Bestavros},
  \bibinfo{author}{A.~Kfoury} \& \bibinfo{author}{A.~Lapets}
  (\bibinfo{year}{2011}): \emph{\bibinfo{title}{{Safe Compositional
  Equation-based Modeling of Constrained Flow Networks}}}.
\newblock In: {\sl \bibinfo{booktitle}{Proc. of 4th Int'l Workshop on
  Equation-Based Object-Oriented Modeling Languages and Tools}},
  \bibinfo{address}{Z\"urich}.

\bibitemdeclare{inproceedings}{Thiran2001}
\bibitem{Thiran2001}
\bibinfo{author}{P.~Thiran}, \bibinfo{author}{J.~Boudec} \&
  \bibinfo{author}{F.~Worm} (\bibinfo{year}{2001}):
  \emph{\bibinfo{title}{Network calculus applied to optimal multimedia
  smoothing}}.
\newblock In: {\sl \bibinfo{booktitle}{Proceedings of IEEE INFOCOM}}.

\bibitemdeclare{inproceedings}{662332}
\bibitem{662332}
\bibinfo{author}{R.~Wolski}, \bibinfo{author}{J.S. Plank},
  \bibinfo{author}{J.~Brevik} \& \bibinfo{author}{T.~Bryan}
  (\bibinfo{year}{2001}): \emph{\bibinfo{title}{G-commerce: Market Formulations
  Controlling Resource Allocation on the Computational Grid}}.
\newblock In: {\sl \bibinfo{booktitle}{15th Int'l Parallel \& Distributed
  Processing Symposium (IPDPS '01)}}, \bibinfo{publisher}{IEEE Comp Soc},
  \bibinfo{address}{Washington, DC, USA}, p.~\bibinfo{pages}{46}.

\end{thebibliography}

\end{document}